
\documentclass[twoside,leqno]{article}

\usepackage[letterpaper]{geometry}

\usepackage{hyperref}
\usepackage{amsmath}
\usepackage{amssymb}
\usepackage[capitalize, nameinlink]{cleveref}

\newcommand{\qed}{\qquad\vbox{\hrule height0.6pt\hbox{%
   \vrule height1.3ex width0.6pt\hskip0.8ex
   \vrule width0.6pt}\hrule height0.6pt}\outerparskip 0pt}
\usepackage{ltexpprt}

\usepackage[numbers,sort&compress]{natbib}
\usepackage{graphicx}
\usepackage{thmtools,thm-restate}
\usepackage{nicefrac}
\usepackage{xspace}
\usepackage{algpseudocode}
\usepackage{algorithm}
\usepackage{caption}
\usepackage{mathtools}
\usepackage{outlines}

%for a do-while loop
\algdef{SE}[DOWHILE]{Do}{doWhile}{\algorithmicdo}[1]{\algorithmicwhile\ #1}

\newcommand{\cB}{\mathcal{B}}
\newcommand{\LOCAL}{\ensuremath{\mathsf{LOCAL}}\xspace}
\newcommand{\poly}{\operatorname{\text{{\rm poly}}}}
\newcommand{\eps}{\varepsilon}
\newcommand{\lovasz}{Lov\'{a}sz\xspace}
\newcommand{\inter}{\operatorname{inter}}
\newcommand{\intra}{\operatorname{intra}}
\newcommand{\tail}{\operatorname{tail}}
\newcommand{\head}{\operatorname{head}}

\begin{document}

\newcommand*\samethanks[1][\value{footnote}]{\footnotemark[#1]}

\title{\Large On the Locality of Hall's Theorem}%\relatedversion
\author{
Sebastian Brandt
\and Yannic Maus \thanks{
This research was funded in whole or in part by the Austrian Science Fund (FWF) \url{https://doi.org/10.55776/P36280}. For open access purposes, the author has applied a CC BY public copyright license to any author-accepted manuscript version arising from this submission.
}
\and Ananth Narayanan
\and Florian Schager \samethanks
\and Jara Uitto}

\date{}

\maketitle

\fancyfoot[R]{\scriptsize{Copyright \textcopyright\ 2025 by SIAM\\
Unauthorized reproduction of this article is prohibited}}

\begin{abstract} \small\baselineskip=9pt

The last five years of research on distributed graph algorithms have seen huge leaps of progress, both regarding algorithmic improvements and impossibility results: new strong lower bounds have emerged for many central problems and exponential improvements over the state of the art have been achieved for the runtimes of many algorithms.
Nevertheless, there are still large gaps between the best known upper and lower bounds for many important problems.

The current lower bound techniques for deterministic algorithms are often tailored to obtaining a logarithmic bound and essentially cannot be used to prove lower bounds beyond $\Omega(\log n)$.
In contrast, the best deterministic upper bounds, usually obtained via network decomposition or rounding approaches, are often polylogarithmic, raising the fundamental question of how to resolve the gap between logarithmic lower and polylogarithmic upper bounds and finally obtain tight bounds.

We develop a novel algorithm design technique aimed at closing this gap.
It ensures a logarithmic runtime by carefully combining local solutions into a globally feasible solution.
In essence, each node finds a carefully chosen local solution in $O(\log n)$ rounds and we guarantee that this solution is consistent with the other nodes' solutions without coordination. The local solutions are based on a distributed version of Hall's theorem that may be of independent interest and motivates the title of this work. 

We showcase our framework by improving on the state of the art for the following fundamental problems: edge coloring, bipartite saturating matchings and hypergraph sinkless orientation (which is a generalization of the well-studied sinkless orientation problem). 
For each of the problems we improve the runtime for general graphs and provide asymptotically optimal algorithms for bounded degree graphs.
In particular, we obtain an asymptotically optimal $O(\log n)$-round algorithm for $(3\Delta/2)$-edge coloring in bounded degree graphs. The previously best bound for the problem was $O(\log^4 n)$ rounds, obtained by plugging in the state-of-the-art maximal independent set algorithm from [Ghaffari, Grunau, SODA'23] into the $3\Delta/2$-edge coloring algorithm from [Ghaffari, Kuhn, Maus, Uitto, STOC'18]. 
\end{abstract}

\section{Introduction}
In recent years, the area of distributed graph algorithms has undergone an incredible development with faster and faster algorithms for classic local graph problems, general derandomization methods, and breakthrough results for proving lower bounds.
Nevertheless, apart from highly artificial problems and problems that trivially admit a constant-time algorithm or require linear time, there is almost no local graph problem for which matching upper and lower bounds on the distributed complexity are known. 
For example, the general derandomization method for local graph problems \cite{BE11,GKM17,GHK18,RG20} yields polylogarithmic-time deterministic distributed algorithms, while the best known lower bounds for these problems are at most logarithmic.
Recently, highly optimized algorithms have been developed for problems like the maximal independent set problem or the intensively studied $(\Delta+1)$-coloring problem for graphs with maximum degree $\Delta$ \cite{FGKR23,Ghaffari-rounding-2021,GG23}.
Yet, they seem to be unable to close the polynomial gap to the (at best) logarithmic lower bounds.
(For $(\Delta+1)$-coloring the gap is much larger, as the best known lower bound of $\Omega(\log^*n)$ still comes from Linial's seminal work~\cite{linial92}.) 
To close the gap from the lower bound side seems even harder: as essentially all lower bound techniques \cite{linial92,KMW16,Brandt19,Brandt22marks} only work on high-girth graphs, it is currently out of reach to prove genuine superlogarithmic lower bounds (except for highly artificial or global problems).

In conclusion, there is a need for new algorithmic techniques for closing the gap between upper and lower bounds.
In this work, we address this need by providing a \emph{new algorithm design technique} that gives rise to deterministic logarithmic-time algorithms for local problems.
This leads to improvements for a number of problems; for instance, we obtain improved algorithms for edge coloring with few colors (that are tight on constant-degree graphs). 

Before explaining our new technique in more detail, we introduce the model of computation.

\paragraph{Model of computation.}

The model of computation we study is the classic \LOCAL model of distributed computation~\cite{linial92} (and its generalization to hypergraphs).
In the \LOCAL model, a communication network is abstracted as an $n$-node (simple) graph, where nodes are computational entities equipped with a unique ID and edges serve as communication channels. Communication happens in synchronous rounds, in each of which a node can perform some arbitrary local computations, and send one message to each of its neighbors in the graph. 
The time complexity of an algorithm is the number of rounds until each node has output a solution, e.g., the orientation of each of its incident edges. A hypergraph $G=(V,E)$ can be modeled as a bipartite graph $\cB_G$ where the nodes of $G$ form one side of the bipartition of the nodes of $\cB_G$ (which we will call the \emph{vertex side}) and the hyperedges of $G$ form the other side of the bipartition (which we will call the \emph{hyperedge side}).
There is an edge between a node of $\cB_G$ that corresponds to a node $v$ of $G$ and a node of $\cB_G$ that corresponds to a hyperedge $e$ of $G$ if and only if $v \in e$.
The classic setting for ``\LOCAL on a hypergraph $G$'' (that we also use) is the standard \LOCAL model on $\cB_G$.  

\paragraph{A new technique.}
In the following, we provide a high-level overview of our new technique.
For a more extensive overview, see \Cref{sec:dihath}.
Informally, the basic idea of our technique is as follows.
First, each node of the network computes a local solution for a subgraph in which it is contained. Then, the different solutions produced by all nodes of the network are combined to a global solution for the whole graph. In contrast to most other algorithms, there is essentially no additional coordination when combining solutions, except that each node should know its own output in all local solutions in which it appears.
The way in which we find the subgraphs on which the local solutions are to be computed is based on carefully carving out subgraphs whose removal does not place any burden on the solvability of the problem on the remaining graph.
In the following, we illustrate this approach in the context of matching problems, in which the implementation of our technique is based on a distributed version of Hall's theorem that we prove.
We note that in our work, we will make the outlined approach work directly only for matching and related orientation problems, but that the results obtained thereby will then enable us to prove bounds for further problems, such as edge coloring or splitting problems.

\paragraph{Distributed Hall's Theorem.}
The following classic result by Hall provides a characterization for the existence of a matching in a bipartite graph that saturates all nodes of one side.
\begin{figure}
    \centering
    \includegraphics[height=4cm]{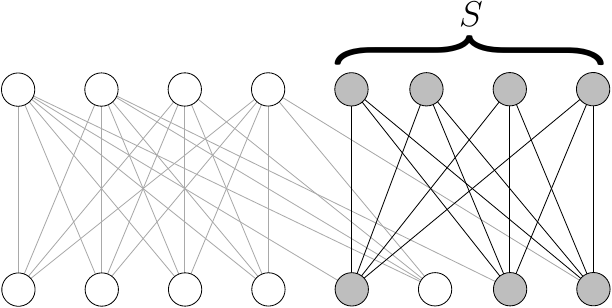}
    \caption{The bipartite representation of a hypergraph with minimum degree $\delta = 3$ and rank $r = 4$ for each edge. The set of nodes $S$ and their neighbors are illustrated as grey nodes. The set $S$ violates ``Hall's condition'' and hence, cannot be perfectly matched.}
    \label{fig: non-solvable}
\end{figure}

\smallskip

\textsc{Hall's Theorem~\cite{Halls}.}
\emph{A bipartite graph with node sets $V$ and $U$ has a $U$-saturating matching if and only if $|N(S)|\geq |S|$ for all $S\subseteq U$.}

\smallskip

In the context of saturating matchings, the subgraphs that ``can be removed without creating problems'' mentioned in the above outline of our approach can be specified as follows: they are subgraphs that allow to find a saturating matching inside the subgraph such that no ``remaining'' node (on the side to be saturated) ``loses'' any potential matching partner.
We call such a subgraph a \emph{Hall graph}.
Now we can rephrase our approach as being based on a local version of Hall's theorem: as our main technical contribution, we show that every node (of a multihypergraph\footnote{Formally, we will consider bipartite graphs in the hypergraph formalism (explained below). We note that many of our results work for multihypergraphs which are hypergraphs in which the same edge may appear multiple times.}) is contained in a small-diameter Hall graph. 

\begin{restatable}[Distributed Hall's Theorem]{theorem}{thmDistHall}
\label{thm:distributedHall}
 Each node in any $n$-node multihypergraph with minimum degree $\delta$ and maximum rank $r<\delta$ is contained in a Hall graph with diameter $\log_{\nicefrac{\delta}{r}}n$.
\end{restatable}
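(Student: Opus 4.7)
The plan is to prove the theorem by a BFS ball-growing argument in the bipartite representation $\cB_G$. For each integer $i \geq 0$, let $V_i \subseteq V$ be the set of vertex-side nodes at $\cB_G$-distance at most $2i$ from $v$, and let $E_i := \{e \in E : N_G(e) \subseteq V_i\}$ be the set of hyperedges whose entire $G$-neighborhood lies in $V_i$. Let $H_i$ be the subgraph of $\cB_G$ on $V_i \cup E_i$ with all $\cB_G$-edges between the two sides. By construction, no hyperedge of $H_i$ has a $G$-neighbor outside $V_i$, so $H_i$ automatically satisfies the closure property required of a Hall graph as soon as it also admits a $V_i$-saturating matching.

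The task thus reduces to showing that $H_i$ has a $V_i$-saturating matching for some index $i$ on the order of $\log_{\delta/r} n$. The key global tool is the following Hall slackness: since every vertex has $\cB_G$-degree $\geq \delta$ and every hyperedge has $\cB_G$-degree $\leq r$, double-counting the edges incident to any $S \subseteq V$ gives $|N_G(S)| \geq (\delta/r)\,|S|$, so $G$ satisfies Hall's condition with multiplicative slack $\delta/r$ throughout.

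I argue by contradiction. Assume $H_i$ fails to be a Hall graph for all $i \leq k := \log_{\delta/r} n$. Hall's theorem then produces, at each level $i$, a violator $S_i \subseteq V_i$ with $|N_{H_i}(S_i)| < |S_i|$. Combining this local failure with the global slack, at least $|S_i|(\delta-r)/r$ hyperedges lie in $N_G(S_i)\setminus E_i$; each such hyperedge, being at $\cB_G$-distance $\leq 2i+1$ from $v$, has at least one $G$-neighbor in $V_{i+1}\setminus V_i$. The plan is to convert this abundance of ``escaping'' hyperedges into a multiplicative growth estimate $|V_{i+1}| \geq (\delta/r)\,|V_i|$; iterating this bound contradicts $|V_k| \leq n$ and therefore forces some $H_i$ with $i \leq k$ to be a Hall graph, of the claimed diameter.

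The main obstacle is establishing exactly this multiplicative ball-growth step. The naive count is delicate because $G$ admits no upper bound on the maximum degree, so many escaping hyperedges can funnel into the same fresh vertex of $V_{i+1}\setminus V_i$ and thereby contribute little to $|V_{i+1}|$. A useful structural observation is that ``deep'' vertices of any violator $S_i$—those at $\cB_G$-distance strictly less than $2i$—already enjoy the full Hall slack inside $H_i$ (their $H_i$-neighborhoods coincide with their $G$-neighborhoods), so $S_i$ must consist mostly of boundary vertices at distance exactly $2i$; this should force the escaping hyperedges to spread across many genuinely fresh vertices rather than collapsing onto a few. If this refinement still does not suffice, one can switch to a more robust potential such as $|E_i|$ or $|V_i|+|E_i|$, for which a multiplicative growth by the factor $\delta/r$ seems easier to establish directly from the slack inequality $|N_G(S)| \geq (\delta/r)|S|$. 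Once the growth step is in hand, the diameter bound falls out by a direct counting argument.
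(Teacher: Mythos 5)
Your approach diverges substantially from the paper's, and it has a genuine gap that I don't think can be closed as stated. You try to show that the closed ball $H_i$ (vertices within hop-distance $i$ of $v$, together with all hyperedges fully contained in that set) is itself a Hall graph for some $i \lesssim \log_{\delta/r} n$. The paper never claims this and, for good reason, avoids having to: it instead (Lemma~\ref{lemma:manyEdges}) locates a small-radius subgraph $G'$ with $|E(G')|\ge|V(G')|$, then (Lemma~\ref{lem:nonEmptyHallExists}) extracts \emph{some} nonempty Hall subgraph $H_0\subseteq G'$ by iteratively deleting Hall violators, and finally iterates this carving (removing $H_0$ from $G$ by \emph{projecting} the remaining edges so that $\delta>r$ is preserved) until the carved-out Hall graph contains $v$; the answer is the union $H_0\cup\cdots\cup H_k$, lifted back to original edges. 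Crucially, the returned Hall graph is generally a proper, structured sub-hypergraph of the ball, not the ball.

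The claim you would need — that $H_i$ becomes a Hall graph — is very likely false once $r>2$. In a hypertree rooted at $v$, a parent hyperedge $e$ at level $i$ contains one vertex at distance $i-1$ and up to $r-1$ vertices at distance $i$; if those distance-$i$ vertices have all their other hyperedges going outward, then $S=\{$those $r-1$ boundary vertices$\}$ satisfies $|N_{H_i}(S)|=1<r-1=|S|$, a Hall violator. This situation recurs at every level, so $H_i$ fails to be a Hall graph for a long range of $i$. Even once the tree structure is exhausted and $|E_i|\ge|V_i|$, having at least as many edges as vertices does \emph{not} make the graph a Hall graph — this is exactly why the paper needs Lemma~\ref{lem:nonEmptyHallExists} to extract a Hall \emph{subgraph} rather than use the whole thing. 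Your proposal has no analogue of this extraction step, nor of the projection trick that keeps $\delta>r$ after removal, which is where the real technical work lies.

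Independently of the above, the growth step you flag as the main obstacle is indeed broken and your suggested repairs do not close it. First, as you note, escaping hyperedges can all land on one fresh vertex because the maximum vertex degree is unbounded, so counting escaping edges does not bound $|V_{i+1}\setminus V_i|$. Second, and just as fatally, the escaping edges come only from the violator $S_i$, which may be far smaller than $V_i$; even a perfect ``one fresh vertex per escaping edge'' estimate would only give $|V_{i+1}\setminus V_i|\gtrsim |S_i|$, not $|V_{i+1}|\ge(\delta/r)|V_i|$. Switching the potential to $|E_i|$ or $|V_i|+|E_i|$ does not help with either problem: there is no control on how the newly revealed hyperedges distribute across new vertices, and the violator's size still does not control the whole ball. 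The paper's Lemma~\ref{lemma:manyEdges} avoids all of this by comparing $|V(B_{x'}^G(v))|$ to $|V(B_{x'+1}^G(v))|$ at a carefully chosen radius and using a clean degree/rank double count on the \emph{closure} (edges with at least one endpoint inside the ball), rather than trying to convert a Hall violator into vertex growth.
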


Since, in the \LOCAL model (and its generalization to hypergraphs), each node can collect its entire $T$-hop neighborhood in $T$ rounds, the upper bound on the diameter of the Hall graphs in \Cref{thm:distributedHall} enables the nodes to compute the aforementioned local solutions in $O(\log_{\nicefrac{\delta}{r}}n)$ rounds, ultimately giving rise to algorithms with logarithmic runtime. In several cases this closes the gap to the respective lower bound.

\medskip

\paragraph{Hypergraph Sinkless Orientation.}
While the problem of finding a saturating matching is a fundamental algorithmic graph problem in its own right, there is a second reason for the importance of this problem in a distributed context: as we discuss in \Cref{sec:hsoexplained}, it is equivalent to the hypergraph generalization of a graph problem that has proved to be crucial for the development of various fundamental lines of research of recent years---sinkless orientation~\cite{LLL_lowerbound}. The objective of the \emph{(hypergraph) sinkless orientation problem ((H)SO)} is to orient the (hyper)edges of a (hyper)graph such that every vertex has at least one outgoing (hyper)edge, where a hyperedge is outgoing for exactly one of its incident nodes and incoming for all others.

For an overview of the vital role that sinkless orientation has played in the development of distributed graph algorithms in the last decade, we refer the reader to \Cref{sec:furtherwork}.
We expect an understanding of HSO to be similarly essential for understanding \emph{distributed computation on hypergraphs}, which has been a topic of substantial interest~\cite{Harris20,GHK18,AHN23,BBKO23,KZ18,GKMU18,Fischer17hyper}.
Moreover, similarly to how sinkless orientation is a highly useful subroutine, e.g., for splitting problems\footnote{In fact, we also use the sinkless orientation problem as a subroutine in one part of our edge coloring algorithm. } \cite{GS17,Splitting20}, we expect HSO to be an essential ingredient for solving other problems (both on graphs and hypergraphs), emphasizing the importance of finding \emph{optimal} algorithms for HSO.
We remark that we also provide concrete evidence for the usefulness of HSO as a subroutine by showing that we can use HSO to obtain improved algorithms for edge coloring and further problems.

\subsection{Our Contributions: Main results}

\subsubsection{Edge Coloring}
	
As the main application of our technique we prove the following theorem. 

\begin{restatable}{theorem}{thmedgeColoring} \label{thm:edgeColoring}
There is a deterministic $O(\Delta^2 \cdot \log n)$-round \LOCAL algorithm that computes a $3\Delta/2$-edge coloring on any $n$-node graph with maximum degree $\Delta$.
\end{restatable}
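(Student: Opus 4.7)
The plan is to reduce $3\Delta/2$-edge coloring to a sequence of saturating matching instances, each solvable in $O(\log n)$ rounds via \Cref{thm:distributedHall}, in the spirit of Shannon's classical decomposition argument. At the top level I would partition $E(G)$ into $\lceil \Delta/2 \rceil$ edge-disjoint layers $H_1, \ldots, H_{\lceil \Delta/2 \rceil}$ of maximum degree at most $2$ (with the final layer being a matching when $\Delta$ is odd), and assign each layer its own palette of three fresh colors; this uses at most $\lfloor 3\Delta/2 \rfloor$ colors in total. Since a graph of maximum degree $2$ is a disjoint union of paths and cycles, it can be properly $3$-edge-colored deterministically in $O(\log^{*} n)$ rounds, so the real bottleneck is producing the decomposition itself.

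To construct layer $H_i$, let $G_i := G \setminus (H_1 \cup \cdots \cup H_{i-1})$ and let $\Delta_i$ be its maximum degree. I would drop the maximum degree by two via two successive saturating matchings. In each substep, invoke \Cref{thm:distributedHall} on an auxiliary bipartite multihypergraph whose vertex side contains one demand-copy of each top-degree node of $G_i$ and whose hyperedge side contains the remaining edges of $G_i$, where each edge is modeled as a hyperedge incident to the demand-copies of its (top-degree) endpoints. Every demand-copy has degree at least $\Delta_i$ while every hyperedge has rank at most $2$, so a straightforward double-counting verifies Hall's condition, and \Cref{thm:distributedHall} returns, in $O(\log_{\Delta_i/2} n) = O(\log n)$ rounds, a matching that saturates every demand-copy. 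Removing the matched edges reduces the maximum degree by $1$, and repeating the substep yields $H_i$.

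Iterating this over the $\lceil \Delta/2 \rceil$ layers contributes $O(\Delta \log n)$ rounds in a naive count, and the $3$-coloring of each layer adds at most $O(\log^{*} n)$ per layer. The extra factor of $\Delta$ in the claimed $O(\Delta^2 \log n)$ bound would be attributable to careful handling of boundary cases where the ratio $\delta/r$ in \Cref{thm:distributedHall} becomes only slightly larger than one: in such iterations the Hall-subgraph diameter can blow up by an additional $\Delta$ factor, or one must fall back on the hypergraph sinkless orientation primitive that the paper also develops, using up to $O(\Delta)$ inner substeps per layer. The main obstacle I anticipate is twofold: first, verifying that the auxiliary hypergraph in each iteration satisfies Hall's condition uniformly, i.e. that peeling off previous layers $H_j$ never creates a Hall-violating set among the top-degree demand-copies; and second, showing that the independently computed local Hall solutions, defined on the small-diameter Hall subgraphs surrounding different nodes, combine without any further coordination into a single globally consistent layer $H_i$ of maximum degree exactly $2$ at each top-degree node.
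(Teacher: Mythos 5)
Your reduction has a genuine gap in the step that produces each layer $H_i$. When you invoke \Cref{thm:distributedHall} on the auxiliary bipartite hypergraph whose vertex side consists of demand-copies of the top-degree nodes and whose hyperedge side consists of the edges of $G_i$, the resulting saturating matching guarantees that every top-degree node is assigned one incident edge --- but it puts \emph{no constraint whatsoever} on how many of the selected edges touch a node that is not of top degree. Two distinct top-degree vertices $u_1, u_2$ can perfectly well be matched to $e_1 = \{u_1, w\}$ and $e_2 = \{u_2, w\}$ for the same non-top-degree $w$ (and indeed $w$ can collect up to $\Delta_i - 1$ selected edges in a single substep). So while it is true that the maximum degree of $G_i$ drops by one after removal, the set $H_i$ of removed edges is emphatically \emph{not} a subgraph of maximum degree $2$ --- it can have arbitrarily large degree at non-top-degree vertices --- so your plan to $3$-edge-color $H_i$ as a disjoint union of paths and cycles breaks down. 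The paper avoids exactly this trap: its $\mathsf{ReduceDegree}$ first computes a maximal matching $M_\Delta$ on the \emph{induced subgraph of top-degree nodes}, so that the unmatched top-degree nodes form an independent set; the subsequent saturating matching in the bipartite graph between this independent set and the remaining nodes is therefore a genuine matching in $G_i$, and no vertex (top-degree or not) gains more than one edge per substep. As a consequence the extracted structure has maximum degree $3$ (not $2$), which is why the paper works with $(3)$-graphs (degree-$3$ nodes form an independent set) and must develop a separate $O(\log n)$-round algorithm for $3$-edge-coloring them --- a nontrivial component that your sketch replaces by the (invalid) claim that the layer is a union of paths and cycles.

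Your accounting of the runtime also does not match what your own construction would produce. In your scheme the ratio fed to \Cref{thm:distributedHall} would be $\delta/r \geq \Delta_i / 2 \geq 3/2$, giving $O(\log n)$ per call and hence only $O(\Delta \log n)$ total, so there is no ``boundary case where $\delta/r$ is slightly above one'' to explain a $\Delta^2$ factor. In the paper's scheme the bipartite instance has $\delta = \Delta$ and $r = \Delta - 1$, so the ratio $\Delta/(\Delta-1)$ genuinely is close to $1$, each maximum-matching call costs $O(\log_{\Delta/(\Delta-1)} n) = O(\Delta \log n)$, and $\Theta(\Delta)$ extractions produce the stated $O(\Delta^2 \log n)$.
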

The main strength of this theorem is its dependence on the number of nodes in the network. In fact, we obtain the following corollary for edge coloring constant-degree graphs. 
\begin{corollary}
	\label{cor:EdgeColoring}
There is an $O(\log n)$-round \LOCAL algorithm that computes a $3\Delta/2$-edge coloring on any $n$-node graph with constant maximum degree $\Delta$.
\end{corollary}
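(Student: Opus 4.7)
The plan is to derive the corollary as an immediate consequence of \Cref{thm:edgeColoring}. That theorem supplies a deterministic \LOCAL algorithm computing a $(3\Delta/2)$-edge coloring in $O(\Delta^2 \cdot \log n)$ rounds on any $n$-node graph of maximum degree $\Delta$. For the corollary, we are told that $\Delta = O(1)$ is constant. Substituting this bound, the prefactor $\Delta^2$ becomes an absolute constant, so the runtime collapses to $O(\log n)$ as claimed.

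Concretely, I would simply invoke \Cref{thm:edgeColoring} as a black box on the input graph $G$: every node runs the algorithm guaranteed by that theorem, each incident edge is assigned one of at most $3\Delta/2$ colors, and the correctness guarantee (that this is a proper edge coloring) carries over verbatim. The runtime bound $O(\Delta^2 \cdot \log n) = O(1) \cdot O(\log n) = O(\log n)$ follows because in asymptotic notation we may absorb any constant (here, $\Delta^2$) into the implicit constant of $O(\cdot)$.

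There is no real obstacle here; the only ``step'' is the constant-degree substitution. The content of the corollary is not a new argument but rather a qualitative statement: in the bounded-degree regime, the $\Delta^2$ overhead from \Cref{thm:edgeColoring} is irrelevant and the algorithm achieves the asymptotically optimal $\Theta(\log n)$ complexity (matching the known $\Omega(\log n)$ lower bound that already applies to much simpler symmetry-breaking problems on bounded-degree graphs). Thus the proof is a one-line deduction from \Cref{thm:edgeColoring}, and the only thing worth emphasizing is that the $\Delta$-dependence of the theorem is polynomial and multiplicative with $\log n$, which is precisely what is needed for the corollary to go through.
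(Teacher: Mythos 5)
Your proposal is correct and coincides with the paper's (implicit) argument: \Cref{cor:EdgeColoring} is obtained simply by specializing \Cref{thm:edgeColoring} to constant $\Delta$, absorbing the $\Delta^2$ factor into the hidden constant of the $O(\cdot)$. There is nothing more to it.
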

The runtime in \Cref{cor:EdgeColoring} matches the lower bound given by \cite{CHLPU18} that holds for computing any edge coloring with fewer than $2\Delta-1$ colors. In the centralized setting a coloring with $2\Delta-1$ colors can be computed via a simple greedy algorithm, and in the distributed setting such colorings can be computed either in $O(\poly \log \Delta +\log^*n)$ \cite{Balliu22-edge-coloring},  in $O(\log^2\Delta\log n)$ rounds \cite{Ghaffari-rounding-2021}, or in $O(\log^2 n\poly\log\log n)$ rounds \cite{GG23}.  To the best of our knowledge \Cref{cor:EdgeColoring} presents the first classic graph coloring problem with a provably logarithmic complexity via asymptotically matching upper and lower bounds. In general, there is a very small list of problems with asymptotically matching upper and lower bounds in this runtime regime, with the sinkless orientation problem being the most prominent example \cite{LLL_lowerbound,ghaffari17}. There are many works studying deterministic algorithms for edge coloring with fewer than $2\Delta-1$ colors, e.g.,~\cite{CHLPU18, bernshteyn2024fastalgorithmsvizingstheorem, BERNSHTEYN2022319, HMN22,Davies23, GKMU18, Su-Vizing2019}. 
In $O(\log^5 n)$ rounds\footnote{The constant in the $O$-notation hides a $\Delta^{84}$ term, which could likely be improved but not to a linear dependency.}, we know how to deterministically obtain an edge coloring with $\Delta+1$ colors colors on constant-degree graphs~\cite{bernshteyn2024fastalgorithmsvizingstheorem, BERNSHTEYN2022319}. This matches the existential bound proven by Vizing in the 60s of the last century \cite{vizing1964}.
Relaxing the number of colors a bit to $\Delta + 2$, we know how obtain a $\poly \Delta \cdot \log^3 n$ randomized algorithm \cite{Su-Vizing2019}.
While these algorithms still run in $\poly(\Delta,\log n)$ rounds their dependency on the number of nodes in the network is quite far from the logarithmic lower bound, given by \cite{CHLPU18}. 
Even algorithms tailored for more than $\Delta + 1$ colors have not been able to get close to the lower bound. 
For example, the deterministic $3\Delta/2$-edge coloring algorithm in~\cite{GKMU18} has runtime $\widetilde{O}(\Delta^3\log^4n)$. The original paper states a much slower runtime, but it improves by using the recent state-of-the-art maximal independent set algorithm \cite{GG23} for solving the hypergraph matching problems that appear as subroutines in their algorithm, see \Cref{sec:nutshell} for details.  Again, using more than $\Delta+2$ colors, recent progress has provided randomized  edge coloring algorithms running in $\poly\log\log n$-round algorithms for $(1+o(1))\Delta$-edge coloring for graphs with a sufficiently large maximum degree \cite{HMN22,Davies23}. 

By employing a standard divide-and-conquer strategy we can first decompose the graph into small degree graphs, each of which we color with its own set of colors using \Cref{thm:edgeColoring}. We obtain the following corollary.  
\begin{restatable}[$(3/2 + \varepsilon)\Delta$-edge coloring, deterministic]{corollary}{corEdgeColoring}
\label{cor:EdgeColoringFast} For any $\eps>1/\Delta$ there is a deterministic $O(\eps^{-2}\log^2 \Delta \cdot \poly \log \log \Delta \cdot \log n)$-round \LOCAL algorithm that computes a $(3/2  + \varepsilon)\Delta$-edge coloring on any $n$-node graph with maximum degree $\Delta$.
\end{restatable}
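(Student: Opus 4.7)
The plan is to reduce the problem to coloring many edge-disjoint subgraphs of small maximum degree with \emph{disjoint} color palettes, so that \Cref{thm:edgeColoring} can be applied to all of them in parallel. First I would iteratively apply an edge degree-splitting primitive that, given any graph of maximum degree $d$, partitions its edges into two edge-disjoint subgraphs, each of maximum degree at most $\lceil d/2\rceil + c$ for some absolute constant $c$. Applying this recursively $t$ times yields $2^t$ edge-disjoint subgraphs $G_1,\dots,G_{2^t}$ whose union is $G$ and each of which has maximum degree $\Delta':=\Delta/2^t+O(1)$; crucially, the additive slack does not accumulate linearly in $t$, because the extra $c$ is halved at every subsequent split and therefore telescopes to $O(1)$ in total.

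I would then choose $t$ so that $\Delta'=\Theta(\varepsilon^{-1}\log\Delta)$, equivalently $2^t=\Theta(\varepsilon\Delta/\log\Delta)$, and apply \Cref{thm:edgeColoring} in parallel to each $G_i$ using a fresh palette of $\lceil 3\Delta'/2\rceil$ colors. Since the palettes are disjoint, the union of the outputs is a proper edge coloring of $G$, and the total number of colors used is
\[
2^t\cdot\lceil 3\Delta'/2\rceil \;=\;\tfrac{3\Delta}{2}+O(2^t)\;=\;\tfrac{3\Delta}{2}+O(\varepsilon\Delta/\log\Delta)\;\le\;(3/2+\varepsilon)\Delta.
\]
For the small-$\varepsilon$ boundary regime $1/\Delta<\varepsilon<c'\log\Delta/\Delta$ no splitting is necessary: a single direct application of \Cref{thm:edgeColoring} already meets both the color budget (since $3\Delta/2\le (3/2+\varepsilon)\Delta$) and the time budget. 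The per-subgraph coloring phase runs in $O(\Delta'^2\log n)=O(\varepsilon^{-2}\log^2\Delta\cdot\log n)$ rounds, and because the subgraph colorings run in parallel this also bounds the cost of the entire coloring phase.

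It then remains to bound the cost of the splitting phase: $t=O(\log(\varepsilon\Delta))=O(\log\Delta)$ iterations of a splitting primitive implementable via (hypergraph) sinkless-orientation-type subroutines (either from prior work or via the HSO machinery developed in this paper), each running in $\poly\log\log\Delta\cdot\log n$ rounds, contribute an additive term of $O(\log\Delta\cdot\poly\log\log\Delta\cdot\log n)$ which is absorbed into the stated runtime. The main obstacle is calibrating the splitting primitive so that its additive slack per level is small enough (an absolute constant, rather than a slowly growing function of $\Delta$) to keep the accumulated color overhead below $\varepsilon\Delta$; once such a splitting primitive is in hand, the color count and the round complexity both follow from the arithmetic above.
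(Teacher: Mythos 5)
Your high-level plan — recursively split the edge set into many pieces of small maximum degree and then run \Cref{thm:edgeColoring} in parallel with disjoint palettes — is exactly the paper's strategy. The gap is in the splitting primitive you presuppose. You want a splitter with discrepancy $\lceil d/2\rceil + c$ for an absolute constant $c$ (no multiplicative slack), and claim this is implementable in $\poly\log\log\Delta\cdot\log n$ rounds via sinkless-orientation-type subroutines. But the available primitive (\Cref{thm:splitting}, from [Splitting20]) guarantees discrepancy $\gamma d(v)+4$ at cost $O(\gamma^{-1}\log\gamma^{-1}(\log\log\gamma^{-1})^{1.71}\log n)$; to suppress the multiplicative term to $O(1)$ you would need $\gamma=O(1/\Delta)$, blowing the per-iteration runtime up to $\widetilde{\Omega}(\Delta\log n)$. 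The HSO machinery does not give a better splitter — degree splitting with constant additive discrepancy is precisely the hard regime, and the $\gamma d(v)$ term is inherent in what is known.

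The paper sidesteps this by accepting the multiplicative slack and calibrating it: with $\gamma=\eps/(20\log\Delta)$, running $h=\log(\eps\Delta/15)$ levels yields pieces of degree $\Delta_h\le\left(\frac{1+\gamma}{2}\right)^h\Delta+5$, and the factor $(1+\gamma)^h\le e^{\eps/20}$ stays within the $(3/2+\eps)\Delta$ color budget while the additive $+4$ per level telescopes to a constant (your telescoping observation, which is correct). The per-iteration cost is then $O(\eps^{-1}\log\Delta\cdot\log\log\Delta\cdot\log^{1.71}\log\log\Delta\cdot\log n)$, which over $h=O(\log\Delta)$ levels, plus the final $O(\Delta_h^2\log n)=O(\eps^{-2}\log n)$ coloring, fits the stated bound. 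So the structure of your argument is sound, but the constant-slack primitive you rely on must be replaced by the $(1+\gamma)$-multiplicative one, with $\gamma$ chosen to be $\Theta(\eps/\log\Delta)$, and a brief induction to verify that the multiplicative factor accumulated over all $h$ levels stays below $1+O(\eps)$; without that calibration the color count is not controlled.
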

The actual (more involved) runtime in our proof is slightly better than the one stated in this corollary. For constant $\eps$ it nearly matches the state of the art (with $n$-dependency limited to $\log n$) for the easier greedily solvable $(2\Delta-1)$-edge coloring algorithm \cite{Ghaffari-rounding-2021}. The difference is only a $\log\log \Delta\cdot \log^{1.71}\log\log \Delta$ factor. \Cref{cor:EdgeColoringFast} is significantly faster than all prior algorithms for coloring with $(3/2+\eps)\Delta$ colors. 

See \Cref{table:EdgeColoring} for a comparison of our edge coloring results with various prior algorithms, in each of which we have updated subroutines with state-of-the-art algorithms. 
 
\subsubsection{Hypergraph Sinkless Orientation (HSO)}
\label{sec:hsoexplained}

In this section, we discuss our results for computing hypergraph sinkless orientations, or equivalently node saturating matchings in bipartite graphs. 
Let us first explain this equivalence.  Recall the way in which a hypergraph $G$ can be modeled as a bipartite graph $\cB_G$, explained in the model description.
Selecting a single outgoing edge for each node in a solution for HSO on some hypergraph $G$ gives rise to a maximum matching in the bipartite representation $\cB_G$ of the hypergraph that saturates all nodes of $\cB_G$ corresponding to nodes in $G$.
Conversely, any such matching gives rise to a solution for HSO.
As such, the two problems are equivalent, and in particular, the two problems have the same asymptotic complexity.

\paragraph{When do solutions for HSO exist (Hall's Theorem)?} 
As for the sinkless orientation problem on graphs, the HSO problem requires some constraints on the input instances to avoid nonexistence of a solution.
Let $\delta$ denote the minimum number of hyperedges incident to any node in $G$, and $r=\max_{e\in E} |e|$ the maximum rank of a hyperedge.
In other words, $\delta$ is the minimum degree of the nodes on the vertex side of $\cB_G$ and $r$ is the maximum degree of the nodes on the hyperedge side of $\cB_G$.
Moreover, for a set $S$ of nodes in a graph $G = (V,E)$, let $N(S)$ denote the set of all nodes $u \in V$ such that there exists an edge $\{ u, v \} \in E$ with $v \in S$.

If $\delta\geq r$, then \emph{Hall's Theorem} implies that a matching saturating all nodes on the vertex side exists.
Moreover, already if we weaken this condition slightly to $\delta \geq r - 1$, there are graphs for which no such matching exists (see \Cref{fig: non-solvable}).
Hence, the problems of HSO and bipartite saturating matching necessarily require $\delta\geq r$.
Moreover, while existence is guaranteed if $\delta = r$, we show that the two problems actually require $\delta>r$ if we want to achieve sublinear complexity.
\begin{restatable}[HSO linear lower bound]{theorem}{thmHSOLowerBoundGlobal}
\label{thm:linearLowerBound}
For any fixed $\delta$, there is no sublinear deterministic algorithm to compute an HSO on all hypergraphs with minimum degree $\delta$ and maximum rank $r=\delta$. 
\end{restatable}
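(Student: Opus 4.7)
The plan is to exhibit, for each fixed $\delta\ge 2$, an infinite family of hypergraphs with minimum degree $\delta$ and maximum rank $r=\delta$ on which every deterministic algorithm for HSO requires $\Omega(n)$ rounds, via a reduction to a cycle base case.

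\textbf{Base case $\delta=r=2$.} When $\delta=r=2$, every hyperedge has exactly two endpoints and every vertex is in exactly two hyperedges, so HSO collapses to classical sinkless orientation on a $2$-regular multigraph. Taking the instance to be a single $n$-cycle, a short analysis of the cyclic sequence of edge orientations shows that the sink-freeness constraint forces this sequence to be constant (otherwise some ``$+-$'' adjacent pattern appears and produces a sink), so the only valid orientations are the two global cyclic ones (clockwise and counterclockwise). Being an LCL on a cycle with a finite number of globally symmetric solutions, it lies in the $\Theta(n)$ class by the path-and-cycle gap theorem for LCLs and in particular admits no $o(n)$-round deterministic algorithm; one can also verify this directly via a standard ID-relabeling indistinguishability argument.

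\textbf{Reduction to general $\delta\ge 3$.} For $\delta\ge 3$ I would embed the hard $n$-cycle into a hypergraph $H_n^{(\delta)}$ with minimum degree $\delta$ and maximum rank $\delta$ by attaching, at each cycle vertex and each cycle hyperedge, a constant-diameter padding gadget that raises the vertex-side degree to $\delta$ and the hyperedge-side rank to $\delta$. The gadget is designed so that, in every valid HSO of $H_n^{(\delta)}$, each gadget hyperedge incident to a cycle vertex is forced to be \emph{incoming} at that cycle vertex. This forces each cycle vertex's ``at least one outgoing'' obligation to be satisfied by a \emph{cycle} hyperedge, so the restriction of any HSO of $H_n^{(\delta)}$ to the cycle is a valid sinkless orientation of the $n$-cycle. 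Any $o(n)$-round algorithm for HSO on $H_n^{(\delta)}$ would therefore solve the base cycle in $o(n)$ rounds and contradict the base case.

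\textbf{Main obstacle.} The principal difficulty is the gadget design, which must simultaneously (i) make every newly introduced node have degree exactly $\delta$ and every new hyperedge have rank exactly $\delta$; (ii) force, in every HSO, the gadget hyperedge at the cycle vertex to be incoming; (iii) admit valid HSO completions regardless of which of the two cyclic orientations is chosen on the base cycle, so the reduction preserves feasibility; and (iv) be locally symmetric across all $\Theta(n)$ attachments, so that the algorithm cannot exploit any asymmetry at an attachment point as a ``reference direction'' that would help it break the cycle's two-fold symmetry. The cleanest implementation I can foresee is to build each gadget as a small tree (or fan) of hyperedges whose leaves trivially satisfy their outgoing obligations locally and whose internal structure recursively propagates the remaining outgoing obligations away from the cycle vertex, leaving the cycle-incident gadget hyperedge incoming. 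Balancing degree and rank to exactly $\delta$ everywhere while maintaining the required local symmetry is the chief combinatorial hurdle; once a gadget satisfying (i)--(iv) is in place, the $\Omega(n)$ lower bound follows immediately from the base case.
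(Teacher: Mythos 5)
Your base case $\delta=r=2$ is fine. The fatal problem is requirement (ii) of your gadget: it is \emph{provably impossible} to force a gadget hyperedge to be incoming at a cycle vertex in every HSO, regardless of how the gadget is designed. Here is the obstruction. Let $G$ be any hypergraph with minimum degree $\delta$ and maximum rank $r=\delta$, let $v$ be a vertex and $e\ni v$ any incident hyperedge. Form $G'$ by deleting $v$ and $e$ and restricting all remaining hyperedges to $V\setminus\{v\}$. For any $S\subseteq V(G')$, the double count
\[
\delta\,|N_{G'}(S)|\ \ge\ \sum_{e'\in N_{G'}(S)}|e'\cap S|\ =\ \sum_{u\in S}\deg_{G'}(u)\ \ge\ \delta|S|-|S\cap e|
\]
together with $|S\cap e|\le |e|-1\le \delta-1$ (because $v\in e$ but $v\notin S$) gives $|N_{G'}(S)|>|S|-1$, i.e.\ $|N_{G'}(S)|\ge|S|$. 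So $G'$ still satisfies Hall's condition and has an HSO. Lift that HSO to $G$ by declaring $v$ the tail of $e$: one obtains a valid HSO of $G$ in which $e$ is outgoing at $v$ and \emph{every other} hyperedge at $v$ — in particular both cycle hyperedges in your construction — is incoming at $v$, since its tail was chosen in $G'$ where $v$ is absent. Hence the restriction of this HSO to the cycle has $v$ as a sink, and your map from HSO solutions to cycle-SO solutions is ill-defined. This is not a ``combinatorial hurdle'' that a cleverer gadget might clear; no gadget can clear it.

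For comparison, the paper's proof deliberately avoids any such local forcing. It builds a $\delta$-regular bipartite graph (so HSO equals perfect matching) shaped like a ``theta graph'': two far-apart special nodes $a,b$ joined by $\delta$ long parallel paths, each built from copies of $K_{\delta,\delta}$ with one edge deleted. It then argues by a \emph{global parity} count that in every perfect matching, $a$ and $b$ must be matched into the same path (otherwise deleting $a$, $b$, and their matching partners leaves an odd connected component), and concludes with a standard ID-relabeling indistinguishability argument since $a$ and $b$ are $\Theta(n)$ apart. The single-edge constraint you want (which a Hall-slack argument shows is never available) is replaced by a global parity constraint on the whole matching, which is robust exactly because the construction is $\delta$-regular. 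If you want to keep a cycle-embedding flavor, you would need to replace your gadget requirement (ii) by some global constraint of this kind; a purely local incoming-forcing gadget cannot exist.
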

The lower bound is easy to see in the special case of $\delta=r=2$ where the problem corresponds to computing a consistent orientation on a cycle, but more difficult to obtain for larger values of $\delta = r$.
 To summarize, in case of $\delta<r$, there may be no solution to HSO,  in case of equality $\delta=r$, a solution to HSO exists, but computing one requires $\Omega(n)$ rounds. For all other cases we prove the following theorem.
 
\begin{restatable}{theorem}{thmHSO}
\label{thm:hypergraphSO}
There is a deterministic $O(\log_{\frac{\delta}{r}}n)$-round algorithm for computing an HSO in any $n$-node multihypergraph with minimum degree $\delta$ and maximum rank $r<\delta$. 
\end{restatable}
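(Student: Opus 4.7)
\textbf{Proof plan for \Cref{thm:hypergraphSO}.}
The plan is to reduce the problem to finding a bipartite saturating matching, invoke the Distributed Hall's Theorem, and combine local matchings into a global orientation without any further coordination between nodes.

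First, I use the equivalence described in \Cref{sec:hsoexplained}: an HSO on $G$ corresponds one-to-one to a matching in $\mathcal{B}_G$ that saturates the vertex side. Under the assumption $r<\delta$, the bipartite graph $\mathcal{B}_G$ has minimum degree $\delta$ on the vertex side and maximum degree $r$ on the hyperedge side, which is exactly the setting of \Cref{thm:distributedHall}. Hence every vertex-side node $v$ is contained in a Hall graph $H_v$ of diameter at most $\log_{\delta/r} n$, and in particular $H_v$ admits a matching that saturates all its vertex-side nodes.

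The algorithm itself is then very short. Every vertex-side node $v$ collects its entire $\log_{\delta/r} n$-hop neighborhood in $\mathcal{B}_G$, which costs $O(\log_{\delta/r} n)$ rounds in \LOCAL. From this data, $v$ deterministically identifies its Hall graph $H_v$ (using an ID-based canonical rule that every node applies identically, so that two nodes seeing overlapping neighborhoods agree on the structure they extract) and computes a canonical saturating matching inside $H_v$, which exists by definition. Node $v$ then outputs the hyperedge assigned to it by this internal matching as its outgoing hyperedge; translated back to $G$, this yields an orientation in which every vertex has at least one outgoing hyperedge.

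The main obstacle, and where the \emph{Hall graph} property is essential, is showing that these many local matchings are mutually consistent, i.e., no hyperedge is chosen as outgoing by two distinct nodes. The defining ``removability'' property of a Hall graph says precisely that the hyperedges used inside it can be deleted without violating Hall's condition for any outside vertex, and I expect to leverage this as follows: if two canonical internal matchings inside different Hall graphs both claimed the same hyperedge $e$, then the two Hall graphs together could not simultaneously be ``safely removable'' in the above sense, contradicting the properties guaranteed by \Cref{thm:distributedHall} and the canonicity of the local choices. Once uniqueness of the outgoing hyperedge at every node is established, correctness of the HSO is immediate and the round complexity is dominated by the neighborhood collection step, giving the claimed $O(\log_{\delta/r} n)$ bound.
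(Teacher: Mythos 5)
Your reduction to saturating matchings and the use of the Distributed Hall's Theorem to obtain local Hall graphs $H_v$ is the right starting point, and the runtime accounting is fine. But the core of your argument — that the canonical saturating matchings computed inside the various overlapping $H_v$'s are \emph{automatically} consistent, i.e., that no hyperedge is selected as outgoing by two distinct vertices — does not hold, and the sketch you give for it cannot be turned into a correct proof.

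The issue is that a Hall graph, as formally defined, is simply a subgraph whose bipartite representation admits a saturating matching; it has no built-in ``safe removability'' guarantee relative to another, independently-chosen Hall graph. Two vertices $u$ and $u'$ can perfectly well have Hall graphs $H_u$ and $H_{u'}$ that overlap, with $H_u$'s canonical internal matching assigning hyperedge $e$ to $u$ while $H_{u'}$'s canonical internal matching assigns the very same $e$ to $u'$. Nothing in \Cref{thm:distributedHall} or in any ID-based canonical tie-breaking rule forbids this: the two matchings are computed independently on two different (overlapping) subgraphs, and canonicity of each one separately does not enforce agreement on shared hyperedges. Your proposed contradiction (``the two Hall graphs together could not simultaneously be safely removable'') appeals to a property that a Hall graph simply does not have by definition — it is removable together with its vertices, which is a statement about what remains, not about consistency with another Hall graph's internal matching.

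The paper resolves exactly this consistency problem with a different mechanism: the Hall graphs $H(v_1), \dots, H(v_n)$ are put in a fixed global order (by ID), and the hyperedges are conceptually (re)oriented sequentially, with a later Hall graph's orientation \emph{overwriting} any earlier one on shared hyperedges. There is no attempt to make the local matchings consistent. Instead, correctness follows from a monotonicity observation: for any vertex $v$, let $H_{i_v}$ be the Hall graph with the largest index that contains $v$; then the hyperedge $e$ that $v$ receives when $H_{i_v}$ is processed is incident to $v$, hence cannot appear in any later Hall graph (as those do not contain $v$), so $e$'s orientation is never overwritten afterwards. Each node can simulate the relevant suffix of this sequential process by collecting a $2x$-hop neighborhood rather than an $x$-hop one (to see every Hall graph that can touch its incident edges) and using a fixed rule to derive the same HSO within each $H(u)$. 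Without this overwriting idea and the ``last Hall graph wins for $v$'' argument, your proof has a genuine gap.
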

For $r=2$, HSO equals the (graph) sinkless orientation problem.  In that case the condition $\delta>r$ equals the standard assumption for SO that the minimum degree of the input graph is $3$. Similarly, as for HSO, otherwise one requires either linear time or the problem does not even have a solution. One can extend \Cref{thm:hypergraphSO} to general hypergraphs without this condition if all nodes whose degree is at most the maximum rank do not need to have an outgoing hyperedge.

By a combination\footnote{More precisely, the lower bound follows from the fact that HSO is a so-called \emph{fixed point} in the round elimination framework introduced in~\cite{Brandt19}, which implies a deterministic lower bound of $\Omega(\log_{\delta \cdot r} n) = \Omega(\log_{\delta} n)$ rounds for deterministic algorithms and $\Omega(\log_{\delta \cdot r} \log n) = \Omega(\log_{\delta} \log n)$ rounds for randomized algorithms (as shown in, e.g., \cite[Section 7, arXiv version]{BBKO2021hideandseek}). The fact that HSO is a fixed point is a straightforward extension of~\cite[Section 4.4, arXiv version]{Brandt19} to hypergraphs.} of previous results~\cite{Brandt19,BBKO2021hideandseek}, solving HSO requires $\Omega(\log_{\delta} n)$ rounds deterministically (and $\Omega(\log_{\delta} \log n)$ rounds randomized).
Hence, our deterministic algorithm is provably asymptotically optimal whenever $\delta \geq r^{1 + \eps}$ for some (arbitrarily small) positive constant $\eps$ (which, for instance, is always satisfied in the case that $r$ is constant).
But also in the case that our upper bound does not match the lower bound, only a small factor of $(\log \delta) / (\log \nicefrac{\delta}{r})$ remains between upper and lower bound.
Moreover, if $\delta \geq (1 + \eps)r$ for some (arbitrarily small) positive constant $\eps$, then our algorithm has logarithmic runtime (which, again, holds amongst others in the case that $r$ is constant).

\subsubsection{Implications}

\paragraph{Bipartite Maximum Matching.}
Via the interpretation of an HSO on a hypergraph $G$ as a matching in the graph's bipartite representation $\cB_G$ we obtain the following corollary. 
It is a direct consequence of \Cref{thm:hypergraphSO}. 

\begin{restatable}[Maximum Matching, deterministic]{corollary}{corMatchingDeterministic}
	\label{cor:maximumMatchingDeterministic}
	There is a deterministic $O(\log_{\frac{\delta}{r}}n)$-round algorithm for computing a left side saturating (and therefore maximum) matching in any $n$-node bipartite graph with minimum left side degree $\delta$ and maximum right side degree $r< \delta$.
\end{restatable}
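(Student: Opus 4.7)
The plan is a direct reduction to \Cref{thm:hypergraphSO} via the bipartite-graph-to-hypergraph correspondence already introduced in the model description. Given the input bipartite graph $H=(L\cup R, E)$ with minimum left-side degree $\delta$ and maximum right-side degree $r<\delta$, I would reinterpret $H$ as the bipartite representation $\cB_G$ of a multihypergraph $G$ in which $L$ plays the role of the vertex set and $R$ plays the role of the hyperedge set (a node $u\in L$ belongs to the hyperedge $e\in R$ exactly when $\{u,e\}\in E$). Under this identification, the minimum vertex degree of $G$ equals $\delta$ and the maximum rank of $G$ equals $r$, so the hypothesis $\delta>r$ of \Cref{thm:hypergraphSO} is satisfied. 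Moreover, the number of nodes of $\cB_G$ equals the number of nodes of $H$, so the parameter $n$ in the runtime of \Cref{thm:hypergraphSO} matches the $n$ in the corollary.

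Next, I would invoke the deterministic $O(\log_{\delta/r} n)$-round algorithm of \Cref{thm:hypergraphSO} on $G$ to obtain an HSO. Recall from the definition of HSO that every vertex has at least one outgoing hyperedge and every hyperedge is outgoing for exactly one of its incident vertices. From this orientation I extract a matching in $H$ by letting each left-side node $v\in L$ pick one of its outgoing hyperedges (say, the one with smallest ID) as its matched partner. This selection step requires no additional communication, since each node already knows the orientation of its incident (hyper)edges after the HSO subroutine finishes. Because each hyperedge is outgoing for at most one left-side node, no two left-side nodes can pick the same right-side partner, so the selected edges form a valid matching in $H$; because each left-side node has at least one outgoing hyperedge to pick, this matching saturates all of $L$. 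Any matching in $H$ has size at most $|L|$, so a left-saturating matching is automatically maximum.

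There is essentially no obstacle: the corollary is a mechanical translation of \Cref{thm:hypergraphSO} through an equivalence already spelled out in \Cref{sec:hsoexplained}. The only subtlety worth flagging is that parallel edges in $H$ would translate to repeated hyperedges in $G$, which is precisely why \Cref{thm:hypergraphSO} is stated for multihypergraphs rather than simple hypergraphs; if $H$ is simple, the resulting $G$ is a plain hypergraph and the reduction goes through unchanged.
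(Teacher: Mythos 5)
Your reduction is exactly the one the paper intends: reinterpret the input bipartite graph as the bipartite representation $\cB_G$ of a multihypergraph $G$ with $L$ as vertices and $R$ as hyperedges, run the HSO algorithm of \Cref{thm:hypergraphSO}, and extract the saturating matching via \Cref{obs:bipartiteHypergraph}. The paper presents this corollary as a "direct consequence" with no further argument, and your write-up simply spells out the same translation in more detail, including the correct observation that each hyperedge being outgoing for at most one vertex guarantees the selected edges form a matching.
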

The \emph{maximal} matching problem admits a deterministic $O(\log^*n)$-round algorithm~\cite{panconesi-rizzi} on \emph{constant-degree} graphs and a recent breakthrough work shows that on general graphs there is no $o(\Delta + \log n / \log \log n)$-time deterministic and no $o(\Delta + \log \log n / \log \log \log n)$-time randomized algorithm~\cite{BBHORS19MMlowerBound}.
Despite significant progress decreasing the deterministic runtime from $O(\log^7 n)$~\cite{Hanckowiak1998} to $O(\log^4 n)$~\cite{hanckowiak01} to $O(\log^3 n)$~\cite{Fischer17}, the best runtime (as a function of $n$) of $O(\log^2n \poly\log\log n)$~\cite{GG23} is still substantially larger than logarithmic. 
Our result provides a deterministic $O(\log n)$-round maximal matching algorithm for a large class of bipartite graphs with unbounded degrees. 

We remark that the techniques from~\cite{BBHORS19MMlowerBound} imply a lower bound of $\Omega(\min\{r, \log_{\delta} n\})$ rounds deterministically (and $\Omega(\min\{r, \log_{\delta} \log n\})$ rounds randomized) for maximal matching in our setting.
Hence, even for the fundamental maximal matching problem our deterministic runtime is asymptotically optimal for a certain parameter range, including the regime where $r \geq (\log n)/(\log \log n)$ and $\delta \geq r^{1 + \eps}$ for some constant $\eps > 0$.

\paragraph{Weak Splitting.}
The objective of the \emph{weak splitting} problem~\cite{GKM17} is to color the right-hand side of a bipartite graph with two colors such that every node on the left-hand side has at least one neighbor with each color.
While seemingly easy, it turns out that this problem admits no deterministic or randomized algorithms with runtime $o(\log n)$ and $o(\log \log n)$~\cite{FGLLL17, LLL_lowerbound, BGKMU19, Chang2016}. We obtain the following result for weak splitting by a reduction to the HSO problem.
This improves on the prior $O(\log^3 n \poly\log\log n)$-round algorithm~\cite{BGKMU19}.

\begin{restatable}[Weak splitting]{corollary}{corXYSplitting}\label{cor:splitting}
	There is a deterministic $O(\log_{\frac{\delta}{r}} n)$-round algorithm for the weak splitting problem where $r$ is the maximum degree on the right-hand side of the bipartite graph and $\delta\geq 2(r + 1)$ is the minimum degree on its left-hand side.
\end{restatable}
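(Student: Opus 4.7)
The plan is a direct reduction from weak splitting to HSO via a doubled-vertex construction. I construct an auxiliary multihypergraph $\tilde G$ with vertex set $\tilde L := L \times \{1,2\}$ and hyperedge set $R$, where a hyperedge $e \in R$ is made incident to \emph{both} copies $(v,1)$ and $(v,2)$ of every $v \in e$. In $\tilde G$, each vertex $(v,i)$ has the same degree as $v$ has in $G$, so the minimum vertex degree is $\delta$, while the rank of $e$ becomes $2|e| \leq 2r$. The hypothesis $\delta \geq 2(r+1)$ then gives $\delta > 2r$, meeting the precondition of \Cref{thm:hypergraphSO}.

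I then apply \Cref{thm:hypergraphSO} to $\tilde G$ to obtain an HSO $\tau\colon R \to \tilde L$ satisfying $|\tau^{-1}((v,i))| \geq 1$ for every $(v,i) \in \tilde L$. The coloring is read off directly from the second coordinate of $\tau$: declare $e \in R$ to be red if $\tau(e)$ has second coordinate $1$, and blue otherwise. For any $v \in L$, the non-empty set $\tau^{-1}((v,1))$ supplies a red hyperedge in $N_G(v)$ (since $(v,1) \in e$ forces $v \in e$), and $\tau^{-1}((v,2))$ supplies a blue one, yielding a valid weak splitting.

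The main subtlety — and the step I expect to be the principal hurdle — is the runtime bookkeeping. The black-box application of \Cref{thm:hypergraphSO} to $\tilde G$ returns a bound of $O(\log_{\delta/(2r)} n)$, whereas the corollary claims $O(\log_{\delta/r} n)$. Under $\delta \geq 2(r+1)$ we have $\delta/r \geq 2$, so both quantities are $O(\log n)$ and the claim holds in absolute terms; for $\delta/r$ bounded away from $2$ by a constant the two also agree up to constants. To obtain the stated dependence on $\delta/r$ in full generality, one does not invoke \Cref{thm:hypergraphSO} as a black box but instead opens up the proof of \Cref{thm:distributedHall} on $\tilde G$ and exploits that the two copies $(v,1),(v,2)$ share neighborhoods: they do not impose independent expansion constraints during the Hall-graph growth, and a careful charging argument recovers an effective expansion rate of $\delta/r$ rather than $\delta/(2r)$, yielding Hall graphs of diameter $\log_{\delta/r} n$ and the claimed runtime.
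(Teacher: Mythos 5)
Your reduction is correct and is, up to a natural duality, the same as the paper's. Where the paper creates two copies $v_1,v_2$ of each left vertex and \emph{splits} $v$'s incident hyperedges between the copies (so the left degree drops to $\lfloor\delta/2\rfloor\geq r+1$ while the rank stays $\leq r$), you keep the incidence structure intact and place \emph{both} copies $(v,1),(v,2)$ into every hyperedge containing $v$ (so the left degree stays $\geq\delta$ while the rank doubles to $\leq 2r$). In both cases the effective degree-to-rank ratio handed to \Cref{thm:hypergraphSO} is $\delta/(2r) > 1$, the HSO read-off (color a hyperedge by the second coordinate of the vertex it is oriented out of) is identical in spirit, and the weak-splitting guarantee follows correctly in both constructions. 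So the main body of your proof matches the paper's.

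Your final paragraph, however, contains two concrete errors. First, it is not true that $\log_{\delta/(2r)} n = O(\log n)$ whenever $\delta \geq 2(r+1)$: in the tight case $\delta = 2(r+1)$ one has $\delta/(2r) = 1 + 1/r$, so $\log_{\delta/(2r)} n = \Theta(r\log n)$, which is unbounded in $r$. Second, the proposed repair — opening up \Cref{thm:distributedHall} and arguing that the twinned copies $(v,1),(v,2)$ share neighborhoods and therefore the expansion rate improves back to $\delta/r$ — does not go through. Any paired vertex set $S = \{(u,i): u\in S_0, i\in\{1,2\}\}$ has $|S| = 2|S_0|$ but $|N_{\tilde G}(S)| = |N_G(S_0)|$, so Hall's condition in $\tilde G$ is precisely the strengthened requirement $|N_G(S_0)| \geq 2|S_0|$. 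That factor of $2$ is not a bookkeeping artifact that charging can remove; it is intrinsic to weak splitting (you must supply a red and a blue witness), and the ball-growing argument of \Cref{lemma:manyEdges} genuinely yields diameter $\log_{(\delta-1)/(2r-1)} n$, not $\log_{(\delta-1)/(r-1)} n$. It is worth noting that the paper's own proof incurs the same factor-$2$ loss — the asserted simplification $O(\log_{(\delta/2-1)/(r-1)} n) = O(\log_{\delta/r} n)$ is similarly coarse for $\delta$ close to $2(r+1)$ and large $r$ — so you have correctly sensed a real subtlety, but the sketched fix should be dropped; either accept the $\delta/(2r)$ base or restrict to, say, $\delta\geq cr$ for a constant $c>2$.
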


\paragraph{Implications for randomized algorithms.}
To the best of our knowledge, the HSO problem has not been studied explicitly, but one can derive algorithms for the problem by modeling it as an instance of the Constructive \lovasz Local Lemma (LLL) \cite{CPS17,FGLLL17,RG20}; see \Cref{sec:randomized} for details.  Using known algorithms it can either be solved in  $O(\poly(\delta,r)+\poly\log\log n)$~\cite{FGLLL17,RG20} or in $O(\log n)$ rounds~\cite{CPS17}. 
Actually, it is conjectured that on constant-degree graphs, there are randomized algorithms for LLL that run in $O(\log\log n)$ rounds~\cite{Changhierarchy19}. 
In the special case of trees, this conjecture has been verified~\cite{CHLPU20}.
But, to the best of our knowledge, on general (bounded-degree) graphs a runtime of $O(\log\log n)$ rounds has only been achieved for the special LLL-type problem of sinkless orientation. 
We add HSO-type problems to that list.
More generally, we obtain the following result that is exponentially faster than the previous algorithms for most choices of $\delta$ and $r$.

\begin{restatable}[HSO randomized upper bounds]{theorem}{thmHSORandomized} 
\label{thm:randomizedHSO}
	 There is a randomized algorithm that w.h.p.\ computes an HSO on any hypergraph with rank $r$ and minimum degree $\delta\geq 320 r\log r$ with runtime 
  $$
  O\left(\log_{\frac{\delta}{r}}\delta+  \log_{\frac{\delta}{r}} \log n\right).$$
    If $r\geq 100 \log n$, an alternative algorithm solves the problem in $O(\log\log n/\log\log \log n)$ rounds.   
\end{restatable}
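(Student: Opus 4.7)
The plan is to combine our new deterministic algorithm from \Cref{thm:hypergraphSO} with a randomized shattering step, in the standard ``LLL shattering'' paradigm (cf.\ \cite{CPS17,FGLLL17,RG20}). I would formulate HSO as a constructive Lovász Local Lemma instance: for every node $v$, the bad event $A_v$ is that all hyperedges incident to $v$ are oriented away from $v$; under a uniformly random choice of source for each hyperedge, each $A_v$ has probability at most $(1-1/r)^\delta \leq e^{-\delta/r}$, while the dependency degree of $A_v$ in the canonical dependency graph is $d=O(\delta r)$. The assumption $\delta\geq 320 r\log r$ ensures $e p d^c < 1$ with a large polynomial slack, which is exactly what the Chung--Pettie--Su / Fischer--Ghaffari shattering analyses require.

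For the first claim, I would iterate a partial random-orientation subroutine for $t=O(\log_{\delta/r}\delta)$ rounds: in each round, a uniformly random source is tentatively drawn for every hyperedge whose incident vertices have enough remaining slack, and the draw is committed to iff local progress is preserved. A standard induction then shows that after these $t$ rounds, the set $B$ of still-unsatisfied vertices satisfies the LLL shattering property, meaning that in the dependency graph $B$ decomposes into connected components of size $O(\poly\log n)$ with high probability. I would then apply \Cref{thm:hypergraphSO} on each such component in parallel; since each component has $O(\poly\log n)$ nodes and minimum degree and rank parameters preserved, its deterministic cost is $O(\log_{\delta/r}\log n)$, yielding the claimed total runtime.

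For the second claim ($r\geq 100\log n$), the rank itself is already polylogarithmic, so a single round of uniformly random orientation already drives each bad-event probability down to $(1-1/r)^\delta \leq e^{-320\log r} \leq n^{-\Omega(1)}$. This allows shattering into components of size roughly $\log n/\log\log n$; by iterating the random step once more \emph{inside} each component (where the effective $n$ has shrunk to $\poly\log n$) and applying the deterministic algorithm of \Cref{thm:hypergraphSO} on the resulting doubly shattered pieces, whose diameter is $O(\log\log n/\log\log\log n)$ by \Cref{thm:distributedHall}, I obtain the stated randomized bound.

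The main obstacle I anticipate is the shattering analysis itself: proving that, after $O(\log_{\delta/r}\delta)$ partial-orientation rounds, the violated vertices indeed shatter into $O(\poly\log n)$ components requires tracking how the dependency neighborhoods interact with the partial-orientation procedure (in particular, that the probability of a long chain of dependent violations drops super-polynomially in the length of the chain, even when $\delta/r$ is only slightly larger than $1$) and verifying that the preserved minimum-degree-versus-rank ratio on a component is still large enough for \Cref{thm:hypergraphSO} to apply. Once these two ingredients are in place, the theorem follows by composing the shattering subroutine with the deterministic post-shattering call.
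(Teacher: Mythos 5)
Your high-level plan (shattering $+$ deterministic post-shattering via \Cref{thm:hypergraphSO}) is the right one and matches the paper's high-level structure, but both of your concrete implementations diverge from the paper and each contains a real problem.

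For the first claim, the paper does \emph{not} iterate a random orientation step $t = O(\log_{\delta/r}\delta)$ times. It does a \emph{single} random pre-shattering round (after first reducing the degree so that $\delta = 320r\log r$, hence $\delta = O(r^2)$). This yields, via the shattering lemma, components of size $N = O(\poly(r)\log n)$. The term $\log_{\delta/r}\delta$ then arises simply because the deterministic \Cref{thm:hypergraphSO} applied to a component of size $N$ costs $O(\log_{\delta/r} N) = O(\log_{\delta/r} r + \log_{\delta/r}\log n) = O(\log_{\delta/r}\delta + \log_{\delta/r}\log n)$. You do not need to, and as far as I can tell cannot, drive the component size down to $\poly\log n$ by iterating: after one shattering round the component size is already $\poly(\Delta)\log n$ (where $\Delta$ is the degree of the dependency graph), and iterating within such a component does not clearly reduce the effective degree, which is what controls the $\poly$ factor. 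The obstacle you flag at the end (tracking how dependency neighborhoods interact under iteration, and preserving $\delta/r$) is not a technicality but a genuine gap in your plan; the paper sidesteps it entirely by accepting the larger component size. Moreover, the paper has to handle the $\delta/r$-preservation issue even for a single round, which it does by letting nodes that lose too many incident activated hyperedges undo all their activations, guaranteeing remaining minimum degree $\geq \delta/2$ --- this detail is missing from your sketch.

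For the second claim ($r\geq 100\log n$), your calculation is simply wrong. You write $(1-1/r)^\delta \leq e^{-320\log r} \leq n^{-\Omega(1)}$, but $e^{-320\log r} = r^{-320}$, and $r\geq 100\log n$ only gives $r^{-320} \leq (100\log n)^{-320}$, which is \emph{not} $n^{-\Omega(1)}$ (it is $(\log n)^{-\Theta(1)}$). Consequently the super-aggressive shattering into $O(\log n/\log\log n)$-sized components that you posit does not follow. The paper instead \emph{subsamples vertices out of hyperedges} (each vertex stays in each incident hyperedge with probability $p = (100\log n)/r$), reducing the rank to $r' = O(\log n)$ while keeping $\delta'\geq 320 r'\log r'$ w.h.p., and then applies the algorithm from the first part on $(G', \delta', r')$; plugging in $r' = \Theta(\log n)$ and $\delta'/r' = \Theta(\log\log n)$ into $O(\log_{\delta'/r'}\delta' + \log_{\delta'/r'}\log n)$ gives $O(\log\log n/\log\log\log n)$. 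This is a rank-reduction trick, not a ``double shattering'' argument; your version would need a correct bound on the bad-event probability to get off the ground.
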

In fact, if $\delta\leq \poly(\log n)$ the runtime of \Cref{thm:randomizedHSO} becomes $O(\log_{\frac{\delta}{r}}\log n)$, which is constant when $\delta/r=\Omega(\log^{\eps} n)$, for some constant $\eps>0$. 
This is perhaps surprising, as for $\eps<1$, randomly orienting each hyperedge does not solve HSO with high probability, but still, we obtain a constant-time algorithm.
Moreover, still if $\delta\leq \poly(\log n)$, due to the aforementioned randomized lower bound of $\Omega(\log_{\delta} \log n)$ rounds we obtain in general that, again, only a small factor of $(\log \delta) / (\log \nicefrac{\delta}{r})$ remains between upper and lower bound; in particular our algorithm is asymptotically optimal when $\delta \geq r^{1 + \eps}$ for some (arbitrarily small) positive constant $\eps$. 

We also obtain a randomized algorithm for maximum matching in bipartite graphs. 
We refer to \Cref{sec:priorWorkComparison} for a detailed comparison to prior algorithms for bipartite matching problems. 
\begin{restatable}[Maximum Matching, randomized]{corollary}{corMatchingRandomized}
\label{cor:maximumMatchingRandomized}
 There is a randomized algorithm that, w.h.p.\, computes a left side saturating matching in any $n$-node bipartite graph with maximum right side degree $r$ and minimum left side degree $\delta\geq 80 r\log r$ in $O(\log_{\frac{\delta}{r}}\delta+\log_{\frac{\delta}{r}}\log n)$ rounds. 
  
  If $r\geq 100\log n$, there is an algorithm to solve the problem w.h.p.\ in $O(\log\log n/\log\log\log n)$ time. 
\end{restatable}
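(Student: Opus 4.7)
The plan is to obtain \Cref{cor:maximumMatchingRandomized} as an almost immediate consequence of \Cref{thm:randomizedHSO}, using the same equivalence between bipartite left-saturating matching and HSO that underlies the deterministic \Cref{cor:maximumMatchingDeterministic}.

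First, I would interpret the bipartite input $H=(L\cup R,E_H)$ with minimum left-degree $\delta$ and maximum right-degree $r$ as the bipartite representation $\cB_G$ of the hypergraph $G$ whose vertex set is $L$ and whose hyperedges are the sets $N_H(u)\subseteq L$ for $u\in R$. Under this identification, $G$ has minimum degree $\delta$ and maximum rank $r$, and the LOCAL model on $\cB_G$ coincides with the LOCAL model on $H$ itself, so there is no simulation overhead in the round complexity.

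Second, I would apply the equivalence explained in \Cref{sec:hsoexplained}: any HSO on $G$ orients each hyperedge (i.e., each right vertex $u\in R$) toward exactly one incident left vertex, and guarantees that every left vertex receives at least one incoming orientation. For every $v\in L$, selecting any right vertex oriented toward $v$ produces an assignment that is injective by construction and thus forms a left-saturating (hence maximum) matching in $H$. Consequently, any algorithm that solves HSO on $G$ yields the desired matching in $H$ within the same number of rounds.

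Third, I would feed $G$ into \Cref{thm:randomizedHSO}: its first regime directly gives the runtime $O(\log_{\delta/r}\delta+\log_{\delta/r}\log n)$, while its second regime yields $O(\log\log n/\log\log\log n)$ when $r\ge 100\log n$. The only non-trivial point is the hypothesis $\delta\ge 80\,r\log r$, which is weaker than the $\delta\ge 320\,r\log r$ stated for the general HSO theorem; I expect this to follow from a constant-factor tightening of the HSO analysis in the special case at hand, where each hyperedge is realized by the neighborhood of a single right vertex rather than by an arbitrary subset of $L$. This constant sharpening is the main (and essentially only) obstacle, and it does not require any new algorithmic idea beyond \Cref{thm:randomizedHSO}.
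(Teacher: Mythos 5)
Your reduction of left-saturating bipartite matching to HSO via \Cref{obs:bipartiteHypergraph} is exactly the intended route; the paper gives no separate proof of \Cref{cor:maximumMatchingRandomized} and treats it as a direct consequence of \Cref{thm:randomizedHSO}. However, your handling of the constant discrepancy does not hold up. The specific reason you offer---that each hyperedge here is ``the neighborhood of a single right vertex rather than an arbitrary subset of $L$''---provides no leverage at all: every multihypergraph on vertex set $L$ with minimum degree $\delta$ and maximum rank $r$ arises as exactly such a bipartite graph (this is the definition of $\cB_G$ in \Cref{sec:preliminaries}), so the bipartite-matching instances are the same class of objects as the HSO instances, not a proper subclass. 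Consequently you cannot simply cite \Cref{thm:randomizedHSO} as stated when $80r\log r\le\delta<320r\log r$.

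The weaker hypothesis $\delta\ge 80r\log r$ can in fact be recovered, but only by reworking the shattering parameters in \Cref{sec:randomized}, not by appealing to any special bipartite structure. For instance, activating each hyperedge with probability $3/4$ instead of $1/4$ makes the $Bad_3$ failure probability at most $(1-1/r)^{\delta/2}\le e^{-\delta/(2r)}$, which is already $\le\delta^{-20}$ once $\delta\ge 80r\log r$ (using $\log\delta\le 2\log r$), while good nodes still retain at least $\delta/8>r$ unmarked hyperedges so that the post-shattering invocation of \Cref{thm:hypergraphSO} goes through; one must then also re-verify the Chernoff bound for $Bad_1$, the component-size bound from \Cref{lem:Shattering}, and the final runtime. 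Asserting ``I expect this to follow from a constant-factor tightening'' without identifying which constants move and checking the downstream steps leaves a genuine gap in the argument.
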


\subsection{Our Technique in a Nutshell }
\label{sec:nutshell}
\subsubsection{Edge Coloring}\label{sec:nutedge}
Our $3\Delta/2$-edge coloring algorithm is based on the framework provided in~\cite{GKMU18}.
This framework relies on two crucial subroutines: first the input graph $G$ is partitioned into $\approx \Delta/2$ so-called $(3)$-graphs, and then each of these $(3)$-graphs is edge-colored with a separate set of $3$ colors. More precisely, \cite{GKMU18} iteratively extracts (3)-graphs in a way that reduces the maximum degree of the remaining graph by at least two in each iteration, so that using $3$ fresh colors for each extracted $(3)$-graph results in a $3\Delta/2$-edge coloring of $G$.

We improve the runtime of both subroutines.
The core runtime contribution in the aforementioned extraction procedure comes from the computation of a maximum matching in certain bipartite graphs that fit the framework of \Cref{cor:maximumMatchingDeterministic}. Using our maximum matching algorithm, we can extract a single $(3)$-graph in $O(\Delta \log n)$ rounds. In their work the respective maximum matching problems are solved in $O(\Delta^4\cdot \log^5\Delta \cdot \log^5n \cdot \poly\log\log n)$ rounds, where we already used \cite{Harris19} for improving a subroutine for  hypergraph matching problems (HMs) in their algorithm. The runtime in the original paper~\cite{GKMU18} was significantly slower. Alternatively, the HMs  can be solved via the state-of-the-art maximal independent set algorithm algorithm from \cite{GG23} and obtain a runtime of $O(\Delta^2\log^4n\cdot \poly\log\log n)$ for extracting a single (3)-graph. 

For improving the second subroutine, we develop a novel way of $3$-edge-coloring $(3)$-graphs (outlined below). A $(3)$-graph is a graph with maximum degree $3$ in which nodes of degree $3$ form an independent set. As the line graph of a $(3)$-graph is a graph with maximum degree $3$, Brooks' Theorem implies that it can be colored with $3$ colors, and the state-of-the-art distributed implementation of Brooks' theorem yields a complexity of $O(\log^2 n)$ rounds for $(3)$-edge-coloring $(3)$-graphs, which our new approach improves to $O(\log n)$ rounds. 

\paragraph{Our approach for $3$-edge coloring $(3)$-graphs.}
First, we use a ruling set algorithm to compute a clustering of the input $(3)$-graph that guarantees that each cluster is of constant diameter but at the same time has a sufficiently large number of neighbors in case the cluster is a tree where every edge has $3$ adjacent edges.
As $(3)$-graphs have maximum degree $3$ we can also ensure that each cluster is adjacent to at most a constant number of other clusters and the clustering can be computed in $O(\log^*n)$ rounds.
We will first color all intercluster edges (i.e., edges whose endpoints lie in different clusters) and then all intracluster edges.

Call a cluster \emph{easy} if it contains an edge that has at most $2$ adjacent edges or a cycle.
We show that for easy clusters, any adversarially chosen coloring of the edges outside of the cluster including the adjacent intercluster edges can be completed to a valid $3$-edge coloring of the cluster's edges.
Now consider clusters that are not easy.
As guaranteed by our clustering algorithm, these clusters have many adjacent intercluster edges.
We show that also for any such cluster, any $3$-edge coloring of the edges outside of the cluster can be completed inside the cluster as long as the cluster can decide on the colors of three incident intercluster edges to a certain extent.
This gives rise to our overall approach outlined in the following.

After computing the clustering, we compute a sinkless orientation on the cluster graph obtained by contracting clusters, which ensures that any non-easy cluster has (at least) three outgoing intercluster edges.
Each non-easy cluster chooses three such edges; we say that a cluster \emph{owns} these edges.
Then we compute a helper $3$-coloring of the intercluster edges that we subsequently use to find the final color for each intercluster edge by iterating through the color classes of the helper coloring.
When coloring an edge owned by a cluster, the cluster can decide how to color the edge (under the constraint that the resulting partial coloring is still proper).
As the final step, each cluster simply collects the colors of its adjacent intercluster edges and then completes the coloring inside the cluster.

The runtime of the overall algorithm is dominated by the time it takes to compute the sinkless orientation, which takes $O(\log n)$ rounds; all other builing blocks can be performed in $O(\log^* n)$ rounds or constant time.

\medskip

\subsubsection{Distributed Hall's Theorem}
\label{sec:dihath}
The deterministic algorithm from previous work to find a (non-hypergraph) sinkless orientation makes use of the following simple observation: if you consistently orient a cycle, each node on the cycle will have an outgoing edge (hence, become happy)~\cite{GS17}.
Even more importantly, orienting a cycle cannot make the problem \emph{harder} for any node that is not on the oriented cycle.
In fact, the problem instance can only become \emph{easier}: any adjacent node can now orient its edge toward the cycle and become happy.

Unfortunately, we do not have these properties in the case of hypergraph sinkless orientation.
In hypergraphs, the set of (hyper)edges in a cycle of nodes might contain also nodes that are \emph{not} part of the cycle. See \Cref{fig:sinklessNotWorking} for an illustration.
The fundamental issue is that orienting a cycle in a hypergraph might make the problem instance harder for the remaining graph.
Hence, a more careful approach is needed.

\begin{figure}
    \centering
    \includegraphics[width=0.5\textwidth]{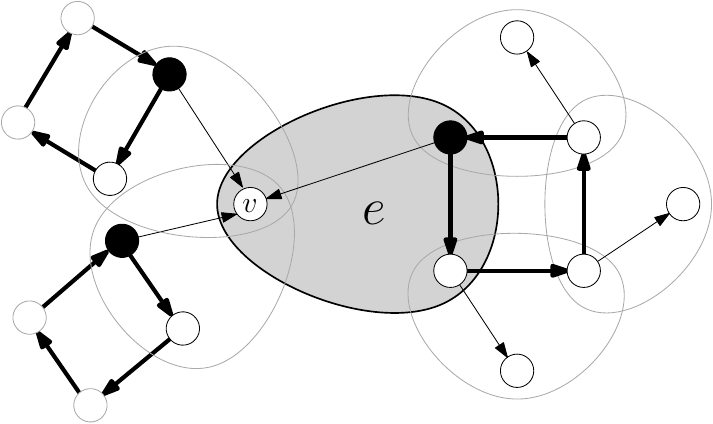}
    \caption{The figure illustrates that consistently orienting cycles can leave an instance unsolvable for node $v$. Each hyperedge is illustrated as an egg-shape over the nodes and the tail of a hyperedge is illustrated through outgoing directed edges. The tails of hyperedges that contain $v$ are drawn black. On the right side, we have a cycle of nodes induced by $4$ hyperedges, shown with bold arrows. Hyperedge $e$ is oriented without making $v$ happy. The cycles on the left can be oriented similarly, leaving $v$ without any outgoing hyperedge.}
    \label{fig:sinklessNotWorking}
\end{figure}

This motivates the definition of a Hall graph. Formally, a Hall graph $H$ is a hypergraph that admits a solution to HSO, or in other words, a graph whose bipartite representation has a matching saturating all nodes on the vertex side.  
Recall our main technical contribution:

\iffalse
\begin{restatable}[Distributed Hall's Theorem]{theorem}{thmDistHall}
 Let $\delta>r>0$. Then, for each $n$-node multihypergraph $G = (V, E)$ with minimum degree $\delta$ and maximum rank $r$, and each vertex $v \in V$, there exists a Hall graph $H$ that contains $v$ and is a subgraph of $B_{x(n)}^G(v)$ where $x(n)=\log_{\frac{\delta-1}{r-1}}n$.
\end{restatable}
\fi
\thmDistHall*

Informally, given \Cref{thm:distributedHall}, we can solve HSO on a hypergraph with minimum degree $\delta$ and maximum rank $r<\delta$ in $O(\log_{\frac{\delta}{r}} n)$ rounds as follows.
Each node $v$ collects its $O(\log_{\frac{\delta}{r}} n)$-hop neighborhood and uses the collected information to compute a Hall graph it is contained in.
Moreover $v$ determines all Hall graphs chosen by other nodes that contain $v$.
Then all nodes simulate (part of) a sequential process.
This process consists in going through all computed Hall graphs in some fixed order and orienting the hyperedges of each Hall graph according to some fixed HSO inside the Hall graph.
If a hyperedge is contained in more than one such Hall graph, then its later orientations will overwrite earlier orientations.
We will see that each node has sufficient information to simulate the part of the sequential process relevant for the orientations of its incident hyperedges.
On a high level, the correctness of the algorithm follows from the fact that for each node $v$, the Hall graph containing $v$ processed last will provide $v$ with an outgoing hyperedge whose orientation will not be overwritten afterwards.

The proof of  \Cref{thm:distributedHall} uses the following two lemmas that we restate here in a simplified manner and provide high-level ideas for their proofs.

\smallskip

\textsc{\Cref{lemma:manyEdges} .}  
\emph{Any vertex is contained in a small diameter subgraph with more edges than vertices.}

\smallskip

\smallskip

\textsc{\Cref{lem:nonEmptyHallExists} .}  
\emph{Any graph with more edges than vertices contains a non-empty Hall subgraph.}

\smallskip

\paragraph{Proof ideas for \Cref{lemma:manyEdges,lem:nonEmptyHallExists}}\Cref{lemma:manyEdges} is proven via a ball growing argument and is the only lemma that explicitly uses $\delta>r$ to bound the diameter of these subgraphs. The proof of \Cref{lem:nonEmptyHallExists} is more involved. If the input graph to the lemma is not a Hall graph, it contains a subset of vertices that violates the Hall condition, i.e., its number of incident edges is smaller than the number of vertices in the set. If we remove these vertices and all of its incident edges from the graph, we can again ask whether the resulting graph is a Hall graph, if not, we repeat the argument. We show that, as we start with more edges with vertices, this process has to terminate with a non-empty Hall graph. 

\paragraph{Proof roadmap of \Cref{thm:distributedHall}}We sketch the roadmap for the proof of \Cref{thm:distributedHall} given both lemmas. Also see \Cref{alg:DistributedHall} for an informal ``pseudocode'' for what we are about to sketch. First, we use \Cref{lemma:manyEdges} to find a small diameter subgraph $G'_0$ around node $v$ that contains more edges than nodes. Then, we use  \Cref{lem:nonEmptyHallExists} to show that $G'_0$ (using $|E(G'_0)|\geq |V(G'_0)|$) has to contain a non-empty Hall graph $H_0$. The issue is that $H_0$ may not contain $v$ itself. Hence, we remove $H_0$ from $G$ and repeat the whole process, carving out further graphs $H_1, H_2, \ldots, H_k$ until, as we prove, the last one eventually has to contain node $v$. 
Due to a technicality (edges of $H_i$ may not be edges of $G$ as the edges may have lost vertices), none of these may be Hall graphs in the original graph $H$, but we can ``lift'' them back to $G$ and then return $H_0\cup H_1\cup\ldots\cup H_k$ (informal notation) as the resulting small diameter Hall graph containing $v$. 

\subsection{Comparison to Related Work}
\label{sec:priorWorkComparison}

\paragraph{Splitting problems} \cite{BGKMU19} studies so-called splitting problems in which the objective is to color the hyperedges in a hypergraph with two colors such that every vertex has one incident hyperedge of each color. One of their algorithms can be adapted to work for HSO, as it merely provides a degree rank reduction method (by carefully removing vertices from hyperedges) until every vertex is adjacent to exactly two hyperedges and no other vertex shares these hyperedges. In their algorithm, the vertices then color these two hyperedges with the two desired colors. Instead, one could simply orient these hyperedges outwards and hence solve HSO.
The asymptotics of their procedure do not improve if one only aims at one remaining hyperedge per vertex. The condition under which the algorithm works is $\delta\geq 6r$ and the runtime of the deterministic algorithm is $O(\log^3 n \poly\log\log n)$.
Our algorithm solves the problem for significantly more difficult combinations of $\delta$ and $r$ and it is faster for every setting of parameters, as for $\delta\geq 6r$ our algorithm runs in $O(\log n)$ time deterministically.

\smallskip
\paragraph{Bipartite Matching.} Surprisingly also the left side saturating matching under the condition $\delta>r$ has been studied before \cite{GKMU18}. It appears as a subroutine in the internals of a distributed algorithm for  $3\Delta/2$-edge coloring simple graphs ($\neq$ hypergraphs). 
The main technical contribution of that algorithm is a result showing that any matching that does not saturate the vertices does have an augmenting path of length $\ell=O(d\log n)$. This yields an algorithm for the maximum matching problem as one can iteratively find a maximal independent set (MIS) of such augmenting paths and apply these. It is well known that such an augmentation increases the length of the shortest existing augmenting path and hence after $\ell$ augmentations the matching has to be a maximum matching.  The MIS of augmenting paths can be modeled as a hypergraph MIS, which at the time could be solved in $\poly(d,\log n)$ rounds with a relatively large exponent in the runtime. Despite recent drastic improvements in that runtime, this subroutine still requires non-negligible polylogarithmic runtime. Even if one could obtain the hypergraph MIS for free in each iteration, this approach inherently requires $\Omega(d^2\log^2 n)$ rounds, rendering our method clearly superior.   

It is known that one requires $\Omega(\sqrt{\log n / \log \log n})$ rounds to find a constant (or even polylogarithmic) approximation to fractional maximum matching~\cite{nearsighted}. 
While this does not directly hold for bipartite graphs, there is a simple reduction. 
Take a bipartite double-cover of the input graph and find a fractional matching. Halving the fractional value on each edge loses only a factor of 2 in the approximation. Then, take the resulting value on each edge as the fractional value on the original graph.
Hence, we have the $\Omega(\sqrt{\log n / \log \log n})$ lower bound also for constant-approximate fractional matching on bipartite graphs.
This lower bound directly carries over to the maximal matching problem.
For maximal matching, we know that it cannot be solved in $o(\Delta + \log \log n / \log \log \log n)$-time and in $o(\Delta + \log n / \log \log n)$ randomized and deterministic, respectively~\cite{BBHORS19MMlowerBound}.
On the upper bound side, classic results give a $O(\log n)$-time algorithm for maximal matching~\cite{luby86, alon86}, which was later improved to $O(\log \Delta + \log \log n)$.
Through reductions to coloring, there is an $O(\Delta + \log^* n)$-time algorithm~\cite{panconesi-rizzi}.
For fractional matching, an $O(\log^2 \Delta)$ algorithm is known~\cite{Fischer17} leaving a polynomial gap to the lower bound.

\begin{table}
\scriptsize
\begin{tabular}{| c | c |c|}
  \hline 
 \textbf{Problem} & \textbf{Runtime} & \textbf{Paper} \\
 \hline 
 \hline 
 $(2\Delta - 1)$-edge col. &   $\Omega(\log^* n)$ & \cite{linial92} \\
 &  $\poly \log \Delta + O(\log^* n)$ & \cite{Balliu22-edge-coloring}  \\ 
  & $\widetilde{O}(\log^2 n)$  & \cite{GG23}\\
 &  $O(\log^2 \Delta \cdot \log n)$ & \cite{Ghaffari-rounding-2021} \\
 \hline
 \hline
 $(2\Delta - 2)$-edge col. & $\Omega(\log n)$ & \cite{CHLPU18}  \\
 \hline
 $(3\Delta/2 + \eps)$-edge col. & $O(\log^6 n)$  & \cite{Su-Vizing2019}+\cite{GGH23}+\cite{Splitting20}\\
 &  $\widetilde{O}(\eps^{-2}\cdot\log^3 n)$ & \textbf{this paper} \\
 &  $\widetilde{O}(\eps^{-2}\cdot\log^2 \Delta \cdot \log n)$ & \textbf{this paper}\\
 \hline
 $(3\Delta/2)$-edge col.  & $\widetilde{O}(\Delta^3\log^4n)$ & \cite{GKMU18} \\
 &  $O(\Delta^2 \cdot \log n)$ & \textbf{this paper}\\
 \hline
  $(\Delta + \eps\Delta)$-edge col. & $\widetilde{O}_{\eps}(\log^6 n)$ & \cite{Su-Vizing2019}+\cite{GGH23} \\
\hline
 $(\Delta + 1)$-edge col. &  $\widetilde{O}(\poly \Delta \log^5 n)$ & \cite{bernshteyn2024fastalgorithmsvizingstheorem,BERNSHTEYN2022319}  \\
 \hline
  \end{tabular}
\begin{tabular}{| c | c |c|}
  \hline 
 \textbf{Problem} & \textbf{Runtime} & \textbf{Paper} \\
 \hline 
 \hline 
 Saturating  &  $\Omega(\log n / \log \log n)$  & HSO LB\\
 Bipartite Matching &  $O(\log_{\delta / r} n)$ & \textbf{this paper}\\
 \hline
 \hline 
 HSO  &  $\Omega(n)$ for $\delta = r$ & \textbf{this paper}\\
 & $\Omega(\log_{\delta/r} n)$ & \cite{Brandt19,BBKO2021hideandseek}\\  
 & $O(\log_{\delta/r} n)$ & \textbf{this paper} \\  
 \hline 
 \hline
 Weak Splitting  &  $\Omega(\log n)$ & \cite{weakSplitting19}\\
 $\delta\geq 6r$ & $\widetilde{O}(\log^3 n)$ & \cite{weakSplitting19}\\
 $\delta \geq 2(r + 1)$ &   $O(\log_{\delta/r} n)$ & \textbf{this paper}\\
 \hline
\end{tabular}

\caption{The tables present a comparison of our results with prior results. \cite{Su-Vizing2019} provides a randomized $\widetilde{O}_{\eps}(\log^3 n)$-round algorithm for $(\Delta+\eps\Delta)$-edge coloring that can be derandomized via network decompositions \cite{RG20,GGH23}. The runtime of the $3\Delta/2$-edge coloring algorithm from \cite{GKMU18} is obtained by using the maximal independent set algorithm from \cite{GG23} as a subroutine for hypergraph matching. We use the notation  $\widetilde{O}$ to hide $\poly\log\log n$ factors. }. 
\label{table:EdgeColoring}
\end{table}

\subsection{Further Related Work}
\label{sec:furtherwork}

\paragraph{Sinkless Orientation.}
As mentioned before, sinkless orientation has played an important role in the development of various lines of research in the last decade.
In particular, sinkless orientation was the first local graph problem with a randomized intermediate complexity provably between $\omega(\log^*n)$ and $o(\log n)$\cite{LLL_lowerbound}, which initiated a stream of publications mapping out the complexity landscape of local graph problems in the \LOCAL model~\cite{naor95,balliu19lcl-decidability,LLL_lowerbound,BBOS18almostGlobal,BHKLOS18lclComplexity,binary_lcls,Brandt19,Brandt2017,Chang20,ChangKP19,Changhierarchy19,balliu21rooted-trees,Chang24}, but also in related models \cite{BCMOS21,MU21,BBFLMOU22}.
Similarly, understanding the complexity of sinkless orientation led to the development of the lower bound technique of round elimination~\cite{LLL_lowerbound,Brandt19}, which has dramatically advanced our understanding of distributed computation and led to a stream of lower bound results for many important problems (see, e.g., \cite{balliusupported24} for a comprehensive overview).

\paragraph{Weak Splitting.}
The sinkless and hypergraph sinkless orientation problems are closely connected to the weak splitting problem. In weak splitting, each node chooses either a blue or red color and the goal is to ensure that each node has at least 1 blue and 1 red node in its neighborhood.
The problem can be formulated through a bipartite graph where one side corresponds to the color choices and one side to the constraints.
It is known that the weak splitting is at least as hard as sinkless orientation as long as the input graph has minimum degree $5$~\cite{BGKMU19} and this works for any rank at least 2.
Furthermore, the splitting problem can be solved through a reduction to HSO if the rank is (at least) twice the degree.
Then, each node can split their hyperedges into two separate problem instances. In both instances, a node then gets at least one outgoing hyperedge.
To obtain the splitting each node can color its outgoing hyperedges using both colors at least once.

\paragraph{Randomized Covering and Packing.}
In recent work, Chang and Li gave randomized $O(\log n / \eps)$-time algorithms for solving integer linear programs (ILP), capturing problems such as maximum independent set, minimum dominating set, and minimum vertex cover~\cite{Chang-covering2023}.
While the results are not comparable to ours, the approaches have a common spirit.
As a building block, they design a ball-carving algorithm for the weak-diameter network decomposition, where a small fraction of nodes are allowed to be left unclustered with high probability.
Informally, the high probability guarantee allows to avoid repeating the process and hence, avoid getting higher polylogs.
In our deterministic case, this roughly corresponds to our technique of combining local solutions into a complete solution that is always correct.
Furthermore, there are almost matching lower bounds for the ILPs~\cite{goeoes14_DISTCOMP, podc16_BA, fractional-improved21}.

\subsection{Outline of the rest of the paper}
In \Cref{sec:preliminaries} we introduce the necessary notation, the concept of multihypergraphs required in our proofs as well as their bipartite representation. In \Cref{sec:HallDistributed}, we prove the Distributed Hall's Theorem (\Cref{thm:distributedHall}). Then, in \Cref{sec:HSO}, we use the theorem to design efficient algorithms for HSO, that is, we prove \Cref{thm:hypergraphSO} and \Cref{thm:randomizedHSO}.  In \Cref{sec:edgeColoring}, we present the edge coloring results (\Cref{thm:edgeColoring,cor:EdgeColoringFast}). The proof of the linear time lower bound for HSO appears in \Cref{sec:lowerbound}. 

\section{Preliminaries} \label{sec:preliminaries}
\paragraph{Multihypergraphs.} A \emph{hypergraph} is a generalization of a graph in which each edge can contain more than two vertices. A multihypergraph is a hypergraph in which the same edge may appear multiple times. If an edge appears more than once, all those copies of the same edge are considered as distinct edges. If not explicitly stated otherwise, all graphs in our work are multihypergraphs and sometimes we abbreviate the bulky term and simply speak of a graph. Similarly, often we just speak of an edge where we refer to a hyperedge. 

\paragraph{Notation.} 
For any (multihyper)graph $G$, let $V(G)$ denote its set of vertices, $E(G)$ the (multi)set of its edges. The $\mathsf{rank}(e)$ of an edge $e$ of a hypergraph is the number of vertices in the edge. The \emph{degree} $\deg(v)$ of a node in a hypergraph is the number of incident edges. We denote by $\delta(G)=\min_{v\in V(G)}\deg(v)$ a hypergraph's minimum degree, and by $\mathsf{rank}(G)=\max_{e\in E(G)}\mathsf{rank}(e)$ its maximum rank. The neighborhood $N^G(v)$ of a node in a hypergraph consists of the vertices that share a hyperedge $v$. For some vertex set $S\subseteq V(G)$ the node-induced subgraph $G[S]=(S,\{e\in E(G)\mid e\subseteq S\}$ is the graph on vertex set $S$ that contains all edges with all endpoints in $S$.
For some integer $x\geq 0$, multihypergraph $G$ and vertex $v\in V(G)$, we use $B^G_x(v)$ to denote the subgraph induced by all vertices in hop-distance at most $x$ in $G$ from $v$. 

\paragraph{Bipartite representation of a multihypergraph.} We sometimes consider the hypergraphs as hypergraphs and sometimes as bipartite graphs with a vertex side and a hyperedge side (where the hyperedge-side nodes correspond to hyperedges of the hypergraph). 
 The bipartite representation $\cB_G=(V,E,F)$ of a hypergraph $G=(V,E)$ consists of vertex sets $V$ and $E$ and has an edge in $F$ between $v\in V$ and $e\in E$ if and only if $v\in e$. As the edge set $F$ is always implicitly given, we often omit it in our notation. For the sake of presentation and to guide the reader through our proofs, we refer to the two partitions of a bipartite graph as the \emph{vertex side} and as the \emph{hyperedge side}. The degree of a node on the vertex side corresponds to the degree of the node in the multihypergraph, and the degree of a hyperedge in the bipartite graph corresponds to the number of vertices in that hyperedge, that is, to its rank. Note that the bipartite representation of a multihypergraph is a simple graph in which each edge has multiplicity one. Multiple parallel edges in a hypergraph $G$ appear as multiple hyperedge nodes on the hyperedge side in $B_G$.
The neighborhood of a vertex $v$ in the bipartite representation $\cB_G$ of a hypergraph $G$ consists of the edges that are incident to $v$. 

\paragraph{Matchings in bipartite graphs.} A matching $M\subseteq F$ in a bipartite graph $\cB=(V,E,F)$ is an independent set of edges. A \emph{node-saturating matching} in a bipartite graph $\cB=(V,E,F)$ is a matching $M\subseteq F$ such that every node on the vertex side is matched. 

\begin{restatable}{observation}{HSOisSaturating}
    \label{obs:bipartiteHypergraph}
Let $G=(V,E)$ be a hypergraph. Any node-saturating matching in $\cB_G=(V,E,F)$ corresponds to an HSO of $G$ and vice versa.
\end{restatable}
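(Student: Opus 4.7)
The plan is to establish the correspondence in both directions by explicitly converting each object into the other, and to check that the resulting structure satisfies the relevant defining property. The two directions are essentially dual and rely on the fact that in $\cB_G$ vertices on the vertex side correspond to vertices of $G$ and vertices on the hyperedge side correspond to hyperedges of $G$, with incidences preserved.

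First, I would handle the direction from a node-saturating matching to an HSO. Suppose $M\subseteq F$ saturates the vertex side. For each $v\in V$ let $e_v\in E$ be the unique hyperedge with $\{v,e_v\}\in M$; existence follows from saturation and uniqueness from the fact that $M$ is a matching. I orient each $e_v$ so that $v$ is its tail. Because $M$ is a matching, no hyperedge $e\in E$ plays the role of $e_v$ for two distinct vertices, so this orientation is well defined on the set $\{e_v : v\in V\}$. Any remaining hyperedge (one unmatched by $M$) can be oriented with an arbitrary incident vertex as its tail. By construction every $v\in V$ has $e_v$ as an outgoing hyperedge, so the resulting orientation is an HSO.

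Next, for the reverse direction, take any HSO of $G$. For each $v\in V$, choose one hyperedge $e_v$ incident to $v$ whose tail is $v$; such a hyperedge exists by the HSO condition. Define $M=\{\{v,e_v\}:v\in V\}\subseteq F$. This is a matching: if $\{v,e\}$ and $\{v',e\}$ both lie in $M$, then $e$ has tail $v$ and tail $v'$ in the HSO, but in an HSO each hyperedge has exactly one tail, so $v=v'$. By construction $M$ saturates $V$.

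There is essentially no technical obstacle here; the content is entirely definitional and the only point to double check is the exactly-one-tail condition in the definition of HSO, which is precisely what makes the map from orientations back to matchings injective on the ``outgoing hyperedge'' choices and ensures $M$ is indeed a matching rather than a multigraph. The arbitrary choices (orientation of unmatched hyperedges in one direction, choice of outgoing hyperedge when more than one is available in the other) mean the correspondence is not quite a bijection between specific objects, but it is a bijection at the level of existence, which is all the observation claims.
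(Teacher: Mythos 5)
Your proof is correct and follows essentially the same route as the paper's: both directions go through the direct correspondence between a matching edge $\{v,e\}$ and designating $v$ as the outgoing endpoint (tail) of $e$, using the matching property and the one-tail-per-hyperedge constraint interchangeably to establish well-definedness. Your write-up is slightly more explicit about checking that the resulting set of pairs is genuinely a matching, but the content is the same.
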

\begin{proof}
    Given a node-saturating matching $M$ in $\cB_G=(V,E,F)$,  each hyperedge $e$ is oriented outwards from every node $v$ if $(e,v)\in M$. As $M$ is node-saturating, we obtain that every node $v$ of the hypergraph $G$ has an outgoing hyperedge. All other hyperedges are oriented arbitrarily satisfying the constraint that each hyperedge is outgoing for at most one of its nodes. 

    By definition, HSO ensures that each node $v$ is contained in at least one hyperedge, where $v$ is incoming and all other nodes are outgoing.
    Hence, a node-saturating matching is obtained by each node $v$ selecting exactly one hyperedge, where $v$ is incoming and all other nodes are outgoing.
\end{proof}

We will use the following ``one-sided'' version of Hall's theorem.
\begin{theorem}[Hall's Theorem~\cite{Halls}]
\label{thm:Hall}
A bipartite graph $\cB(V,E,F)$ admits a node-saturating matching if and only if $|N(S)|\geq |S|$ for all $S\subseteq V$.
\end{theorem}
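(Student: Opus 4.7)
The plan is to prove both directions. The ``only if'' direction is immediate: given a $V$-saturating matching $M$ and any $S \subseteq V$, the map sending each $v \in S$ to its unique $M$-partner in $E$ is an injection into $N(S)$, so $|N(S)| \geq |S|$.

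For the ``if'' direction I would induct on $|V|$, with the vacuous base case $|V|=0$. In the inductive step I would split on whether Hall's condition holds with strict slack on every non-empty proper subset of $V$, or whether there exists a \emph{tight} non-empty proper subset $S_0 \subsetneq V$ with $|N(S_0)| = |S_0|$. In the strict-slack case, I would pick any $v \in V$, match it to an arbitrary $e \in N(v)$, and delete $v$ and $e$ from $\cB$; for every non-empty $S' \subseteq V \setminus \{v\}$, strict slack applied to $S'$ (still a non-empty proper subset of $V$) gives $|N(S')| \geq |S'|+1$, and the reduced neighborhood loses at most $e$, so its size is still at least $|S'|$. The induction hypothesis then supplies a $(V\setminus\{v\})$-saturating matching that we extend by the edge $\{v,e\}$.

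In the tight case, let $\cB_1$ be the subgraph of $\cB$ induced on $S_0 \cup N(S_0)$ and let $\cB_2$ be the subgraph on $(V \setminus S_0) \cup (E \setminus N(S_0))$ together with the edges of $\cB$ between these two sides. Hall's condition transfers to $\cB_1$ because $N_{\cB_1}(S) = N(S)$ for every $S \subseteq S_0$. It also transfers to $\cB_2$: for any $T \subseteq V \setminus S_0$, applying Hall's hypothesis to $T \cup S_0$ and using $N(T \cup S_0) \subseteq N_{\cB_2}(T) \cup N(S_0)$ gives
\[
|T| + |S_0| \;\leq\; |N(T \cup S_0)| \;\leq\; |N_{\cB_2}(T)| + |N(S_0)| \;=\; |N_{\cB_2}(T)| + |S_0|,
\]
so $|N_{\cB_2}(T)| \geq |T|$. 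Since $S_0$ is a non-empty proper subset of $V$, both $\cB_1$ and $\cB_2$ have strictly fewer left-hand vertices than $\cB$, so the induction hypothesis produces an $S_0$-saturating matching $M_1$ in $\cB_1$ and a $(V \setminus S_0)$-saturating matching $M_2$ in $\cB_2$. Because $M_1$ and $M_2$ use disjoint hyperedge-side vertices (the former inside $N(S_0)$, the latter inside $E \setminus N(S_0)$), their union $M_1 \cup M_2$ is a valid $V$-saturating matching in $\cB$.

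The main obstacle is the transfer of Hall's condition to $\cB_2$ in the tight case, where the equality $|N(S_0)| = |S_0|$ is used crucially in the displayed inequality; once the two reduced instances satisfy Hall's condition, everything else is routine bookkeeping about the disjointness of the hyperedge-side supports of $M_1$ and $M_2$.
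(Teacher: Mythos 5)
Your proof is correct, and it is the standard Halmos–Vaughan argument: induct on $|V|$, distinguishing the case where every non-empty proper subset has strict slack (remove one matched pair and apply the induction hypothesis) from the case where a tight set $S_0$ exists (split along $S_0 \cup N(S_0)$ and $(V\setminus S_0)\cup(E\setminus N(S_0))$, verify Hall's condition transfers to both pieces, and glue the two saturating matchings using disjointness of their hyperedge-side supports). The small points worth pausing on are all handled correctly: in the strict-slack branch $S'$ remains a non-empty \emph{proper} subset of $V$ because $v\notin S'$, so deleting the single edge-vertex $e$ costs at most one from $|N(S')|\geq|S'|+1$; in the tight branch the displayed chain correctly uses $|N(S_0)|=|S_0|$ to cancel, and $M_1\cup M_2$ is a matching because the two matchings live on disjoint left sides and disjoint right sides. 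The paper itself does not reprove Hall's theorem; it states it with a citation to~\cite{Halls}, so there is no in-paper proof to compare against. Your write-up is a faithful and complete proof of the cited statement.
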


Recall that a hypergraph has \emph{rank} $r$ if each hyperedge contains at most $r$ nodes. 

\begin{lemma}
\label{lem:ExistentialEdgeGrabbing}
    Any  hypergraph with minimum degree $\delta$ and maximum rank $r\leq \delta$ admits an HSO. 
\end{lemma}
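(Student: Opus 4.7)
The plan is to combine Hall's theorem (\Cref{thm:Hall}) with the correspondence between HSO and node-saturating matchings in the bipartite representation (\Cref{obs:bipartiteHypergraph}). Concretely, I will show that $\cB_G$ satisfies Hall's condition on the vertex side, which yields a node-saturating matching, which in turn yields an HSO.

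First I would fix an arbitrary subset $S \subseteq V$ of the vertex side of $\cB_G$ and do a standard double-counting of the edges of $\cB_G$ that are incident to $S$. On the one hand, every $v \in S$ has bipartite-degree equal to its hypergraph degree, which is at least $\delta$, and all its neighbors in $\cB_G$ lie in $N(S)$; this gives the lower bound $|\{(v,e) \in F : v \in S\}| \geq \delta |S|$. On the other hand, every hyperedge-node $e \in N(S)$ has bipartite-degree equal to $\mathsf{rank}(e) \leq r$, so the same edge set has size at most $r|N(S)|$.

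Combining the two bounds gives $\delta |S| \leq r |N(S)|$, hence $|N(S)| \geq (\delta/r)|S| \geq |S|$, where the last inequality uses the hypothesis $\delta \geq r$. Since $S$ was arbitrary, Hall's condition holds on the vertex side of $\cB_G$, and \Cref{thm:Hall} produces a node-saturating matching $M \subseteq F$. Applying \Cref{obs:bipartiteHypergraph} converts $M$ into an HSO of $G$: each vertex $v$ is incoming to the unique hyperedge $e$ with $(v,e) \in M$, which is oriented outgoing for all of its other incident vertices; the remaining hyperedges can be oriented arbitrarily (choosing any single vertex as the unique ``incoming'' side).

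There is essentially no obstacle here beyond setting up the counting correctly; the statement is really a sanity check showing that the condition $\delta \geq r$ suffices for existence, in contrast to the distributed setting where the stronger condition $\delta > r$ (or $\delta \geq (1+\eps)r$) is needed to obtain efficient algorithms, as discussed around \Cref{thm:linearLowerBound,thm:hypergraphSO}.
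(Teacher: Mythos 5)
Your proof is correct and follows exactly the same route as the paper's: double-count the bipartite edges between $S$ and $N(S)$ to get $|N(S)| \geq (\delta/r)|S| \geq |S|$, then invoke Hall's theorem and \Cref{obs:bipartiteHypergraph}. The paper's version is just terser; the content is identical.
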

\begin{proof}
For a hypergraph $G=(V,E)$ let $\cB_G(V,E,F)$ be its bipartite representation. Consider an arbitrary set of vertices $S\subseteq V$. As every node in $S$ has $\delta$ incident edges, and each edge contains at most $r$ vertices we obtain $|N(S)|\geq \delta/r \cdot |S|$. By Hall's Theorem (\Cref{thm:Hall}), $\cB_G=(V,E,F)$ has a node-saturating matching which implies an HSO on $G$ due to \Cref{obs:bipartiteHypergraph}.
\end{proof}

\section{Distributed Hall's Theorem}
\label{sec:HallDistributed}
In this section, we prove our distributed version of Hall's Theorem, i.e., \Cref{thm:distributedHall}.
The proof of the theorem follows the roadmap as explained in \Cref{sec:nutshell}.

First, we begin with one of our central definitions, that is, a Hall graph. 
\begin{definition}[Hall graph] A multihypergraph $H$ is a \emph{Hall graph} if $\mathcal{B}_H$ admits a node-saturating matching, or in other words, if $H$ has an HSO.
\end{definition}

The next lemma is the heart of our approach; it shows that every vertex is contained in a small-diameter Hall graph.

\begin{restatable}{lemma}{lemManyEdges}
\label{lemma:manyEdges}
 For $\delta\geq r>0$ let $x(n)= \log_{\frac{\delta-1}{r-1}}n$ and let $G=(V,E)$ be a multihypergraph with minimum degree $\delta$ and maximum rank $r$.
Then for any $v\in V$ there exists a subgraph $G'(v)\subseteq B^G_{x(n)}(v)$  with $v\in V(G'(v))$ and  $|E(G'(v))|\geq |V(G'(v))|$. 
\end{restatable}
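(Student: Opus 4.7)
My plan is a ball-growing argument combined with a degree count. Write $V_i := V(B^G_i(v))$ and $E_i := E(B^G_i(v))$ for the vertex and edge sets of the hop-ball of radius $i$ around $v$, and set $n_i := |V_i|$, $m_i := |E_i|$. It suffices to show that $m_i \geq n_i$ for some integer $i \leq x(n)$, since for the smallest such $i$ the induced ball $G'(v) := B^G_i(v)$ then contains $v$ and has all the required properties. I would argue by contradiction, assuming $m_i < n_i$ for every integer $i$ with $0 \leq i \leq x(n)$.

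The core of the argument is the claim that under this assumption, $n_i > \frac{\delta - 1}{r - 1}\, n_{i-1}$ for every such $i \geq 1$. The key observation is that every hyperedge incident to $V_{i-1}$ is fully contained in $V_i$, because its remaining vertices must then be at hop-distance at most $i$ from $v$. Combining this with the lower bound $\sum_{u \in V_{i-1}} \deg(u) \geq \delta n_{i-1}$, each edge of $E_{i-1}$ contributes at most its rank $r$ to the degree sum, each edge of $E_i \setminus E_{i-1}$ incident to $V_{i-1}$ contributes at most $r - 1$ (as it must have a vertex outside $V_{i-1}$), and all other edges contribute $0$. This yields $\delta n_{i-1} \leq r\, m_{i-1} + (r-1)(m_i - m_{i-1}) = m_{i-1} + (r-1)\, m_i$, and plugging in the contradiction assumption $m_{i-1} \leq n_{i-1} - 1$ and $m_i \leq n_i - 1$ rearranges into the claimed growth rate.

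Iterating this from $n_0 = 1$ gives $n_i > \bigl(\tfrac{\delta-1}{r-1}\bigr)^i$, so at $i = x(n) = \log_{(\delta-1)/(r-1)} n$ we would obtain $n_i > n$, contradicting $n_i \leq n$. Hence some $i \leq x(n)$ satisfies $m_i \geq n_i$, which finishes the proof. The main subtlety I expect is getting the degree count right: it is critical that every edge touching $V_{i-1}$ lies entirely in $V_i$ so that the sum is controlled by $m_i$ rather than a larger boundary quantity. Degenerate cases need only a separate easy word: when $r = 1$ every vertex has a self-loop and $G'(v) = \{v\}$ with its incident edges already works, and when $\delta = r$ the base of the logarithm equals $1$ and the claimed bound is vacuous, which is consistent with the lemma's hypothesis $\delta \geq r$.
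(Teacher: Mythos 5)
Your proof is correct and shares the ball-growing spirit of the paper's argument, but the route is genuinely different in two respects. First, the paper separates the two ingredients: it first proves a purely vertex-count pigeonhole (Claim~3.3) to find a radius $x$ with $|V(B^G_{x+1}(v))|\le\frac{\delta-1}{r-1}|V(B^G_x(v))|$, and only then brings in edges. You instead fold the two together in a single contradiction: assuming every ball up to radius $x(n)$ is a forest forces the vertex count to grow faster than $\frac{\delta-1}{r-1}$ per level. Second, the output subgraph differs. The paper's $G'(v)$ is \emph{not} a ball: it takes all hyperedges with at least one endpoint in $B^G_x(v)$ together with all their endpoints, so $V(G')$ reaches into level $x+1$; this guarantees every vertex of $B^G_x(v)$ has all $\delta$ of its edges present, which makes the edge count easy. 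You output the honest induced ball $B^G_i(v)$, whose boundary vertices may be missing edges, and compensate by double-counting $\sum_{u\in V_{i-1}}\deg_G(u)=\sum_e|e\cap V_{i-1}|$ and exploiting the observation you rightly flag as critical — that any hyperedge touching $V_{i-1}$ lies entirely in $V_i$ — to bound the sum by $m_{i-1}+(r-1)m_i$ with no dangling boundary term. The two constructions land at essentially the same radius; both arguments implicitly round $x(n)$ to an integer and therefore actually find the subgraph within $B^G_{\lceil x(n)\rceil}(v)$ (or one level further in the paper's case), an off-by-one that is harmless for \Cref{thm:distributedHall}. Your version buys a cleaner output object (an actual ball) at the cost of a slightly more delicate contribution count; the paper's version buys a trivial edge count at the cost of a $G'$ that is not an induced subgraph. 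Your handling of the degenerate cases $r=1$ and $\delta=r$ is also sound.
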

\begin{proof} We begin with a standard ball growing argument that only takes the number of vertices into account. 
\begin{claim}
\label{claim:findVertexBall}
There exists $0\leq x'\leq x(n)$ such that $|V(B^G_{x'+1}(v))|\leq \frac{\delta-1}{r-1}|V(B^G_{x'}(v))|$ holds. 
\end{claim}
\begin{proof}
Assume for contradiction that no $0\leq x'\leq x(n)$ satisfies the condition. Let $\alpha=\frac{\delta-r}{r-1}$ and observe that $\frac{\delta-1}{r-1}=1+\alpha$. Then, we have
\begin{align*}
|V(B_{x(n)}^G(v))|> (1+\alpha)|V(B_{x(n)-1}^G(v))|\geq (1+\alpha)^{x(n)}|V(B_0^G(v))|= \left(\frac{\delta-1}{r-1}\right)^{x(n)}=n~.
\end{align*}
As $B_{x(n)}^G(v)\subseteq G$ can contain at most $n$ vertices, this is a contradiction. 
\renewcommand{\qed}{\ensuremath{\qquad\blacksquare}}
\end{proof}

First apply \Cref{claim:findVertexBall} to find some $0\leq x\leq x(n)$ satisfying the conditions in the claim. 
Let $N=|V(B^G_{x}(v))|$ denote the number of nodes of  $B^G_{x}(v)$, and $N'=|V(B^G_{x+1}(v))|-|V(B^G_{x}(v))|$ the number of nodes of  $B^G_{x+1}(v)$ that are not nodes of $B^G_{x}(v)$.
By the properties of the \Cref{claim:findVertexBall}, we have
\begin{align}
    \label{eqn:NNprime}
   N + N' \leq \frac{\delta-1}{r-1} \cdot N. 
\end{align}

Now let $G'=G'(v)$ be the  subgraph that contains all edges with at least one endpoint in $V(B_x^G(v))$. Note that the vertex set of $G'$ is a subset of $V(B_{x+1}^G(v))$. 

For any node $u \in V(B^G_{x}(v))$, the number of hyperedges in $G'$ incident to $u$ is at least $\delta$.
For any node $u \in V(B^G_{x+1}(v)) \setminus V(B^G_{x}(v))$, the number of hyperedges in $G'$ incident to $u$ is at least $1$.
Hence, the sum of the ranks of the hyperedges in $G'$ is at least $\delta \cdot N + N'$. Thus, $G'$ has at least $\frac{N\delta+N'}{r}$ hyperedges. All vertices of $G'$ live in $B^G_{x+1}(v)$, and hence $V(G')\leq N+N'$. 
    
We have that the number of nodes $V(G')$ is smaller or equal to the number of edges $E(G')$ if the following series of equivalent statements holds
\begin{align*}
V(G')\leq N+N'&\leq \frac{N\delta+N'}{r}\leq E(G') &  \mid \text{multiply by $r$}\\
    r(N+N')&\leq N\delta+N'  &  \mid \text{subtract $N+N'$}\\
    (r-1)(N+N')&\leq (\delta-1) N &  \mid \text{divide by $r-1>0$}\\
    N+N'&\leq \frac{\delta-1}{r-1} \cdot N &
\end{align*} 
The last statement holds due to Inequality~\ref{eqn:NNprime}, implying that $V(G')\leq E(G')$.
\end{proof}

Next, we turn our attention to the useful notion of a Hall violator, and then proceed with one of the lemmas outlined in the roadmap for the proof of \Cref{thm:distributedHall}.

\begin{definition}[Hall violator]
A \emph{Hall violator} of a multihypergraph $G=(V,E)$ is a set of nodes $S\subseteq V$ such that $|N_{\cB_G}(S)|<|S|$ holds, or in other words, such that the number of edges with at least one of its vertices in $S$ is strictly smaller than $S$.
\end{definition}

\begin{claim}
\label{claim:HallViolatorExists}
Any multihypergraph that is not a Hall graph contains a Hall violator. 
\end{claim}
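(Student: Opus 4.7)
The plan is to invoke Hall's Theorem (stated as Theorem 2.2 in the excerpt) in its contrapositive form. By definition, $G$ being a Hall graph means that its bipartite representation $\mathcal{B}_G = (V, E, F)$ admits a node-saturating matching. Therefore, $G$ not being a Hall graph is equivalent to $\mathcal{B}_G$ failing to admit a node-saturating matching.

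Hall's Theorem asserts that $\mathcal{B}_G$ admits a node-saturating matching if and only if $|N(S)| \geq |S|$ for every $S \subseteq V$. Taking the contrapositive of the ``only if'' direction, the failure of $\mathcal{B}_G$ to have such a matching implies the existence of some $S \subseteq V$ with $|N_{\mathcal{B}_G}(S)| < |S|$. This $S$ is precisely a Hall violator, as defined immediately above the claim. Hence the claim follows.

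I do not anticipate any obstacle here: the statement is essentially the definition of ``Hall graph'' combined with the contrapositive of Hall's Theorem, with no additional combinatorial work required. The only thing worth double-checking is that the definition of $N_{\mathcal{B}_G}(S)$ matches the quantity appearing in Theorem 2.2 of the excerpt, namely the set of hyperedge-side neighbors of $S$ in $\mathcal{B}_G$, which by the bipartite representation corresponds exactly to the hyperedges of $G$ incident to at least one vertex of $S$; this is stated in the paper's own ``in other words'' clause of the Hall violator definition, so no reinterpretation is needed.
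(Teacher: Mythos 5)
Your proof is correct and is essentially identical to the paper's: both simply apply the contrapositive of Hall's Theorem (\Cref{thm:Hall}) to the bipartite representation $\cB_G$ and observe that the resulting set $S$ with $|N_{\cB_G}(S)| < |S|$ is a Hall violator by definition. No further comment is needed.
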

\begin{proof}
By Hall's Theorem (\Cref{thm:Hall}), if a multihypergraph $G$ is not a Hall graph, then there is some set of nodes $S\subseteq V$ such that $|N_{\cB_G}(S)|<|S|$.
\end{proof}

\begin{restatable}{lemma}{lemNonEmptyHallExists}
\label{lem:nonEmptyHallExists}
Any non-empty multihypergraph $G=(V,E)$ with $|V|\leq |E|$ contains a non-empty Hall subgraph.
\end{restatable}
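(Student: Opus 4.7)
The plan is to iteratively remove Hall violators until the remaining graph satisfies Hall's condition. If $G$ itself is already a Hall graph, we are done immediately since $G$ is non-empty. Otherwise, \Cref{claim:HallViolatorExists} provides a Hall violator $S\subseteq V$ with $|N_{\mathcal{B}_G}(S)|<|S|$, and I would pass to the induced subgraph $G[V\setminus S]$, which by definition keeps exactly those edges with all vertices in $V\setminus S$, i.e., it discards exactly the edges in $N_{\mathcal{B}_G}(S)$.

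The crux is verifying that this step preserves (and in fact strengthens) the hypothesis $|V|\leq|E|$. The new vertex count is $|V|-|S|$ and the new edge count is $|E|-|N_{\mathcal{B}_G}(S)|$, so
$$|E(G[V\setminus S])|-|V(G[V\setminus S])|\;=\;(|E|-|V|)\,+\,(|S|-|N_{\mathcal{B}_G}(S)|)\;>\;|E|-|V|\;\geq\;0.$$
Hence the gap between edges and vertices strictly grows at each step. Moreover $S\neq V$: otherwise $N_{\mathcal{B}_G}(S)=E$ and the Hall-violator inequality would give $|E|<|V|$, contradicting our hypothesis. In particular the remaining vertex set (and thus the remaining graph) is non-empty.

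Since any Hall violator satisfies $|S|\geq 1$, the vertex count strictly decreases at every iteration, so the process terminates after finitely many steps. By the argument above it cannot terminate with an empty vertex set, so it must terminate with a non-empty induced subgraph $H\subseteq G$ that admits no Hall violator, i.e., a non-empty Hall graph by the contrapositive of \Cref{claim:HallViolatorExists}. This $H$ is the desired non-empty Hall subgraph of $G$.

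I do not anticipate a genuinely difficult step here: the only thing that really requires care is keeping the subgraph non-empty throughout the recursion, and this is handled by the strict inequality $|N_{\mathcal{B}_G}(S)|<|S|$ in the definition of a Hall violator together with the assumption $|V|\leq|E|$, which together preclude $S=V$ at every iteration.
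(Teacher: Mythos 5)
Your proof is correct and follows essentially the same argument as the paper: iteratively strip away a Hall violator (taking the induced subgraph, which is exactly what the paper constructs), observe that because $|N_{\cB_G}(S)|<|S|$ each removal strictly widens the gap $|E|-|V|$, and conclude termination at a non-empty Hall subgraph. The only cosmetic difference is that you argue non-emptiness by directly showing $S\neq V$ at every step, whereas the paper phrases it as the invariant $|V|<|E|$ ruling out an empty terminal graph — same idea, slightly different packaging.
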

\begin{proof}
\iffalse
\begin{algorithm}
\caption{Remove Violators}\label{alg: violators}
\begin{algorithmic}
\setlength\itemsep{1.5pt}
\State Set $G_0 \coloneqq G$ and $i \coloneqq 0$.
\While{$G_i$ is not a Hall graph}
    \State Set $G_{i + 1} \coloneqq G_i \setminus G_{i, \textrm{violator}}$
    \State Set $i \coloneqq i + 1$
\EndWhile
\State \Return $G_i$
\end{algorithmic}
\end{algorithm}
\fi
In this existential proof, we iteratively produce a decreasing sequence $G=R_0\supseteq R_1\supseteq\ldots \supseteq R_k\neq\emptyset$ of subgraphs of $G$ such that $R_k$ is the desired non-empty Hall subgraph. We start with $R_0=G$. Given, some $R_i$, the procedure stops if $R_i$ is a Hall graph. Otherwise, we construct $R_{i+1}$ from $R_i$ as follows. By \Cref{claim:HallViolatorExists} there exists some node set $S_i$ that is a Hall violator of $R_i$. In that case let $G_{i+1}=(V(G_i)\setminus S_i, E(R_i)\setminus \{e\in E(R_i)\mid e\cap S_i\neq\emptyset\})$ be the subgraph of $R_i$ that we obtain by removing all vertices in $S_i$ and all edges with at least one endpoint in $S_i$ from the multihypergraph $R_i$. As $S_i$ is a Hall violator, this process removes strictly more vertices than edges and by induction hypothesis ($(V(R_i)\leq E(R_i)$) we obtain $V(R_{i+1})<E(R_{i+1})$ for all $i\geq 0$. Hence, the process cannot end with an empty graph and returns a non-empty Hall graph that is a subgraph of $G$. 
\end{proof}

Before finally starting with the proof of \Cref{thm:distributedHall}, we need a last technical definition.

\begin{definition}
    For a set $V$ of vertices and a multiset $E = \{ e_1, \dots, e_k \}$ of edges, let $E|_V$ denote the multiset $\{ e'_1, \dots, e'_k\}$ defined by $e'_i := e_i \cap V$ for any $1 \leq i \leq k$.
\end{definition}

Now we are set to prove the main technical contribution of our work.

\medskip
\noindent
\textit{Proof of \Cref{thm:distributedHall}.}\quad
Let $G = (V, E)$ be an $n$-node graph, and consider a vertex $v \in V$.
Set $x := x(n) :=  \log_{\frac{\delta - 1}{r - 1}} n \leq \log_{\frac{\delta}{r}} n$.
We prove the theorem by providing a (sequential) algorithm that computes a Hall graph with the desired properties.
    
We start by defining a sequence $G = G_0, G_1, G_2, \dots, G_k$ of multihypergraphs on decreasing sets of vertices, all of which will contain $v$ and satisfy that the minimum degree $\delta(G_i)$ is strictly larger than the maximum rank $r(G_i)$ and a sequence $H_0,\ldots,H_k$ of Hall (multihyper)graphs such that $H_i\subseteq G_i$. Informally, our algorithm returns $H=H_0\cup \ldots\cup H_k$  and we show that $H$ is a small-diameter Hall graph that contains $v$; the formal construction of the returned graph $H$ appears at the very end of this proof.

\begin{algorithm}
\caption{Distributed Hall's Theorem (informal notation)}\label{alg: violators}
\begin{algorithmic}
\setlength\itemsep{1.5pt}
\State Set $i \coloneqq 0$.
\State Set $G_0(v) \coloneqq G\left[ B_{x(n)}^{G} (v) \right]$.
\Do ~(increase $i$ in each iteration)
 \State \Cref{lemma:manyEdges}: Find a subgraph $G_i'(v)\subseteq G_i(v)$ with $v\in V(G_i')$,
  and  $|E(G_i'(v))|\geq| V(G_i'(v))|$.
 \State \Cref{lem:nonEmptyHallExists}:  Find Hall subgraph $H_i\subseteq G_i'(v)$
 \State Set $G_{i + 1}(v) \coloneqq G_i(v) \setminus H_i$.
\doWhile{$v \not \in H_i$}\\
\Return $H_0\cup\ldots \cup H_{i}$
\end{algorithmic}
\label{alg:DistributedHall}
\end{algorithm}

\paragraph{Constructing $G_{i+1}$ from $G_i$. }For any $0 \leq i \leq k - 1$, we obtain $G_{i + 1}$ from $G_i$ as follows. First, use  \Cref{lemma:manyEdges} to find subgraph $G'_i$ of $G_i$ that has diameter at most $x$, contains $v$, and satisfies $E(G'_i) \geq V(G'_i)$.   Then, use \Cref{lem:nonEmptyHallExists} to find a nonempty subgraph $H_i$ of $G'_i$ (and therefore also of $G_i$) that is a Hall graph. We can apply \Cref{lem:nonEmptyHallExists} because  $G'_i$ is nonempty as it contains $v$.
If $v \in V(H_i)$, set $k := i$ (i.e., the construction of the sequence of graphs is concluded) and abort computing $G_{i + 1}$.
Otherwise, define $G_{i + 1}$ by setting $V(G_{i + 1}) := V(G_i)\setminus V(H_i)$ and $E(G_{i + 1}) := \left(E(G_i) \setminus E(H_i)\right)|_{V(G_{i + 1})}$.
In other words, we obtain $G_{i + 1}$ from $G_i$ by first removing all vertices and edges that also occur in $H_i$ and then removing \emph{from each remaining edge} all vertices that have been removed from the graph. 

\paragraph{Why can we apply \Cref{lemma:manyEdges} on \boldmath$G_i$?} We note that the fact that $G_{i + 1}$ is computed only if $v \notin V_{H_i}$ holds implies that $V(G_{i+1}) = V(G_i)\setminus V(H_i)$ contains $v$ (as $v \in V(G_i)$ by inductive assumption).    
 The degree of any node $v \in V(G_{i + 1})$ is the same in $G_{i + 1}$ and $G_i$, because for any edge of $G_i$ that is not present in $G_{i+1}$ we also remove all of its vertices when removing the subgraph $H_i$. The rank of an edge $e \in E(G_{i+1})$ is at most as large the rank of the corresponding edge of $G_i$. Thus, we obtain $\delta(G_{i + 1}) \geq \delta(G_i)$ and $r(G_{i + 1}) \leq r(G_i)$, which implies $\delta(G_{i + 1}) > r(G_{i + 1})$, due to $\delta(G_i) > r(G_i)$.
This shows inductively that the $G_i$s are well-defined multihypergraphs whose minimum degree exceeds the maximum rank. 
   
Furthermore, since the construction of the graph sequence ensures that $|V(G_{i+1})| < |V(G_i)|$ (as $H_i$ is nonempty), we know that the construction terminates at some point (due to every $V(G_i)$ being contained in $V(B_{x}^G(v))$), implying that the sequence is indeed finite and $k$ well-defined.
    
Note that the definitions of the $G_i$ depend on our choices in the described sequential algorithm.
If desired, those choices can be fixed by taking the lexicographically first object out of a choice of objects whenever there is a choice.

\paragraph{Keeping track of edges.} As we later want to ``lift' the graph $H_i$ back to the original graph $G$, we formally keep track of the modifications of the edges (consisting of removing endpoints) via a projection (aka function) $\pi_i: E(G_i)\rightarrow E(G_{i+1})$: for each edge $e \in E(G_i)$, $\pi_i(e)$ denotes the projection of $e$ to the vertices of $V(G_{i+1})$.
In particular, $\pi_i$ is a bijection between the two multisets $E(G_i)$ and $E(G_{i + 1})$.
For an edge $e \in E(G_{i + 1})$, we will call the edge $\pi_0^{-1} \circ \pi_1^{-1} \circ \dots \circ \pi_{i}^{-1}(e) \in E(G)$ the \emph{original} edge corresponding to $e$, where $\pi_0^{-1} \circ \pi_1^{-1} \circ \dots \circ \pi_{i-1}^{-1}(\cdot)$ is the function defined by iteratively applying $\pi_{i-1}^{-1}, \dots, \pi_0^{-1}$ (in that order).
In other words, the \emph{original} edge corresponding to $e\in E(G_i)$ is the edge from $E(G)$ that resulted in $e$ by removing vertices (via applying the functions $\pi_j$) during the iterative construction of our graph sequence.

\paragraph{Computing the final Hall graph $H$:}    Now we explain how we derive the desired Hall $H$ graph from the computed Hall graphs $H_0, \ldots, H_k$. To this end, for each $0 \leq i \leq k$, define $E^*(H_i) := \{ \pi_0^{-1} \circ \pi_1^{-1} \circ \dots \circ \pi_{i-1}^{-1}(e) \mid e \in E(H_i) \}$.
    In other words, $E^*(H_i)$ is the set of original edges corresponding to the edges contained in $E(H_i)$.
    In particular, $E^*(H_i) \subseteq E(G)$. We define the  final Hall graph   as
    \begin{align*}
        H := \left( \bigcup_{0 \leq i \leq k} V(H_i), \bigcup_{0 \leq i \leq k} E^*(H_i) \right) \ .
    \end{align*} 
    We continue with proving that $H$ is the desired Hall graph. 
   \textbf{$H$ is a well-defined multihypergraph:} The construction of the $G_i$s ensures that for each $0 \leq j \leq k$, and each $e \in E(H_j)$, the original edge $e^*\in E(G)$ corresponding to $e$ contains only endpoints in $\bigcup_{0 \leq i \leq j} V(H_i)$. Thus, for  any edge $e^* \in \bigcup_{0 \leq i \leq k} E^*(H_i)$, all endpoints of $e^*$ are contained in $\bigcup_{0 \leq i \leq k} V(H_i)$, showing that $H$ is a well-defined hypergraph.     \textbf{\boldmath$v\in V(H)$:}
    From the construction of the $G_i$, it follows that $v \in V(H_k)$, i.e., $v$ is contained in the vertex set of the $H_i$ computed last.
    Hence, $v \in V(H)$ as desired.
    Our next objective is to show that $H$ is indeed a Hall graph.
\textbf{$H$ is a subgraph of $B_x^G(v)$:} We have already reasoned that $V(H)\subseteq V(G)$ and $E(H)\subseteq E(G)$. Hence, the claim follows from the facts that, for any $0 \leq i \leq k$, $H_i$ is a subgraph of $G'_i$, and $G'_i$ has diameter at most $x$ and contains $v$.

\textbf{$H$ is a Hall graph:}
    We start by observing that, by the construction of the $V(G_i)$, we have $V(H_i) \cap V(H_j) = \emptyset$ for any $0 \leq i < j \leq k$.
    Moreover, by the construction of the $E(G_i)$, we know that for any $0 \leq i < j \leq k$ and any two edges $e \in E(H_i)$ and $e' \in E(H_j)$, the ``original'' two edges $\pi_0^{-1} \circ \pi_1^{-1} \circ \dots \circ \pi_{i-1}^{-1}(e)$ and $\pi_0^{-1} \circ \pi_1^{-1} \circ \dots \circ \pi_{j-1}^{-1}(e')$ corresponding to $e$ and $e'$, respectively, in $E(G)$ are distinct.
    An analogous statement holds for the case that $e$ and $e'$ are distinct edges from the same $E(H_i)$.
    (In all of these statements, whenever we say ``distinct'', the two edges can be parallel edges (i.e., have the exact same set of endpoints) but cannot refer to the same element in the multiset.)
    Furthermore, by construction, each $H_i$ is a Hall graph, which implies for each $0 \leq i \leq k$ that the bipartite graph $\mathcal{B}_G(V(H_i),E(H_i))$ admits a node-saturating matching.
    For each $0 \leq i \leq k$, fix such a node-saturating matching on $\mathcal{B}_G(V(H_i),E(H_i))$, and for each node $v \in V(H_i)$, let $M_i(v)$ denote the edge $e \in E(H_i)$ that $v$ is matched to.
    Now we define a node-saturating matching on $\mathcal{B}_G(\bigcup_{0 \leq i \leq k} V(H_i), \bigcup_{0 \leq i \leq k} E^*(H_i))$ by matching any $v \in \bigcup_{0 \leq i \leq k} V(H_i)$ to an edge $e \in \bigcup_{0 \leq i \leq k} E^*(H_i)$ as follows.
    Let $i$ be the (uniquely defined) index such that $v \in V(H_i)$.
    Set $e := \pi_0^{-1} \circ \pi_1^{-1} \circ \dots \circ \pi_{i-1}^{-1}(M_i(v))$, i.e., we match $v$ to the original edge of $G$ corresponding to $v$'s matching partner in the matching on $H_i$.

    By the above considerations, it follows that the defined matching is a matching, and as it is also node-saturating on $\mathcal{B}_G(\bigcup_{0 \leq i \leq k} V(H_i), \bigcup_{0 \leq i \leq k} E^*(H_i))$, we obtain that $H$ is indeed a Hall graph. \qed

\iffalse

\subsection{Generalized Hall's Theorem}
\ym{Probably we should not have this section, but simply have the following reduction for $(x,y)$-orientation to HSO: Every hyperedge is copied $y$ times (increasing the degree by every node by a factor $y$), every node is copied $x$ times (increasing the rank of each hyperedge by a factor $x$) }
\begin{definition}[$(x,y)$-Hall graph] A subgraph $H=(U,E')$ of a graph is an \emph{$(x,y)$-Hall graph} if $V(G[E'])\subseteq U$ and $\mathcal{B}(U,E')$ admits a node-saturating $(x,y)$-matching.
\end{definition}

\begin{lemma}[Generalized Distributed Hall's Theorem]
\label{thm:distributedHall} Let $H=(V_1,V_2)$ be a bipartite graph in which nodes on in $V_1$ have degree at least $\delta$ and nodes in $V_2$ have degree at most $r$. For integers $x,y>0$ let $\alpha(x,y)=???$, and $x=\log_{1+\alpha}$. 

Then, for any $v\in V_1$, the ball $B_x(v)$ contains a $(x,y)$-Hall graph  that contains $v$. 
\end{lemma}
\fi

\section{Hypergraph Sinkless Orientation}

\subsection{Deterministic Algorithm for HSO}
\label{sec:HSO}

\thmHSO*
\begin{proof}
We first describe a sequential process based on Distributed Hall's Theorem (\Cref{thm:distributedHall}) to solve the problem. 
Let $x=\log_{\frac{\delta-1}{r-1}}n$. 
Apply \Cref{thm:distributedHall} for each node $v\in V$ to find a Hall graph $H(v)$ that is a subgraph of $B_x^G(v)$ and contains $v$. 
Order these Hall graphs according to the IDs of the vertices $H_1=H(v_1), H_2=H(v_2), \ldots, H_n=H(v_n)$. Now, sequentially iterate through the Hall graphs, and when processing $H_i$ orient all hyperedges in $E(H_i)$ according an arbitrary HSO orientation of $H_i$ (which exists as $H_i$ is a Hall graph). Note that this process re-orients each hyperedge that was already oriented while processing $H_1, \ldots, H_{i-1}$. At the very end orient all hyperedges that do not appear in any of the Hall graphs arbitrarily. We prove that the computed hyperedge orientation is an HSO. Consider an arbitrary vertex $v$ and let $H_{i_v}$ be the Hall graph with largest index that contains $v$; note that such a Hall graph has to exist as $H(v)$ contains node $v$. Hence, there is a hyperedge $e\in E(H_{i_v})$ that was oriented outwards from $v$ when processing Hall graph $H_{i_v}$. As $v$ does not appear in any  Hall graph with a larger index also edge $e$ does not appear in any of these (all considered Hall graphs are subgraphs of $G$ and an edge can only be contained in it if all of its vertices are). Thus, edge $e$ does not change its orientation after processing $H_{i_v}$ and it is oriented outwards from $v$ in the final orientation. 

In the \LOCAL model, we can simulate this sequential algorithm as each computed Hall graph is contained in the radius-$x$ ball around the respective ``center'' node. For each of the edges of the hypergraph assign one of its endpoints as the responsible node to orient the hyperedge. A node orients the edges that it is responsible for as follows: First, each node queries its $2x$-hop neighborhood and computes $H(u)$ for all nodes in $B_x^G(v)$. Observe that, by the properties of \Cref{thm:distributedHall}, $H(u)$ is a subgraph of $B_x^G(u)\subseteq B_{2x}^G(v)$.
No hall graph $H(u)$ for some $u\notin V(B_x^G(v))$ can contain an edge incident to $v$. Knowing, the identifiers of nodes of $B_x^G(v)$, node $v$ can for each  incident edge $e$ compute the Hall graph $H_{i_e}$ with the largest index containing $e$ and orient the according to the HSO of $H_{i_e}$; of there is no such index, edge $e$ is oriented arbitrarily. This algorithm requires that all nodes that process some Hall graph $H(v)$ use the same HSO orientation for $H(v)$. This is not a strong requirement as one can just use the lexicographically smallest HSO orientation according to an arbitrary order of all feasible HSO orientations of $H(v)$. This process orients each edge as in the sequential process and we obtain an HSO of $G$. 
The runtime of the process is $2x$ as there is no communication after learning the $2x$-hop balls.
\end{proof}

\corXYSplitting*
\begin{proof}
    First, we modify the input instance as follows.
    For each node $v$ on the left side, we create two virtual copies $v_1$ and $v_2$.
    Half of the neighbors of $v$ are connected to $v_1$ and the rest to $v_2$, arbitrarily.
    Notice that the minimum degree $\delta'$ of the new instance is at least $r + 1$. 

    Using \Cref{thm:hypergraphSO}, we find an HSO in the modified graph in $O(\log_{\frac{\delta' - 1}{r - 1}} n) = O(\log_{\frac{\delta/2 - 1}{r - 1}} n) = O(\log_{\frac{\delta}{r}} n)$ time, which assigns at least one hyperedge to both $v_1$ and $v_2$.
    Now, $v_1$ can color its outgoing edges with one color and $v_2$ with the other.
    Hence, in the original graph, node $v$ has at least one of each color in its neighborhood. 
 \end{proof}

\subsection{Randomized Algorithms for HSO}
\label{sec:randomized}

The HSO problem can be modeled as an an instance of the Constructive \lovasz Local Lemma (LLL) \cite{CPS17,FGLLL17,RG20}:  Orient each hyperedge uniformly at random, i.e., the hyperedge is outgoing for a single of its endpoints selected uniformly at random. Each hyperedge makes a node \emph{happy} independently with probability at least $1/r$. 
By a simple reduction, we can consider the case where each node has a degree of exactly $\delta$.
Then the probability for a node to be unhappy is  $p=(1-1/r)^{\delta}\leq e^{-\Omega(\delta/r)}$. 
Furthermore, each such ``bad event'' depends on at most $\delta \cdot r = \poly(\delta)$ other bad events, and hence the problem is an LLL with dependency degree $\delta \cdot r$ and bad event probability $p$, which satisfies a polynomial LLL criterion if $\delta\geq c r\log r$ for a sufficiently large constant $c>1$. 
With that, one can use existing LLL algorithms to compute random choices for each edge that avoid all bad events, i.e., give an outgoing hyperedge for every node. 
There are randomized algorithms that run either in $O(\poly(\delta,r)+\poly\log\log n)$~\cite{FGLLL17,RG20} or in $O(\log n)$ rounds~\cite{CPS17}. 

\thmHSORandomized*
\begin{proof}
First,  we assume that $r\leq 10\log n$, and we also let each vertex drop out of all but $320r\log r$ hyperedges, reducing the maximum degree of the graph to $\poly\log n$. 
In fact, for the rest of the proof we assume that the graph is $\delta$-regular with $\delta=320r\log r$.
Note that we also obtain that all node degrees are the same, but some hyperedges may have fewer than $r$ vertices. The rest of the proof uses the shattering framework, similarly to the algorithm in \cite{GS17}, but with our new deterministic algorithm from \Cref{thm:hypergraphSO} in the post-shattering phase. 

\paragraph{Pre-shattering.} Each edge activates itself with probability $1/4$, activated edges point outwards from one of their up to $r$ chosen nodes uniformly at random. 
\begin{itemize}
\item Let $Bad_1$ be the nodes who have more than $\delta/2$ or fewer than $\delta/8$ incident activated hyperedges. 
\item Let $Bad_2$ be the nodes that are not in $Bad_1$ but have a neighbor in $Bad_1$. 
\item Let $Bad_3$ be the nodes $\notin Bad_1\cup Bad_2$ that do not have an outwards oriented hyperedge. 
\end{itemize} 
Nodes in $Bad_1$ deactivate all their edges and undo their orientation. 
Let $B=Bad_1\cup Bad_2\cup Bad_3$

\begin{lemma}
  \label{lem:shatteringProbability}
    For any node $v\in V$ probability that $v\in B$ is at most $1/\delta^{20}$.  For each hyperedge $e\in E$, the probability that one of its vertices is contained in $B$ is upper bounded by $1/\delta^{20}$. All these events are independent for nodes and hyperedges that are at least $4$ hops apart in the bipartite representation. 
\end{lemma}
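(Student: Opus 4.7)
The plan is to decompose $\Pr[v \in B] \le \Pr[v \in Bad_1] + \Pr[v \in Bad_2] + \Pr[v \in Bad_3]$ and to bound each of the three summands by roughly $\delta^{-22}$, which then yields the desired $\delta^{-20}$ bound for nodes (absorbing the factor $3$) and, via an additional union bound over the at most $r \le \delta$ vertices of a hyperedge, the bound for hyperedges.

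For $\Pr[v \in Bad_1]$ I would observe that the number of activated incident hyperedges of $v$ is a sum $X_v$ of $\delta$ independent $\mathrm{Bernoulli}(1/4)$ variables with mean $\delta/4$, so a standard multiplicative Chernoff bound gives $\Pr[X_v \notin [\delta/8, \delta/2]] \le 2 e^{-c\delta}$ for a universal $c > 0$. Since $\delta \ge 320\, r \log r$ is in particular $\omega(\log \delta)$, this is at most $\delta^{-22}$. For $\Pr[v \in Bad_2]$, a union bound over $v$'s at most $\delta(r-1)$ hypergraph-neighbors combined with the preceding estimate yields $\Pr[v \in Bad_2] \le \delta r \cdot 2 e^{-c\delta} \le \delta^{-22}$.

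The substantive step is $\Pr[v \in Bad_3]$. Conditioning on $v \notin Bad_1 \cup Bad_2$, node $v$ has at least $\delta/8$ activated incident hyperedges and, since no neighbor of $v$ is in $Bad_1$, none of them gets deactivated. Because orientations are drawn independently of activations, each of these surviving hyperedges points outwards from $v$ independently with probability at least $1/r$. Hence
\[ \Pr[v \in Bad_3 \mid v \notin Bad_1 \cup Bad_2] \le (1 - 1/r)^{\delta/8} \le \exp\!\left(-\tfrac{\delta}{8r}\right) \le \exp(-40 \log r) = r^{-40}, \]
and with the constant $320$ in $\delta \ge 320\, r \log r$ taken sufficiently large, this is at most $\delta^{-22}$. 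Summing the three contributions and using the additional union bound over vertices of a hyperedge gives the probability claims.

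For the independence part, the event $\{v \in B\}$ is a deterministic function of the activations and orientations of hyperedges within a bounded number of hops of $v$ in the bipartite representation $\mathcal{B}_G$ --- activations of hyperedges in the $3$-hop ball around $v$ (since $Bad_2$ probes neighbors-of-neighbors, and the possible deactivations relevant to $Bad_3$ do the same) and orientations of hyperedges in the $1$-hop ball. Two such events whose dependency neighborhoods are vertex-disjoint are therefore mutually independent, which holds once the centers lie sufficiently far apart in $\mathcal{B}_G$; a completely analogous argument covers the hyperedge events, as each such event is a function of finitely many node events at distance $1$. The main obstacle is the $Bad_3$ bound, since there one must carefully exploit the independence of orientations from activations under the conditioning on $v \notin Bad_1 \cup Bad_2$, and one must calibrate the constant $320$ in the minimum-degree hypothesis so that the exponential savings $r^{-\Omega(\delta/r)}$ beat the polynomial bound $\delta^{-20}$.
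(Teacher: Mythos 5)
Your proof is correct and follows essentially the same route as the paper's: a Chernoff bound for $Bad_1$, a union bound over at most $\delta r$ hypergraph-neighbors for $Bad_2$, and the $(1-1/r)^{\delta/8}\le r^{-\Omega(\delta/r)}$ bound for $Bad_3$ (the paper writes this unconditionally, but the step implicitly uses exactly the independence of orientations from activations that you make explicit). You are in fact a bit more careful than the paper's three-line argument, which stops after bounding the three $\Pr[v\in Bad_i]$ and never spells out the union bound over $i$, the union bound over the $\le r$ vertices of a hyperedge, or the bounded-dependency-radius argument that underpins the independence claim.
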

\begin{proof}
    If $r,\delta$ are larger than a sufficiently large constant, we obtain for $v\in V$ that $Pr(v\in Bad_1)\leq \exp(-\delta/12)\leq \delta^{-22}$ via a Chernoff bound and $Pr(v\in Bad_2)\leq \sum_{u\in N(v)} Pr(u\in Bad_1)\leq \delta\cdot r\cdot \delta^{-22}\leq \delta^{-20}$. We also obtain $Pr(v\in Bad_3)\leq (1-1/r)^{\delta/8}\leq e^{-40\log r}\leq \delta^{-20}$. The last inequality follows as $\delta=320r\log r\leq r^2$.
\renewcommand{\qed}{\ensuremath{\qquad\blacksquare}}
\end{proof}

\begin{lemma}[The Shattering Lemma, \cite{FGLLL17,BEPSv3}]\label{lem:Shattering}
Let $G=(V, E)$ be a graph with maximum degree $\Delta$. Consider a process which generates a random subset $B \subseteq V$ such that $P[v \in B]\leq \Delta^{-c_1}$, for some constant $c_1 \geq 1$, and such that the random variables $1(v\in B)$ depend only on the randomness of nodes within at most $c_2$ hops from $v$, for all $v\in V$, for some constant $c_2\geq 1$.
Then, for any constant $c_3\geq 1$, satisfying  $c_1>c_3+ 4c_2 + 2$,  we have that any connected component in $G[B]$ has size at most $O( \log_{\Delta} n  \Delta^{2c_2})$ with probability at least $1- n^{-c_3}$.
\end{lemma}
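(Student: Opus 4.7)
The plan is to use the standard shattering counting argument: translate the existence of a large connected component of $G[B]$ into the existence of a well-separated yet ``cluster-connected'' subset of $B$, count such subsets via a tree-counting lemma, and apply a union bound that exploits independence of the bad events at pairwise far vertices.

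First I would establish independence. If $T \subseteq V$ is $(2c_2{+}1)$-separated in $G$ (pairwise $G$-distance strictly larger than $2c_2$), the $c_2$-hop neighborhoods of the vertices in $T$ are pairwise disjoint. Since each indicator $\mathbf{1}(v \in B)$ depends only on the randomness in $v$'s $c_2$-hop neighborhood, the indicators for $v \in T$ are mutually independent, so $\Pr[T \subseteq B] \leq \Delta^{-c_1 |T|}$.

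Next I would show that a large component of $G[B]$ forces the existence of such a $T$ with a useful connectivity structure. Assume $G[B]$ has a component $C$ with $|C| \geq s$. Greedily build a maximal $(2c_2{+}1)$-separated subset $T \subseteq C$: at each step add a vertex of $C$ whose $G$-distance to all previously chosen vertices is more than $2c_2$. Each selection forbids at most $\Delta^{2c_2}$ further candidates, so $|T| \geq s / \Delta^{2c_2}$. By maximality, every vertex of $C$ lies within $G$-distance $2c_2$ of some element of $T$; combined with the $G$-connectivity of $C$, this implies that $T$ is connected in the power graph $G^{4c_2+1}$, whose maximum degree is at most $\Delta^{4c_2+1}$.

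Then I would count and union-bound. By the standard tree-counting lemma, the number of connected subsets of size $k$ in a graph of maximum degree $D$ containing a fixed vertex is at most $(eD)^{k}$. Hence the number of candidates $T$ of size $k$ is at most $n \cdot (e\,\Delta^{4c_2+1})^{k}$, and combining with the independence bound,
\[
\Pr\!\left[\exists\,T \subseteq B,\ |T|=k,\ T\ (2c_2{+}1)\text{-separated, connected in } G^{4c_2+1}\right] \;\leq\; n\cdot\bigl(e\cdot\Delta^{4c_2+1-c_1}\bigr)^{k}.
\]
Choosing $k = \lceil (c_3+1)\log_\Delta n\rceil$ and using $c_1 > c_3+4c_2+2$, the bracketed factor is at most $\Delta^{-(c_3+1)}$ (the $e$ is absorbed by the ``$+2$'' slack since $\log_\Delta e \leq 1$ for $\Delta \geq 3$; small $\Delta$ is handled by mildly enlarging the constant in $k$). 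The bound becomes $n\cdot n^{-(c_3+1)} = n^{-c_3}$. Since any component of size at least $s = k\cdot \Delta^{2c_2} = O(\log_\Delta n\cdot \Delta^{2c_2})$ forces such a $T$, the claim follows.

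The main obstacle is the tight numerical interplay between the four constants: the ``$4c_2$'' in the hypothesis corresponds to the degree of the power graph $G^{4c_2+1}$ used to count connected subsets, the ``$+2$'' absorbs both the $e$ factor from tree counting and the leading $n$ in the union bound, and the $c_3$ lifts the error probability to $n^{-c_3}$. The one genuinely geometric ingredient worth verifying carefully is that a maximal $(2c_2{+}1)$-separated subset of a $G$-connected set is itself connected in $G^{4c_2+1}$; everything else is a routine application of independence and tree counting.
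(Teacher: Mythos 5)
The paper does not prove this lemma---it is imported verbatim as a citation from \cite{FGLLL17,BEPSv3}, so there is no in-paper proof to compare against. Your reconstruction is the standard shattering counting argument (well-separated implies independence; maximal separated subset of a connected component is connected in a bounded-degree power graph; tree-counting plus union bound), and it is correct: the independence claim for a $(2c_2{+}1)$-separated set is right; the maximality argument correctly yields $|T| \geq s/\Delta^{2c_2}$ and connectivity of $T$ in $G^{4c_2+1}$ (consecutive path vertices in $C$ map to $T$-representatives at distance at most $2c_2 + 1 + 2c_2 = 4c_2+1$); and the numerics close, with the $c_1 > c_3 + 4c_2 + 2$ hypothesis supplying exactly the strict slack $\epsilon := c_1 - 4c_2 - 2 - c_3 > 0$ needed to absorb both the factor $e$ from tree-counting and the leading $n$ in the union bound (your caveat about small $\Delta$ is appropriate since $\log_\Delta e > 1$ when $\Delta = 2$). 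One minor point worth making explicit when writing this up: after obtaining $|T| \geq k$ with $T$ connected in $G^{4c_2+1}$, you should pass to a size-exactly-$k$ connected subset of $T$ (e.g., via a spanning tree of $G^{4c_2+1}[T]$) before invoking the tree-counting bound, but this is routine.
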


\paragraph{Post-shattering.} The post-shattering instance consists of all vertices in $B=Bad_1\cup Bad_2\cup Bad_3$ and all hyperedges with at least one endpoint in $B$, but restricted to the nodes in $B$. Let $\cB_G^{\mathsf{post}}=(B,E',F)$ be the resulting bipartite representation of the graph. 

\begin{lemma}
W.h.p. each connected components of $\cB_G^{\mathsf{post}}$ has $O(\poly r \log n)$ nodes, each node on the vertex side has degree at least $\delta/2$ and each node on the hyperedge side has degree (rank) at most $r$.  
\end{lemma}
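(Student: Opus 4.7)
The lemma asserts three properties of $\cB_G^{\mathsf{post}}$: components of size $O(\poly(r)\log n)$ w.h.p., every vertex-side node has degree at least $\delta/2$, and every hyperedge-side node has rank at most $r$. The rank bound is immediate from the construction: every hyperedge of $G$ has rank at most $r$, and restricting its endpoint set to $B$ only removes endpoints. The degree bound is also essentially immediate: for any $v\in B$, each of $v$'s $\delta$ original hyperedges contains $v\in B$ as an endpoint, so it qualifies for inclusion in $\cB_G^{\mathsf{post}}$, and the restriction retains $v$ in the edge; the vertex-side degree of $v$ is therefore exactly $\delta\ge\delta/2$. (Should the intended reading instead exclude hyperedges already committed to satisfied nodes outside $B$, a short case analysis still gives $\ge\delta/2$: for $v\in Bad_1$ all incident edges were deactivated when $v$ undid them, and for $v\in Bad_2\cup Bad_3$ the fact that $v\notin Bad_1$ guarantees at most $\delta/2$ activated incident edges, leaving at least $\delta/2$ uncommitted.)

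The substantive part is the component-size bound, for which the plan is to invoke the Shattering Lemma (\Cref{lem:Shattering}). Let $G^*$ denote the conflict graph on $V$ in which $u\sim v$ iff $u,v$ share a hyperedge in $G$; since $\delta=320r\log r$ and the rank is at most $r$, the maximum degree of $G^*$ satisfies $\Delta^*\le \delta(r-1)=O(r^2\log r)=\poly(r)$. Two vertex-side nodes of $\cB_G^{\mathsf{post}}$ lie in the same component iff they lie in the same component of $G^*[B]$, and a $G^*[B]$-component of vertex-size $k$ blows up to at most $k(1+\delta)$ nodes in $\cB_G^{\mathsf{post}}$ after accounting for hyperedge-side neighbors. \Cref{lem:shatteringProbability} gives $\Pr[v\in B]\le 3\delta^{-20}\le (\Delta^*)^{-c_1}$ for a suitable constant $c_1$, and the event $\{v\in B\}$ depends only on per-edge random choices within a constant-hop neighborhood of $v$ in $G^*$ (the events $Bad_1$ and $Bad_3$ look at $v$'s own incident edges, and $Bad_2$ additionally at the incident edges of $1$-hop neighbors). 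The Shattering Lemma then delivers $G^*[B]$-components of size $O((\Delta^*)^{O(1)}\log n)=O(\poly(r)\log n)$ w.h.p., which, after the $1+\delta=\poly(r)$ blow-up, yields the claimed bound on $\cB_G^{\mathsf{post}}$.

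The hard part will be reconciling the constants: \Cref{lem:Shattering} demands $c_1>c_3+4c_2+2$, and since $\Delta^*$ can be as large as $\delta r\le \delta^2$, the probability bound $\delta^{-20}$ only translates to $(\Delta^*)^{-c_1}$ with $c_1\approx 10$, which is borderline once we take $c_2=2$ and ask for high probability. The natural remedy is to strengthen the Chernoff step driving \Cref{lem:shatteringProbability}: the tail $\exp(-\Theta(\delta))$ controlling $Bad_1$ dominates any fixed polynomial in $\delta$, so enlarging the constant $320$ in the hypothesis $\delta\ge 320r\log r$ if necessary pushes $c_1$ to any desired constant while keeping $\delta=\poly(r)$. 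With this slack the Shattering Lemma applies cleanly and the three claims combine to give the statement.
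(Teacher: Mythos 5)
Your handling of the rank and degree claims matches the paper's. For the degree bound, the paper's intent is indeed your parenthetical case analysis: the post-shattering instance contains only the unactivated hyperedges incident to $B$, and every $v \in B$ has at least $\delta/2$ of them (for $Bad_1$ all $\delta$ after deactivation; for $Bad_2 \cup Bad_3$ at least $\delta/2$ since $v \notin Bad_1$). You identified the ambiguity in the definition of $\cB_G^{\mathsf{post}}$ and resolved it the way the paper implicitly does.

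For the component bound you take a genuinely different route that creates an unnecessary difficulty. You build the conflict graph $G^*$ on $V$ (two vertices adjacent iff they share a hyperedge), which has maximum degree $\Delta^* \leq \delta(r-1)$ — potentially as large as $\delta^2$ — and then you are forced to translate the $\delta^{-20}$ bound of \Cref{lem:shatteringProbability} into $(\Delta^*)^{-c_1}$ with $c_1 \approx 10$, which is indeed too weak for the required $c_1 > c_3 + 4c_2 + 2$. Your proposed remedy (inflating the constant $320$ to push the Chernoff exponent up) would work but is not needed. The paper instead applies \Cref{lem:Shattering} directly on the bipartite representation $\cB_G$, whose maximum degree is only $\delta = 320r\log r = \poly(r)$ (not $\delta r$), with parameters $c_1 = 20$, $c_2 = 4$, $c_3 = 1$, which satisfy $20 > 1 + 16 + 2$. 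The role of the second sentence of \Cref{lem:shatteringProbability} (the per-hyperedge probability bound) is precisely to define the bad set on $\cB_G$ as $B$ together with all hyperedge-side nodes having an endpoint in $B$, so that $\cB_G^{\mathsf{post}}$ sits inside the induced bad subgraph and the lemma directly gives components of size $O(\log_\delta n \cdot \delta^8) = O(\poly(r)\log n)$. Working on $\cB_G$ also sidesteps the attribution issue your route leaves implicit: on $G^*$ the per-hyperedge randomness is not naturally "held" by a node, whereas on $\cB_G$ each hyperedge is itself a node, so the $c_2$-hop locality of the bad events is immediate.
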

\begin{proof}
The bound on the rank of the hyperedges follows as each hyperedge is a subset of a hyperedge of the original bipartite graph. The bound on the minimum degree on the vertex side follows as every vertex in $Bad_1\cup Bad_2\cup Bad_3$ has at least $\delta/2$ unmarked incident hyperedges after the pre-shattering phase; note that the nodes that had fewer unmarked incident hyperedges actually are in $Bad_1$ in the first step of the first phase and then unmarked all their incident hyperedges. 

Recall, that we are in the setting where each node has degree $\delta$ and we have $\delta \geq r$, that is, the bipartite representation of $G$ has maximum degree $\delta$.
The claim on the connected component size of $O(\log_{\delta} n \cdot \delta^{8})\leq O(\poly r \log n)$ follows with high probability via the Shattering Lemma (\Cref{lem:Shattering})  and \Cref{lem:shatteringProbability} applied on the bipartite representation $\cB_G$ of $G$ with $c_1=20$, $c_3=1$, $c_2=4$. 
\renewcommand{\qed}{\ensuremath{\qquad\blacksquare}}
\end{proof}

The final runtime follows by applying \Cref{thm:hypergraphSO} on all connected components of $\cB_G^{\mathsf{post}}$ in parallel. As each of them has at most $N=O(\poly r \log n)$ nodes, nodes have minimum degree $\delta/2$ and hyperedges have maximum rank $r$, the runtime is $O(\log_{\frac{\delta/2-1}{r-1}}N)=O(\log_{\frac{\delta}{r}}\delta+\log_{\frac{\delta}{r}}\log n)$. 
Each node participating in the post-shattering phase receives at least one outgoing edge from the application of \Cref{thm:hypergraphSO}, and each node not participating in the post-shattering phase has at least one outgoing edge from the pre-shattering phase. 

\paragraph{Preprocessing for alternative algorithm $(r\geq 100\log n)$.} First of all, we reduce the number of hyperedges to at most $n^3$, as we let  every node vote for $n^2$ of its incident hyperedges and we remove any hyperedge without a vote. 
If $r\geq 100\log n$, each node remains in each of its incident hyperedges independently (for each incident hyperedge) with probability $p=(100\log n)/r$. Let $G'$ be the resulting graph. In expectation the rank for any hyperedge $e$ is at most $p\cdot r\leq 100\log n$, and with a Chernoff bound, the probability that it is above  $200\log n$ is at most $1/\poly n$. As there are at most $n^3$ hyperedges, w.h.p.\ (in $n$) all hyperedges have rank at most $200\log n$. Similarly, the expected degree of a node in $G'$ is $p\cdot \delta$ and with a Chernoff bound the probability that it is below $p\cdot \delta/2$ is at most $1/\poly n$. With a union bound over the $n$ vertices w.h.p.\ (in $n$), all vertices have degree at least $\delta'=p\cdot \delta/2\geq 100\log n\cdot \delta/(2r)\geq 4000\log n \log r\geq 320 r'\log r'$. Nodes with a larger degree, simply leave the appropriate number of hyperedges such that we obtain a $\delta'$-regular hypergraph with maximum rank $r'$ satisfying $\delta'\geq 320 r'\log r'$. Now, we run the previous algorithm which takes $O(\log_{\frac{\delta'}{r'}}\delta'+\log_{\frac{\delta'}{r'}}\log n)=O(\log\log n/\log\log\log n)$.
\end{proof}

\section{Edge Coloring}
\label{sec:edgeColoring}
The main objective of this section is to prove the following theorem.  
\thmedgeColoring*

In \Cref{sec:edgeColoringProof}, we also prove \Cref{cor:EdgeColoringFast} that provides a faster algorithm for coloring with $(3/2+\eps)\Delta$ colors. 

\Cref{thm:edgeColoring}'s edge coloring algorithm is based on the  edge coloring framework of \cite{GKMU18}.  In the sequel, we sketch this framework; see Algorithm~\ref{alg:edge coloring} for pseudocode. Their algorithm computes a $3\Delta/2$-edge coloring in  $O(\Delta^8\log^9n\log^5 \Delta)$ rounds while we aim for an $O(\Delta^2\log n)$-round algorithm. 

The crucial definition of their approach is a so-called (3)-graph. 
\begin{definition}[(3)-graphs]
\label{def:threeGraph}
    A \emph{(3)-graph} is a graph with maximum degree 3 where no two degree-3 vertices are adjacent.
\end{definition}

To obtain their edge coloring algorithm, they iteratively extract (and remove) (3)-graphs from $G$ in a way that reduces the maximum degree of the remaining graph by at least two in each iteration.  Each of these (3)-graphs can be edge colored (in parallel) with 3 colors and using a fresh set of colors for each of the $\approx \Delta/2$ extracted graphs yields a $3\Delta/2$-edge coloring of $G$.  The runtime of their procedure depends on two factors: how fast one can extract a single $(3)$-graph and how fast one can color (3)-graphs. In order to obtain a logarithmic-time algorithm (on constant-degree graphs), we improve the runtime for both ingredients. Our left side saturating matching algorithm under the condition $\delta>r$ will be the crucial ingredient for improving the extraction process (see \Cref{sec:threeGraphExtraction}) while we develop an entirely new algorithm for edge coloring the (3)-graphs (see \Cref{sec:colorThreeGraphs}). Both of these results are summarized in the following lemmas. 

\begin{lemma}
\label{lem:threeGraphExtraction}
 There is a deterministic \LOCAL algorithm with time complexity
    $O(\Delta \cdot \log n)$
     that for any $n$-node graph $G=(V,E)$ with maximum degree $\Delta\geq 3$ computes an edge set $F \subseteq E$ such that $H = (V,F)$ is a (3)-graph and the maximum degree of the graph $(V,E - F)$ is at most $\Delta - 2$.
\end{lemma}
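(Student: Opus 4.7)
The plan is to apply the $(3)$-graph extraction framework of~\cite{GKMU18}, whose runtime bottleneck is a bipartite saturating matching subroutine, and to speed it up by plugging in Corollary~\ref{cor:maximumMatchingDeterministic} in place of the slower matching algorithm used there.

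Concretely, let $V_\Delta := \{v : \deg_G(v) = \Delta\}$ and $V_{\Delta - 1} := \{v : \deg_G(v) = \Delta - 1\}$. To guarantee that $G - F$ has maximum degree at most $\Delta - 2$, each $v \in V_\Delta$ must contribute at least two incident edges to $F$ and each $v \in V_{\Delta-1}$ at least one. I would encode these demands in an auxiliary bipartite incidence graph $\mathcal{B}$ whose vertex side contains two copies of every $v \in V_\Delta$ and one copy of every $v \in V_{\Delta-1}$, whose hyperedge side is $E(G)$, and with the natural incidence relation between them. The minimum vertex-side degree is then at least $\Delta - 1$ and the maximum hyperedge-side degree is at most $4$, so whenever $\Delta$ exceeds a small constant we have $\delta > r$ and Corollary~\ref{cor:maximumMatchingDeterministic} produces a vertex-side-saturating matching in $O(\log_{(\Delta - 1)/4} n) \subseteq O(\log n)$ rounds. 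The set $F_0 \subseteq E(G)$ of matched hyperedges then satisfies all demand constraints by construction: every $v \in V_\Delta$ has $\deg_{F_0}(v) \geq 2$ and every $v \in V_{\Delta-1}$ has $\deg_{F_0}(v) \geq 1$.

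The remaining work is to refine $F_0$ into a true $(3)$-graph $F \subseteq F_0$: generically $F_0$ can have vertices of degree much larger than $3$ (an edge $\{u, v\}$ may be matched via $u$'s copy while $v$ is simultaneously incident to many other matched edges) and may contain adjacent pairs of degree-$3$ vertices. Following the scheme of~\cite{GKMU18}, I would resolve these defects in $O(\Delta)$ local-repair rounds: each round identifies vertices with excess $F$-degree, releases non-committed incident edges, and re-establishes any broken demand on the other endpoint by another bipartite saturating matching on a carefully chosen residual subgraph where the ratio $\delta/r$ is still bounded away from $1$. Each such round costs $O(\log n)$ via Corollary~\ref{cor:maximumMatchingDeterministic}, yielding the claimed $O(\Delta \log n)$ overall runtime. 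Small-$\Delta$ corner cases in which the strict inequality $\delta > r$ just fails can be handled by direct low-degree algorithms whose cost fits comfortably inside the same bound.

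The main obstacle is the correctness and termination of the refinement phase: maintaining the invariant that every residual matching instance still has $\delta > r$ (so Corollary~\ref{cor:maximumMatchingDeterministic} keeps applying in $O(\log n)$ rounds), and simultaneously enforcing the maximum-degree-$3$ and independence-of-degree-$3$-vertices constraints \emph{on top of} the original demand constraints. This combinatorial bookkeeping is the structural heart of the reduction in~\cite{GKMU18}; our role in this lemma is merely to replace the inner matching routine with our faster one, and the $\Delta$ factor in the final runtime is exactly the number of refinement rounds needed before the $(3)$-graph property is attained.
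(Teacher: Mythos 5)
Your high-level plan (reuse the GKMU18 extraction framework, swapping in Corollary~\ref{cor:maximumMatchingDeterministic} for the slow matching subroutine) is exactly what the paper does, but the bipartite-matching instance you set up and the subsequent ``refinement phase'' are not the GKMU18 construction, and the gap you acknowledge at the end is a real one.

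The paper's $\mathsf{ReduceDegree}$ routine does \emph{not} build an auxiliary bipartite graph with two copies of each degree-$\Delta$ vertex. Instead it first computes a maximal matching $M_\Delta$ on $G[V_\Delta]$; the key observation is that the \emph{unmatched} degree-$\Delta$ vertices $V_\Delta'$ then form an independent set, so the ordinary bipartite subgraph between $V_\Delta'$ and $V\setminus V_\Delta'$ already has minimum left degree $\Delta$ and maximum right degree $\Delta-1$, and Corollary~\ref{cor:maximumMatchingDeterministic} applies directly. This automatically caps the degree of any vertex in $M_\Delta\cup M_B$ at $2$, and a short argument shows that after a second $\mathsf{ReduceDegree}$ call (on the residual graph, now of maximum degree $\Delta-1$) no vertex has more than $3$ incident extracted edges. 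One final maximal matching among the degree-$3$ nodes of the extracted graph removes adjacencies between them without disturbing the degree-reduction guarantee. There is no iterative repair loop. In particular, the $\Delta$ factor in the stated runtime does \emph{not} count repair rounds: it comes from $\log_{\Delta/(\Delta-1)} n = \Theta(\Delta\log n)$, the cost of a single bipartite matching call at ratio $\delta/r = \Delta/(\Delta-1)$.

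Your construction with two copies per $V_\Delta$-vertex and one per $V_{\Delta-1}$-vertex gives a much healthier ratio $\delta/r \geq (\Delta-1)/4$, but at the price of losing control of the degree of the extracted edge set $F_0$: an edge matched to one endpoint's copy is freely incident to the other endpoint, so $F_0$ can have degree up to $\Delta$. Everything then hinges on the ``$O(\Delta)$ local-repair rounds,'' which you neither specify nor prove terminate; and removing excess edges at one vertex can destroy the demand constraint at its neighbor, so it is not clear the residual matching instances even stay feasible, let alone keep $\delta>r$. You present this phase as ``the structural heart of the reduction in~\cite{GKMU18},'' but GKMU18 has no such phase---its structural heart is the maximal-matching preprocessing that renders $V_\Delta'$ independent, which you omitted. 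As written, the proof has a genuine missing step precisely at the point you flag as ``the main obstacle.''
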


\begin{restatable}{lemma}{lemColorThreeGraphs} \label{lem:ColorThreeGraphs}
There is a deterministic \LOCAL algorithm that computes a $3$-edge coloring of any $n$-node (3)-graph in $O(\log n)$ rounds. 
\end{restatable}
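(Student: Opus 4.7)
The plan is to follow the high-level strategy sketched in \Cref{sec:nutedge}: cluster the graph, use a sinkless orientation on the cluster graph to obtain ``flexibility tokens'' for rigid clusters, color intercluster edges using a helper coloring as a scheduler, and finally complete each cluster locally.

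First, I would compute in $O(\log^* n)$ rounds a clustering of the $(3)$-graph into clusters of constant diameter via a standard ruling-set construction. Since the underlying graph has maximum degree $3$, the clustering can be arranged so that each cluster has only $O(1)$ adjacent clusters, and in particular $O(1)$ adjacent intercluster edges. The clustering radius is chosen large enough that any cluster whose induced intracluster subgraph is a tree in which every edge has exactly three adjacent edges in $G$ necessarily touches the cluster boundary at many distinct places. Such clusters we call \emph{non-easy}; all others (those containing a cycle or containing an edge with at most two adjacent edges in $G$) are \emph{easy}.

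Second, I would build the cluster graph $\mathcal{C}$ whose nodes are clusters and whose edges are intercluster edges. Restricting $\mathcal{C}$ to non-easy clusters, the structural guarantee from the clustering step ensures minimum degree at least $3$, so sinkless orientation has a solution; compute one in $O(\log n)$ rounds via the known deterministic algorithm. Each non-easy cluster then owns three of its outgoing intercluster edges, picked deterministically. In parallel, compute in $O(\log^* n)$ rounds a proper $3$-edge-coloring of the intercluster edges (these form a bounded-degree graph, so Linial-style methods suffice), to serve as a \emph{helper coloring}. The final colors of the intercluster edges are assigned by iterating through the three helper color classes one at a time: edges not owned by any cluster greedily take the first color not forbidden by already-committed neighbors, while each owned edge's color is chosen by its owner so that, after all three rounds, the owner has committed colors to its three owned boundary edges according to the extension lemma below. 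Because edges within a single helper class are pairwise non-adjacent, the simultaneous choices within a round are independent and properness is preserved. Once all intercluster edges are colored, each cluster locally completes its intracluster coloring in $O(1)$ rounds by invoking the respective extension lemma.

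The main obstacle is proving the two extension lemmas on which the whole scheme rests. For easy clusters, one must show that any proper $3$-coloring of the adjacent intercluster edges extends to a proper $3$-edge-coloring inside: an internal cycle, or an edge of low adjacency in $G$, provides the needed degree of freedom (a Kempe-chain--style rotation) to absorb arbitrary boundary constraints. For non-easy clusters, which are rigid trees where every edge has three adjacent edges in $G$, extension must succeed provided the cluster is free to select the colors of its three owned boundary edges; the key insight is that the only obstruction to extending a $3$-coloring along a rigid tree-shaped part of the line graph is a propagated parity-like constraint, and three degrees of freedom on appropriately chosen boundary edges are enough to simultaneously neutralize every such constraint. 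All other steps are standard, and the overall runtime is dominated by the $O(\log n)$ sinkless orientation on the cluster graph.
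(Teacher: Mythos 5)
Your proposal follows the same high-level roadmap as the paper (cluster, SO on the contracted graph, helper coloring as a scheduler, owned boundary edges for tree clusters, local completion), but there are two concrete gaps that prevent it from going through as written.

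\textbf{Getting three owned edges from a sinkless orientation.} You claim that after computing a sinkless orientation on the cluster graph restricted to non-easy clusters (for which you argue minimum degree at least $3$), ``each non-easy cluster then owns three of its outgoing intercluster edges.'' A sinkless orientation only guarantees \emph{one} outgoing edge per node of degree at least $3$; it does not promise three. The paper resolves this by first establishing that every non-easy cluster is \emph{expanding}, i.e., has at least $9$ adjacent intercluster edges (this is why the clustering must guarantee that all nodes within distance $4$ of the MIS center are in the cluster), and then splitting each cluster-node of the contracted graph into three copies with degree at least $3$ each, so that a single SO on the split graph yields three outgoing edges per expanding cluster. Your minimum-degree-$3$ bound is both too weak (it only yields one outgoing edge) and not actually justified after the ``restriction to non-easy clusters'' step, since a non-easy cluster could have most or all of its intercluster edges going to easy clusters, driving its degree in the restricted graph below $3$ (or even to $0$). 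The paper runs SO on the full split cluster multigraph and simply ignores the non-expanding clusters, which circumvents this issue.

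\textbf{The extension lemmas are not actually proven.} You correctly identify these as ``the main obstacle,'' but what you give is intuition, not proof. For easy clusters the paper does not use a Kempe-chain argument: it greedily colors the intracluster edges in order of non-increasing distance from an edge of edge-degree at most $2$ (or from a cycle, reducing to a $2$-list-edge-coloring of an even cycle). For rigid tree clusters, the argument is a precise inductive analysis of the color palettes of the layers $L_s$ around the cluster center: if no ``friendly'' edge exists (one of whose children is colored the same as one of its siblings), then the palettes alternate in a rigid pattern, and the paper's explicit three-case coloring procedure (\Cref{algo:colproc}) is designed exactly so that this rigid pattern is violated by the three owned edges, giving the contradiction. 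Your summary (``three degrees of freedom ... are enough to simultaneously neutralize every such [parity] constraint'') does not identify how the owned edges must be colored (the choices are delicate and depend on the degrees of the heads of the three owned edges), nor why the helper-coloring schedule ensures the owner can actually realize the needed colors consistently with already-committed neighbors. Without that case analysis, the extension step is not established.

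Minor: you spend $O(\log^* n)$ rounds on the helper coloring of intercluster edges; the paper observes $G[E_{\inter}]$ is a disjoint union of paths of length at most $2$, so a constant number of rounds suffice. This does not affect the asymptotic bound.
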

These lemmas are proven in \Cref{sec:threeGraphExtraction,sec:colorThreeGraphs}, respectively. 

\medskip
\noindent
\textit{Proof of \Cref{thm:edgeColoring}.}\quad
Let $G = (V,E)$ be an undirected graph with maximum degree $\Delta$. We then apply $k = \lfloor \frac{\Delta -1}{2} \rfloor$ iterations of Lemma \ref{lem:threeGraphExtraction}, producing $k$ (3)-graphs $H_1 = (V,F_1),\dots,H_k = (V,F_k)$. Each iteration takes $O(\Delta \cdot \log n)$ rounds. Then, we edge color each subgraph $H_i$ with a fresh set of three colors using Lemma \ref{lem:ColorThreeGraphs} in $O(\log n)$ rounds. Finally, if $\Delta$ is even, the remaining graph $G_k$ has maximum degree at most two and can be 3-edge colored in $O(\log^* n)$ rounds. If $\Delta$ is odd, the final graph has maximum degree one and can trivially be edge colored with one color in a single round. In total the algorithm uses $\lfloor 3\Delta/2 \rfloor$ colors and runs in $O(\Delta^2 \cdot \log n)$ rounds.
\qed

\begin{algorithm}
    \caption{$3\Delta/2$-edge coloring}\label{alg:edge coloring}
    \begin{algorithmic}[1]
        \State $G_0 \gets G = (V,E)$
        \State $k \gets \lfloor \frac{\Delta - 1}{2}\rfloor$
        \For{$i = 1,2,\dots, k$}
            \State \Cref{lem:threeGraphExtraction}: Extract a (3)-graph $H_i = (V, F_i)$ from $G_{i-1}$. \Comment{$O(\Delta \cdot \log n)$ rounds}
            \State \Cref{lem:ColorThreeGraphs}: edge color $H_i$ with a fresh set of three colors. \Comment{$O(\log n)$ rounds}
            \State $G_i \gets (V, E_{i-1} - F_i)$ now has maximum degree at most $\Delta - 2i$.
        \EndFor
        \If{$\Delta$ is even}
        \State Color $G_k$ (maximum degree 2) with a fresh set of three colors \cite{linial92,cole86}. \Comment{$O(\log^* n)$ rounds}
        \Else
        \State Color $G_k$ (maximum degree 1) with a single fresh color.
        \EndIf
    \end{algorithmic}
\end{algorithm}

\subsection{Extracting (3)-graphs}
\label{sec:threeGraphExtraction}
In this section we apply the left side saturating matching algorithm under the condition $\delta>r$ as an improved subroutine for extracting a (3)-graph $G'=(V,E')\subseteq G$ in a way that reduces the maximum degree of the remaining graph by at least two in each iteration. To be self-contained we present the full algorithm and its proof, despite the large similarity to the slower approach in \cite{GKMU18}. 

First, we sketch our changes in their algorithm  for extracting a single (3)-graph; thereafter we present the whole algorithm. See \Cref{alg:threeGraphExtraction} for pseudocode. The extraction process is a combination of computing a sequence of maximal and maximum matchings in carefully chosen graphs. In essence, the union of the computed matchings will form the resulting (3)-graph. 
In this extraction procedure they construct bipartite graphs with minimum degree $\Delta$ on one side and maximum degree $\Delta - 1$ on the other side. The main bottleneck in their algorithm is the computation of a maximum matchings in these bipartite graphs. They use $O(\Delta\log n)$ iterations in each of which the current matching is augmented along a maximal independent sets of augmenting paths; the length of these paths is bounded by $\Theta(\Delta \log n)$. Since simulating augmenting paths as hyperedges introduces a communication overhead of $\ell$ rounds per iteration, this procedure intrinsically requires $\Omega(\Delta^2 \log^2 n)$ rounds, even if we completely disregard the complexity of computing the maximal independent sets. 

Instead, we use the distributed version of Hall's theorem, or more concretely \Cref{cor:maximumMatchingDeterministic}, to compute these matchings in  $O(\Delta \cdot \log n)$ rounds.

To present the whole extraction algorithm we require the following well-known results for computing maximal matchings. 
\begin{lemma}[Maximal Matching, deterministic, \cite{panconesi-rizzi}] \label{lem:maxmatch}
    There is a deterministic $O(\Delta + \log^* n))$-round
    algorithm that computes a maximal matching in graphs with maximum degree $\Delta$.
\end{lemma}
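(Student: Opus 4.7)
The natural plan is to reduce the maximal matching problem to edge coloring and then process color classes sequentially. Concretely, I would first compute a proper edge coloring of $G$ with $2\Delta-1$ colors, which is sufficient because the line graph $L(G)$ has maximum degree $2(\Delta-1)$ and hence admits a proper vertex coloring with $2\Delta-1$ colors. Given such an edge coloring $c: E \to \{1, \dots, 2\Delta-1\}$, I would iterate $i = 1, 2, \dots, 2\Delta-1$, and in round $i$ add every edge $e$ with $c(e) = i$ whose two endpoints are currently unmatched to the matching $M$, in parallel. The result is a matching because edges sharing a color are vertex-disjoint, so parallel additions cannot create conflicts, and it is maximal because any edge $\{u,v\}\notin M$ that is not adjacent to a matched edge would have been added the moment its color class was processed (neither endpoint could have been matched earlier). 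This post-processing costs $O(\Delta)$ rounds.

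The remaining task is therefore to produce a $(2\Delta-1)$-edge coloring in $O(\Delta + \log^* n)$ rounds. I would start by applying Linial's coloring algorithm to $L(G)$, which can be simulated in $L(G)$ with constant overhead in each round of communication on $G$, to obtain a proper $O(\Delta^2)$-edge coloring in $O(\log^*n)$ rounds. Then I would reduce colors: since any color $k > 2\Delta-1$ that is assigned to an edge $e$ can be swapped to some color in $\{1,\dots,2\Delta-1\}$ not used by any of the at most $2\Delta-2$ edges adjacent to $e$, the color classes can be eliminated in decreasing order of color. The subtle point is that naively doing one class per round would cost $\Theta(\Delta^2)$ rounds, so to meet the $O(\Delta)$ budget I would process color classes in batches: since classes are edge-independent sets, many eliminations can be scheduled in parallel as long as one chooses replacement colors consistently, for example by partitioning the high colors into $O(\Delta)$ groups such that each group can be recolored in a single round using the standard ``list coloring by degree'' argument.

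The main obstacle, and the heart of the Panconesi--Rizzi analysis, is precisely this batched color reduction: producing $2\Delta-1$ edge colors from $O(\Delta^2)$ colors in $O(\Delta)$ rounds rather than $O(\Delta^2)$. I expect that the cleanest argument uses an auxiliary constant-time procedure that, given any proper edge coloring with $k$ colors where $k \ge 2\Delta$, merges two color classes into one in $O(1)$ rounds whenever a suitable ``defect'' structure (an edge orientation with small out-degree on the color-class graph) has been precomputed; computing that structure once at the start is what absorbs the $O(\Delta)$ overhead. Once this edge-coloring subroutine is in place, combining it with the color-class iteration above yields the claimed $O(\Delta + \log^* n)$-round maximal matching algorithm.
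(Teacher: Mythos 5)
The paper states this lemma with a citation to Panconesi and Rizzi and gives no proof, so you are reconstructing an argument from the literature rather than one in the paper. Your reduction from maximal matching to a $(2\Delta-1)$-edge coloring is correct, and so is the observation that Linial applied to $L(G)$ gives $O(\Delta^2)$ colors in $O(\log^* n)$ rounds. The gap is exactly the step you flag yourself: compressing $O(\Delta^2)$ colors down to $2\Delta - 1$ in $O(\Delta)$ rounds. Your batching argument fails as stated: when an entire batch of high color classes recolors simultaneously, two edges sharing an endpoint of $G$ that lie in the same batch can each independently choose the same replacement color, destroying properness. Pairwise disjointness of the classes inside a batch gives you nothing about conflicts \emph{between} classes recolored in the same round. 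The ``auxiliary constant-time merging procedure'' you invoke is precisely the content of the lemma and is left unspecified; going from $k$ colors to $d+1$ colors on a conflict graph of maximum degree $d$ in $o(k)$ rounds genuinely requires nontrivial machinery and is not a matter of bookkeeping.

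The Panconesi--Rizzi-style argument avoids this bottleneck by never producing $\Omega(\Delta^2)$ colors that then need compressing. Orient every edge toward its higher-ID endpoint and let each vertex label its at most $\Delta$ outgoing edges with distinct indices from $\{1,\dots,\Delta\}$; for each $i$, the edges labeled $i$ form a subgraph $F_i$ with out-degree at most one at every vertex and an acyclic orientation, hence a forest, so this is a zero-round decomposition of $E(G)$ into $\Delta$ forests. Seed with Linial's $O(\Delta^2)$-coloring of $V(G)$ and run Cole--Vishkin in parallel on all the $F_i$ to obtain a $3$-vertex-coloring of each forest in $O(\log^* n)$ total rounds. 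Then process the forests sequentially: within $F_i$, iterate over the three vertex colors, letting each still-unmatched vertex of the current color propose along its unique $F_i$-parent edge to an unmatched parent, parents accepting one proposal. This is $O(1)$ rounds per forest and $O(\Delta)$ in total, and the resulting matching is maximal because any surviving free edge of some $F_i$ would have triggered a proposal that its parent endpoint could not have left unanswered. It is this forest decomposition followed by Cole--Vishkin, rather than Linial on the line graph followed by color compression, that your proposal is missing.
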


Let $T_{\mathrm{BM}}(n,\delta,r)$ to be the runtime of a bipartite maximum matching algorithm with $n$ nodes, maximum degree at most $\delta$ and rank at most $r$. Let $T_{\mathrm{M}}(n,\Delta)$ to be the runtime of a maximal matching algorithm on regular graphs with maximum degree $\Delta$. The framework of \cite{GKMU18} essentially yields the following result. As the runtime is parameterized entirely differently and to be self-contained we repeat the short algorithm and its proof. 

\begin{lemma}[Extracting (3)-graphs \cite{GKMU18}] \label{lemma:extracting-3-graphs}
    Let $G = (V,E)$ be a graph with maximum degree $\Delta \geq 3$. There is a deterministic distributed algorithm with time complexity
    $$
    O(
        T_{\mathrm{M}}(n,\Delta) + 
        T_{BM}(n,\Delta,\Delta - 1) +
        T_{\mathrm{M}}(n,\Delta - 1) + 
        T_{BM}(n,\Delta - 1,\Delta - 2) +
        T_{\mathrm{M}}(n,3)
    )
    $$
     that computes an edge set $F \subseteq E$ such that $H = (V,F)$ is a (3)-graph and the maximum degree of the graph $(V,E - F)$ is at most $\Delta - 2$.
\end{lemma}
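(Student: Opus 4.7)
The plan is to follow the two-phase coverage strategy implicit in~\cite{GKMU18}, replacing their expensive bipartite maximum matching subroutine by \Cref{cor:maximumMatchingDeterministic}, and to finish with a cleanup step that restores the (3)-graph property. Let $V_\Delta\subseteq V$ denote the set of vertices of maximum degree in $G$. In the first phase I would compute a maximal matching $M_1$ of $G$ in $T_M(n,\Delta)$ rounds; by maximality, the set $U_1\subseteq V_\Delta$ of $V_\Delta$-vertices left unmatched by $M_1$ has the property that every neighbor of a $U_1$-vertex is matched. Form the bipartite graph $B_1$ whose edges are all $G$-edges between $U_1$ and $V\setminus U_1$. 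Each vertex of $U_1$ has degree $\Delta$ in $B_1$, whereas each $V\setminus U_1$-vertex has $B_1$-degree at most $\Delta-1$ since its $M_1$-edge leaves to a partner outside $U_1$. Thus \Cref{cor:maximumMatchingDeterministic} yields a $U_1$-saturating matching $M_2$ of $B_1$ in $T_{BM}(n,\Delta,\Delta-1)$ rounds, and $F_1:=M_1\cup M_2$ has maximum degree $2$ with exactly one incident edge at each vertex of $V_\Delta$.

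The second phase reruns the same construction on $G':=(V,E\setminus F_1)$, which has maximum degree $\Delta-1$ and in which each $v\in V_\Delta$ has residual degree $\Delta-1$. In $T_M(n,\Delta-1)+T_{BM}(n,\Delta-1,\Delta-2)$ rounds I compute a maximal matching $M_3$ of $G'$ and a bipartite saturating matching $M_4$ on the analogous bipartite graph, and set $F_2:=M_3\cup M_4$. This again has maximum degree $2$ and contributes one additional incident edge to every $v\in V_\Delta$. Hence the union $\tilde F:=F_1\cup F_2$ covers every $V_\Delta$-vertex exactly twice, so $(V,E\setminus\tilde F)$ has maximum degree at most $\Delta-2$; however, $\tilde F$ is generally not yet a (3)-graph, since a non-$V_\Delta$ vertex may accumulate one edge from each of $M_1,M_2,M_3,M_4$ and reach degree $4$, and adjacent degree-$3$ vertices in $\tilde F$ are not a priori excluded.

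The cleanup exploits that every edge of $M_2$ and every edge of $M_4$ has at least one endpoint in $V_\Delta$, namely in $U_1$ or in the analogous set $U_2\subseteq V_\Delta$ from Phase~$2$. Therefore, at any non-$V_\Delta$ vertex $u$, the only $\tilde F$-edges to other non-$V_\Delta$ vertices come from $M_1$ and $M_3$, of which there are at most two in total. Consequently the subgraph $\tilde F'\subseteq\tilde F$ consisting of those edges whose both endpoints lie outside $V_\Delta$ has maximum degree at most $3$, and deleting any subset of its edges preserves the degree of every $V_\Delta$-vertex in the resulting set. A single maximal-matching computation on $\tilde F'$ in $T_M(n,3)$ rounds is then used to select which edges to drop from $\tilde F$, aiming to simultaneously (i)~bring every degree-$4$ vertex down to degree $3$ and (ii)~destroy every adjacency between two degree-$3$ vertices. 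This last step is the main technical obstacle: one must verify that the chosen $T_M(n,3)$-time local procedure on $\tilde F'$, possibly combined with a small amount of additional local reasoning, resolves all (3)-graph violations simultaneously without disturbing the two-edge coverage of any $V_\Delta$-vertex. Summing the five runtime contributions yields the claimed complexity.
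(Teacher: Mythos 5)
Your proposal deviates from the paper's construction in one detail that turns out to be essential, and that deviation is precisely what creates the ``main technical obstacle'' you leave unresolved. The paper's $\mathsf{ReduceDegree}$ routine computes the maximal matching on the \emph{induced subgraph} $G[V_\Delta]$ of maximum-degree vertices, not on all of $G$. This has two consequences: (i) each call hands every vertex at most two new edges (one from the maximal matching, one from the bipartite matching), and (ii) a vertex that received two edges in the first call has residual degree at most $\Delta-2$, hence is not a maximum-degree vertex of the residual graph and cannot participate in the second call's maximal matching; it can only pick up one further edge as a right-side vertex of the second bipartite matching. So the union of the four matchings has maximum degree $3$, no degree-$4$ vertices ever arise, and the only cleanup needed is a maximal matching among the degree-$3$ vertices of the union to break adjacencies between them --- which is safe because every endpoint of a removed edge had three incident edges and keeps two, preserving the reduction to residual degree $\Delta-2$.

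By running $M_1$ and $M_3$ on the whole graph you genuinely create degree-$4$ vertices, and your cleanup cannot always repair them. Consider $u\notin V_\Delta$ matched by all of $M_1,M_2,M_3,M_4$ with all four partners lying in $V_\Delta$, each of those partners having exactly two incident edges in $\tilde F$. Then $\tilde F'$ contains no edge at $u$, so no procedure restricted to $\tilde F'$ can touch $u$; and deleting any of $u$'s four edges directly would leave some $V_\Delta$-vertex with only one removed edge, i.e., residual degree $\Delta-1$. Nothing in your argument rules out this configuration, so the scheme fails independently of how cleverly the $T_{\mathrm{M}}(n,3)$ step is instantiated. Two smaller slips compound this: a matched $V_\Delta$-vertex can also be hit by $M_2$ on the right side, so ``exactly one incident edge at each vertex of $V_\Delta$'' after phase 1 is false; and $U_2$ consists of the maximum-degree vertices of $G'$, which include vertices of original degree $\Delta-1$, so $U_2\subseteq V_\Delta$ is also false, undermining the claim that $\tilde F'$ receives edges only from $M_1$ and $M_3$. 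The fix is to adopt the paper's restriction of both maximal matchings to the induced subgraphs on the current maximum-degree vertices; the rest of your argument (the degree bounds on the bipartite graphs and the appeal to \Cref{cor:maximumMatchingDeterministic}) then goes through as in the paper.
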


\Cref{lem:threeGraphExtraction} follows from \Cref{lemma:extracting-3-graphs} as the runtime of the maximal matching terms is bounded by $O(\Delta+\log^*n)$ via \Cref{lem:maxmatch},  and the runtime of the  node saturating bipartite matching is upper bounded by $O(\log_{\frac{\Delta}{\Delta-1}}n)=O(\Delta\log n)$ via \Cref{cor:maximumMatchingDeterministic}. 

In the following for a graph $G=(V,E)$ and a set of edges $M\subseteq E$, we write $G-M\coloneqq(V,E\setminus M)$ for the graph $G$ without the edges in $M$. 

\begin{algorithm}
    \caption{$\mathsf{ReduceDegree}(G,\Delta)$: Returns a set of edges $M$ such that the maximum degree of the remaining graph $(V,E-M)$ is at most $\Delta - 1$.}
    \begin{algorithmic}[1]
        \State $V_\Delta \gets \{v \in S \mid \deg_G(v) = \Delta\}$
        \State  Compute a maximal matching $M_\Delta$ of $G[V_\Delta]$ (\Cref{lem:maxmatch}).
        \State $G' \gets (V, E - M_\Delta); \quad V_{\Delta}' \gets \{v \in S \mid \deg_{G'}(v) = \Delta\}$
        \State Let $B$ be the bipartite subgraph of $G'$ spanned by $V_{\Delta}'$ and $V - V_{\Delta}'$.
        \State Compute a maximum matching $M_B$ of $B$ (\Cref{cor:maximumMatchingDeterministic}).
        \State \Return $M_\Delta \cup M_B$
      \end{algorithmic}
      \label{alg:threeGraphExtraction}
    \end{algorithm}

\begin{algorithm}
\caption{$\mathsf{ThreeGraphExtraction}(G,\Delta)$: Returns a (3)-graph $H = (V,F)$ such that the maximum degree of the graph $(V, E - F)$ is at most $\Delta - 2$.}
\begin{algorithmic}[1]
    \State $M_1 \gets \text{ReduceDegree}(G,\Delta)$
    \State $M_2 \gets \text{ReduceDegree}(G - M_1, \Delta - 1)$
    \State Lemma \ref{lem:maxmatch}: Compute a maximal matching $M'$ of degree 3 nodes in $H'$.
    \State \Return $H = (V, M_1 \cup M_2 - M')$
  \end{algorithmic}
  \label{alg:reduceDegree}
\end{algorithm}

\medskip
\noindent
\textit{Proof of \Cref{lemma:extracting-3-graphs}.}\quad
See \Cref{alg:threeGraphExtraction,alg:reduceDegree} for pseudocode of the algorithm. 
    Let $V_\Delta$ denote the set of vertices of degree $\Delta$ in $G$. In the first step of the algorithm we compute a maximal matching $M_\Delta$ of $V_\Delta$ and define $G' := (V, E - M_\Delta)$. Next, we consider the set $V_{\Delta}'$ of unmatched vertices in $V_\Delta$. Notice that $V_{\Delta}'$ forms an independent set, as an edge between two vertices in $V_{\Delta}'$ would contradict the maximality of $M_\Delta$. Thus every vertex in $V_{\Delta}'$ has exactly $\Delta$ neighbors in the complementary set $V - V_{\Delta}'$. Further, since $V_{\Delta}'$ contains all vertices with degree $\Delta$ in $G'$, the bipartite subgraph $B$ of $G'$ spanned by $V_{\Delta}'$ and $V - V_{\Delta}'$ satisfies
    $$
        \Delta = \min_{u \in V_{\Delta}'} \deg_{B}(u) > \max_{v \in V - V_{\Delta}'} \deg_{B}(v) = \Delta - 1.
    $$
    Hence, we may apply Corollary 
    \ref{cor:maximumMatchingDeterministic} to obtain a maximum matching $M_B$ of $B$ that saturates all vertices in $V_{\Delta}'$. Then we repeat this procedure to reduce the maximal degree further down to $\Delta - 2$.
    Finally, in order not to remove more edges than necessary (and to obtain the degree $3$ nodes form an independent set), we compute a maximal matching $M'$ of the degree-3 nodes in $M$ and re-add them to $G$ at the end of the procedure.
    Now we will prove the following two central claims about this algorithm:

    \textbf{H is a valid (3)-graph:} 
    First, we observe that each execution of $\mathsf{ReduceDegree}$ can add at most two incident edges per vertex. Hence, it only remains to argue that a vertex $v$ with two incident edges in $M_1$ cannot gain two additional incident edges in $M_2$. Clearly, it holds that $\deg_{G - M_1}(v) \leq \Delta - 2$. Hence $v$ cannot participate in the maximal matching of vertices with degree $\Delta - 1$ in $G - M_1$ and can gain at most one additional incident edge during the second execution of $\mathsf{ReduceDegree}$. Thus the maximum degree of $H'$ is $3$. This implies that also $H\subseteq H'$ has maximum degree $3$. 
    
    Now, as $H$ is formed by removing a maximal matching between the degree $3$ nodes in $H'$, $H$ does not contain any two adjacent nodes with degree $3$. 

    \textbf{The maximum degree of $G\setminus H$ is at most $\Delta-2$:}
    We argue that each iteration of $\mathsf{ReduceDegree}$ reduces the maximal degree of $G$ by at least one. We observe that every vertex that is not matched by $M_\Delta$ must be matched by $M_B$, since our bipartite maximum mathching algorithm is guaranteed to saturate the $V_{\Delta}'$-side of the bipartite graph. Finally, adding the edges of $M'$ back cannot increase the maximal degree to above $\Delta - 2$, since both vertices have still degree two in $H$.
\qed

\subsection{Edge Coloring (3)-graphs with 3 Colors}
\label{sec:colorThreeGraphs}
In this section, we prove \Cref{lem:ColorThreeGraphs}, i.e., we show that $(3)$-graphs can be $3$-edge-colored in $O(\log n)$ rounds.
For an overview of our approach, we refer the reader to \Cref{sec:nutedge}.
Before stating the algorithm we will use to obtain \Cref{lem:ColorThreeGraphs}, we need a few definitions.
\begin{definition}
    Let $G=(V,E)$ be a graph and let $W \subseteq V$ and $F \subseteq E$. Then we denote
    \begin{itemize}
        \item by $G[W]$ the graph induced by nodes $W$ in $G$,
        
        \item by $G[F]$ the graph induced by edges $F$ in $G$, and
         
        \item by $N_G[W]$ the union of $W$ with the set of nodes that are adjacent to some node in $W$.
    \end{itemize}
    We may simply write $N[W]$ if there is no ambiguity for the choice of $G$, and set $N_G(v) := N_G[\{ v \}]$.
    Moreover, for a subgraph $G'$ of $G$, set $N[G'] := N[V(G')]$.
    Finally, the \emph{edge degree} of an edge $e$ is defined as the number of edges that are adjacent to $e$.
\end{definition}

\begin{definition}
    Let $G$ be a directed graph.
    For a directed edge $e = (u, v)$, we call $u$ the \emph{tail} of $e$, denoted by $\tail(e)$, and $v$ the head of $e$, denoted by $\head(e)$.
    Moreover, for two edges $e_{1} \neq e_{2}$ in $G$, we call
    \begin{itemize}
        \item $e_{1}$ a \emph{sibling} of $e_{2}$ if $\head(e_{1}) = \head(e_{2})$ and
        
        \item $e_{1}$ a \emph{child} of $e_{2}$ and $e_{2}$ a \emph{parent} of $e_{1}$ if $\tail(e_{2})= \head(e_{1})$.        
    \end{itemize}
\end{definition}

Now we are ready to state the $3$-edge-coloring algorithm $\mathcal A$ that we will use for the proof of \Cref{lem:ColorThreeGraphs}.
In the remainder of this section, we assume the input graph $G = (V,E)$ to be a $(3)$-graph.
Algorithm $\mathcal A$ proceeds in $3$ steps as follows.

\paragraph{Step 1: Partitioning the nodes into clusters.}
    In this step, the nodes are partitioned into clusters such that the subgraphs induced by each of these clusters are connected, pairwise disjoint, and of constant diameter.
    
    Compute a maximal independent set $\mathcal{I}$ on the power graph $G^{9}$, i.e., on the graph obtained from $G$ by adding (to the already existing edge set) an edge between any two nodes of distance between $2$ and $9$. Now each node $x$ chooses a node $d_{x}$ (which will be used to determine the cluster to which it belongs) using the following clustering process (that will ensure connectedness of the subgraphs induced by the clusters):
        \begin{algorithm}[H]
        \caption*{Clustering process}\label{clustering}
            \begin{algorithmic}
            \State $d_{x} \gets \phi$ , $i \gets 1$
            \If{$x \in \mathcal{I}$}
                \State $d_{x} \gets x$
                \State \Return 
            \EndIf
            \While{$i \leq 9$}
                \If{($d_{y} = p \neq \phi$ for some $y \in N_{G}(x)$)}
                    \State $d_{x} \gets p$ (breaking ties arbitrarily if the condition is satisfied for more than one $y \in N_{G}(x)$)
                    \State \Return 
                \EndIf 
                \State $i \gets i+1$
            \EndWhile
            \end{algorithmic}
        \end{algorithm}
    Each node $x$ gets a non-null value for $d_{x}$ by this clustering process.\\ 
    Define for each $i \in  \mathcal{I}$,
        \begin{align*}
        V_{i} &\coloneqq \{x \in V : d_{x} = i\} \text{, and}\\
        G_i &\coloneqq G[V_i] \text{,}
        \end{align*}
    and define $E_{i}$ as the set of edges of $G_{i}$.
    Note that, by construction, $G_{i}$ is connected, is disjoint from $G_{j}$ if $i \neq j$, and has constant diameter.
    Moreover, all nodes that are within distance $4$ from some node $i \in \mathcal I$ belong to $V_{i}$. We may abuse the term \emph{cluster} to refer to either $V_{i}$ or $G_i$.
        
    The described clustering induces a partitioning of the edges of $G$ into two sets $E_{\intra}$ and $E_{\inter}$ by defining
        \begin{align*}
        E_{\intra} & \coloneqq \bigcup_{i \in \mathcal{I}}E_{i}\\ 
        E_{\inter} & \coloneqq E\setminus E_{\intra}
        \end{align*} 
    We will refer to the edges in $E_{\intra}$ and $E_{\inter}$ as \emph{intracluster edges} and \emph{intercluster edges}, respectively. Since every node is in some cluster $G_{i}$ and any cluster is connected, every node is incident to at least one intracluster edge. This implies that $G[E_{\inter}]$ has maximum degree $2$, and combining this insight with the fact that a $(3)$-graph does not contain adjacent nodes of degree $3$ (or larger), we obtain the following observation.

    \begin{observation}\label{obs:lengthtwo}
        $G[E_{\inter}]$ is a union of disjoint paths of length at most $2$.
    \end{observation}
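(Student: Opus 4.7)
The plan is to exploit two structural features: (a) every node is incident to at least one intracluster edge, and (b) no two $G$-adjacent nodes can both have degree $3$ (by \Cref{def:threeGraph}). Property (a), together with the maximum degree bound of $3$, already gives $\Delta(G[E_{\inter}]) \leq 2$, as noted in the excerpt, so $G[E_{\inter}]$ is a vertex-disjoint union of simple paths and cycles. It then suffices to rule out cycles and paths of length at least $3$.

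First I would verify (a) explicitly. For $v \notin \mathcal{I}$, the clustering process commits to $d_v = p$ through some $y \in N_G(v)$ with $d_y = p$, so the edge $\{v, y\}$ is by construction intracluster. For $v \in \mathcal{I}$, I would argue that \emph{every} $G$-neighbor $u$ of $v$ satisfies $d_u = v$: since $\mathcal{I}$ is an independent set in $G^9$, any other MIS node lies at $G$-distance at least $10$ from $v$, and therefore at $G$-distance at least $9$ from $u$. Hence $u$ has no MIS-neighbor other than $v$; and in the very first iteration of $u$'s propagation only MIS vertices carry a committed value, so $u$ sees $v$ as the unique eligible neighbor and sets $d_u \gets v$. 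Consequently every $G$-edge incident to $v$ is intracluster.

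With (a) in hand, the main claim follows from a short case analysis. If $G[E_{\inter}]$ contained a path $v_0$-$v_1$-$v_2$-$v_3$ of length $3$, then both $v_1$ and $v_2$ would be incident to two intercluster edges, which combined with the intracluster edge supplied by (a) forces $\deg_G(v_1) = \deg_G(v_2) = 3$; but $\{v_1, v_2\} \in E(G)$, contradicting \Cref{def:threeGraph}. Applying exactly the same reasoning to two consecutive vertices of a putative cycle in $G[E_{\inter}]$ rules out cycles. Therefore $G[E_{\inter}]$ decomposes into disjoint paths of length at most $2$.

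The only subtlety is pinning down (a) for MIS vertices. This is where the design choice of computing the MIS on $G^9$ (rather than some smaller power of $G$) really gets used: it guarantees that a $G$-neighbor of an MIS vertex cannot simultaneously see two different MIS candidates during the first round of propagation, so its cluster assignment is forced. Once this is in place, the structural conclusion is an immediate consequence of the $(3)$-graph definition.
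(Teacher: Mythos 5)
Your proof is correct and takes essentially the same route as the paper's: every node retains at least one intracluster edge, so $G[E_{\inter}]$ has maximum degree $2$, and the $(3)$-graph property (no two adjacent degree-$3$ nodes) then rules out cycles and paths of length $\geq 3$. The only difference is that you verify the intracluster-edge claim by tracing the clustering process explicitly (including the MIS case), whereas the paper deduces it directly from each node belonging to a connected cluster; both are valid.
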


\paragraph{Step 2: Coloring the intercluster edges.}
    Call a cluster $G_{i}$ \emph{expanding} if it has at least $9$ adjacent intercluster edges.
    Consider the multigraph $H=(V(H),E(H))$ defined by 
        \begin{align*}
            V(H) & \coloneqq \{V_{i}:i \in \mathcal{I}\} \text{, and} \\
            E(H) & \coloneqq \{ (V_{d_x},V_{d_y}): \{x,y\}\text{ is an intercluster edge}\}.
        \end{align*}
    In other words, $H$ is the multigraph obtained from $G$ by contracting clusters.
    
    We start Step 2 by computing an orientation on $H$ such that each node with degree at least $9$ has at least $3$ outgoing edges as follows: Consider the graph $H'$ obtained from $H$ by splitting each node $v \in V(H)$ into three copies $v_1, v_2, v_3$ such that for each edge $e$ incident to $v$ in $H$, the endpoint $v$ of $e$ is replaced by precisely one of $v_1, v_2, v_3$. More precisely, we split the edges incident to $v$ as evenly as possible between these three nodes (as endpoints), i.e., the degrees of $v_j$ and $v_k$ in $H'$ differ by at most $1$, for any $j \neq k \in \{ 1, 2, 3 \}$.
    In particular, for each node $v \in V$ of degree at least $9$, we have $\deg(v_j) \geq 3$, for each $j \in \{ 1, 2, 3 \}$.
    Compute a sinkless orientation on $H'$ by using
    the deterministic sinkless orientation algorithm from~\cite[Corollary 4]{Splitting20}.    
    This provides an outgoing edge for each of the nodes of degree at least $3$ in $H'$ and therefore gives the desired orientation on $H$.
    
    Next, each node in $H$ with degree at least $9$ chooses $3$ of its outgoing edges. This naturally corresponds to a set of $3$ chosen edges in $E_{\inter}$ for each expanding cluster.
    Note that, by construction, any intercluster edge is chosen for at most one cluster.
    
    Now, compute a $3$-edge coloring $\varphi$ of $G[E_{\inter}]$ greedily (which can be done in constant time due to \Cref{obs:lengthtwo}) and go through the color classes of $\varphi$ in phases to compute a new coloring $\psi$. In phase $c$ corresponding to color (class) $c$, each edge $e$ of color $\varphi(e) = c$ receives a new color $\psi(e)$ as follows.

    If $e$ is not chosen for any cluster, then we assign to $e$ an arbitrary color that is distinct from the colors (in $\psi$) assigned to edges adjacent to $e$ that are already colored in $\psi$.
    We do the same if $e$ is chosen for some cluster $G_i$ that is not a tree.
    What remains is to define how to color any edge $e$ that is one of the three edges chosen for some cluster that is a tree.
    
    To this end, consider a cluster $G_i$ whose corresponding node in $H$ has degree at least $9$, and let $e_1$, $e_2$, and $e_3$ be the chosen edges of cluster $G_i$.
    For determining the color $\psi(e_1)$, $\psi(e_2)$, and $\psi(e_3)$, we proceed as follows.
    
    Let $G'_i$ denote the subgraph of $G$ induced by all edges that have at least one endpoint in $G_i$.
    Note that $G'_i$ is not necessarily a tree, but for each edge in $G'_i$ the two endpoints of the edge have a different distance to node $i$ (since $G_i$ is a tree). 
    Orient the edges in $G'_i$ towards $i$.
    W.l.o.g., assume that $\varphi(e_{1}) \leq \varphi(e_{2}) \leq \varphi(e_{3})$, i.e., we can assume that the edges are to be colored in the order $e_1, e_2, e_3$.
    (Note that if $\varphi(e_j) = \varphi(e_{j+1})$ for some $j \in \{ 1, 2\}$, we can still operate under this assumption by simply waiting with the decision how to color $e_{j+1}$ in $\psi$ until the color of $e_j$ is fixed.)
    Now color the edges $e_1, e_2, e_3$ as described in \Cref{algo:colproc}, which is based on an exhaustive case distinction depending on the degrees of the heads of $e_1$, $e_2$ and $e_3$ (each of which must be either $2$ or $3$ as the head of either of these edges is $\neq i$).
    Moreover, we call a color \emph{available} for an edge if the color has not already been assigned to some adjacent edge in an earlier phase. 
        \begin{algorithm}[h]
        \caption{Coloring Procedure} \label{algo:colproc} 
        \begin{outline}
        \1  \emph{Case 1: There exist distinct $k,\ell \in \{1,2,3\}$ such that the degree of both $head(e_k)$ and $\head(e_{\ell})$ is $2$.}\\
            Color $e_k$ and $e_l$ with different available colors. (This is possible since each of $e_k$ and $e_l$ is adjacent to at most one intercluster edge, by \Cref{obs:lengthtwo}.)
        \1  \emph{Case 2: For exactly one chosen edge $e_k$, the degree of $\head(e_k)$ is $2$.}
            \2  \emph{Subcase 1: $k \in \{1,2\}$}\\
                Color $e_1$ and $e_2$ with available colors. Color $e_3$ with the color of $e_k$ if that color is available for $e_3$; otherwise color $e_3$ with an arbitrary available color.
            \2  \emph{Subcase 2: $k=3$}\\
                Color $e_1$ and $e_2$ with different available colors. Color $e_3$ with one of the colors of $e_1$ or $e_2$ that is available for $e_3$. (This is possible since at most one intercluster edge is adjacent to $e_3$.)
        \1  \emph{Case 3: $\head(e_k)$ does not have degree $2$ for any $k\in \{1,2,3\}$}.\\
             Color $e_1$ and $e_2$ with different available colors. Color $e_3$ with a color that is different from the colors of both $e_1$ and $e_2$, if such a color is available for $e_3$. If no such color is available, then color $e_3$ with an arbitrary available color.
        \end{outline}
        \end{algorithm}

\paragraph{Step 3: Extending the coloring to intracluster edges.}
    In this step, we extend the coloring of the intercluster edges to a proper 3-edge coloring of the entire graph $G$ by coloring the intracluster edges (without modifying the colors of the intercluster edges).
    To this end, each node $i \in \mathcal{I}$ simply collects $G'_i$ and then chooses one of the colorings of the intracluster edges in $G_i$ that does not conflict with the coloring of the intercluster edges adjacent to $G_i$.
    As we will show later in \Cref{lem:extension-col}, the coloring $\psi$ of the intercluster edges guarantees that such a non-conflicting coloring of the intracluster edges exists.
        
    This concludes the description of Algorithm $\mathcal A$.
    Next, we bound the runtime of algorithm $\mathcal A$.

\begin{lemma}\label{lem:threecolorrun}
    Algorithm $\mathcal A$ can be performed in $O(\log n)$ rounds.
\end{lemma}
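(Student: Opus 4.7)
I would prove \Cref{lem:threecolorrun} by walking through the three steps of Algorithm $\mathcal A$ and bounding the cost of each, exploiting throughout that $G$ is a $(3)$-graph (so $\Delta \leq 3$), which in turn keeps $G^9$ of bounded degree and keeps every cluster of constant size.

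For Step~1, the plan is to observe that the graph $G^9$ has constant maximum degree (since $G$ does), so Linial's MIS algorithm on $G^9$ (simulated on $G$ at $O(1)$ overhead per round) produces $\mathcal I$ in $O(\log^* n)$ rounds. The clustering assignment $d_x$ is then determined by the constant-length \textbf{while} loop, each iteration of which requires $O(1)$ rounds on $G$; so Step~1 costs $O(\log^* n)$ rounds in total. A key byproduct I would record is that every cluster $G_i$ has diameter at most $9$ and hence contains $O(1)$ nodes, and its corresponding node in $H$ has degree $O(1)$.

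For Step~2, the plan is to argue that the bottleneck is the single call to the sinkless orientation algorithm of \cite[Corollary 4]{Splitting20} on $H'$. First, building $H$ and $H'$ is local: each node of $G$ knows its own cluster, so in $O(1)$ rounds it can determine the multiplicity-and-orientation structure of the incident $H'$-edges. Since each cluster is of constant size and constant degree in $H'$, one round of a $\LOCAL$ algorithm on $H'$ can be simulated in $O(1)$ rounds on $G$ by having the cluster's nodes exchange information internally; hence the sinkless orientation on $H'$ runs in $O(\log n)$ rounds on $G$. After that, the greedy $3$-edge-coloring $\varphi$ of $G[E_{\inter}]$ takes $O(1)$ rounds by \Cref{obs:lengthtwo} (greedy coloring on disjoint paths of length $\leq 2$ is trivial), and the subsequent processing of the three color classes of $\varphi$ consists of only three phases; in each phase the decisions in \Cref{algo:colproc} depend only on constant-radius information around each chosen edge (because clusters are of constant diameter and the structural cases only look at $\head(e_k)$'s degree and currently assigned colors of adjacent intercluster edges), so each phase costs $O(1)$ rounds.

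For Step~3, since each cluster $G_i$ has constant diameter and constant degree in $G$, the subgraph $G'_i$ has $O(1)$ nodes and can be collected in $O(1)$ rounds; the extension of the coloring is then a purely local computation at the cluster's representative node. Adding these contributions yields $O(\log^* n) + O(\log n) + O(1) = O(\log n)$ rounds, proving the lemma. The only nontrivial step is the sinkless-orientation call, and the main conceptual thing to verify during the write-up is that the simulation overhead from $H'$ to $G$ really is $O(1)$; this follows from the fact that $(3)$-graphs have constant maximum degree, which forces every cluster produced in Step~1 to be of constant size.
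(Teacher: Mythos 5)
Your proposal is correct and follows essentially the same route as the paper: simulate algorithms on $G^9$, $H$, and $H'$ on $G$ with constant overhead (using constant cluster diameter), bound Step~1 by $O(\log^* n)$ via a constant-degree MIS subroutine, observe that the sinkless orientation of~\cite[Corollary 4]{Splitting20} in Step~2 dominates with $O(\log n)$ rounds while $\varphi$ and the three phases of \Cref{algo:colproc} are constant time, and note Step~3 is constant time. The only (inessential) divergence is that the paper cites the $O(\log^* n + \Delta)$-round MIS algorithm of~\cite{BEK14} whereas you invoke "Linial's MIS algorithm"; strictly Linial gives $O(\Delta^2)$-coloring, from which MIS follows by iterating color classes, but for constant $\Delta$ both yield $O(\log^* n)$, so the substance is unchanged.
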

\begin{proof}
    We start by observing that algorithms executed on $G^9$, $H$, or $H'$  can be simulated on $G$ with only a constant-factor overhead, due to the fact that the considered clusters are of constant diameter (which implies that one round of communication on any of the aforementioned graphs can be simulated in a constant number of rounds on $G$).
    Therefore, we can treat any of the aforementioned graphs as the underlying communication graph when necessary, without incurring any asymptotic change in the overall runtime.
    Now we bound the complexity of the different (nontrivial) steps of Algorithm $\mathcal A$ one by one.
    
    Given the above observation and the fact that the maximum degree of $G$ is $3$ (which implies that the maximum degree of $G^9$ is constant), we can compute a maximal independent set of $G^9$ in $O(\log^* n)$ rounds by using the $O(\log^* n + \Delta)$-round algorithm from~\cite{BEK14}.
    The subsequent clustering process takes a constant number of rounds.
    Hence, \emph{Step 1} of algorithm $\mathcal A$ can be performed in $O(\log^* n)$ rounds.
    
    Computing a sinkless orientation as described takes $O(\log n)$ rounds, due to the same bound in~\cite[Corollary 4]{Splitting20}.
    As already observed in the algorithm description, $\varphi$ can be computed in a constant number of rounds.
    For each of the three color classes (in $\varphi$), the new colors (in $\psi$) of the respectively considered edges can be computed in a constant number of rounds as well: for the chosen edges of a cluster $G_i$, we can simply assume that node $i$ computes the new colors and sends them to the respective edges; as the new colors only depend on information in $G'_i$ (which has constant diameter), this takes a constant number of rounds.
    Hence, \emph{Step 2} can be performed in $O(\log n)$ rounds.
    
    Extending the coloring to the intracluster edges takes a constant number of rounds by design, and hence \emph{Step 3} can be performed in a constant number of rounds. Therefore, the overall runtime of Algorithm $\mathcal A$ is $O(\log n)$ rounds.
\end{proof}

The only remaining ingredient we need for the proof of \Cref{lem:ColorThreeGraphs} is to show that the coloring of the intercluster edges computed in Step 2 of $\mathcal A$ can always be completed to a proper coloring on the entire graph.
To this end, it suffices to show extendability of the coloring for each cluster individually (as the clusters are vertex-disjoint), which we will take care of in \Cref{lem:extension-col}.
Before stating and proving \Cref{lem:extension-col}, we need one more definition.

\begin{definition}
    Consider a graph $G'_i$ as defined in Algorithm $\mathcal A$ and let $k$ be a non-negative integer. Define
    \begin{align*}
        L_{k} \coloneqq \{e \in E: e \text{ is at distance $k$ from $i$}\},
    \end{align*}
    where an edge incident to $i$ is assumed to have distance $0$ from $i$. 
    We call $L_{v}$ the \emph{layer at distance $k$} from $i$.
    Moreover, for a partial coloring of the edges of $G'_i$, we call the set of colors that appear in layer $L_{k}$ the \emph{color palette of $L_{k}$}.
\end{definition}
    
We are now ready to prove \Cref{lem:extension-col}.

\begin{lemma}\label{lem:extension-col}
    Let $G_i$ be a cluster. Then, the $3$-edge coloring $\psi$ of $G[E_{\inter}]$ can be (properly) extended to $G_i$.
\end{lemma}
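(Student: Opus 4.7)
I would root $G_i$ at $i$ and extend $\psi$ to the intracluster edges via an inward layer-by-layer sweep, processing edges in order of decreasing distance from $i$ (from layer $L_k$ down to $L_0$) and, at each step, arguing that at least one of the three colors is available for the edge currently being processed. The key combinatorial fact is that, since $G$ is a $(3)$-graph, no two degree-$3$ vertices are adjacent and hence every edge of $G$ has at most three adjacent edges; so three colors suffice in principle, but with little slack. The proof then splits along the two cases already singled out in the algorithm design.

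\textbf{Easy clusters.} Here $G_i$ either contains a cycle or contains an intracluster edge $e^\star$ of edge degree at most $2$. In either situation I would reserve a ``relief edge'' (the edge $e^\star$, or an appropriately chosen edge on the cycle) to be colored last, and color every other intracluster edge via the inward sweep. A direct $(3)$-graph degree count shows that every non-relief edge $e$ has at most two already-colored adjacent edges at the moment it is processed: when $e$ lies in layer $L_j$, its parent edge in $L_{j-1}$ is not yet colored and only the colored intercluster edges plus the already-processed deeper intracluster edges contribute constraints. The relief edge itself has at most two adjacent edges in all of $G$, so a free color always remains for it.

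\textbf{Non-easy clusters.} Now $G_i$ is a tree in which every intracluster edge has edge degree exactly $3$; combined with the $(3)$-graph condition this forces each intracluster edge to have one endpoint of degree $3$ and one of degree $2$ in $G$. Here the naive sweep can fail, and the heart of the proof is to exploit the control that Step~2 of Algorithm $\mathcal{A}$ has exerted on the three chosen outgoing intercluster edges $e_1, e_2, e_3$: the case split in the Coloring Procedure, based on how many of $\head(e_1), \head(e_2), \head(e_3)$ have degree $2$, pre-commits $\psi(e_1), \psi(e_2), \psi(e_3)$ in such a way that, together with any adversarial coloring of the remaining intercluster edges, at least one color remains free for each intracluster edge encountered during the sweep. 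I would verify this by matching each of Cases~1, 2, 3 of the Coloring Procedure against the structural configurations of a non-easy tree and showing that, in particular at the ``tight'' edges closest to $i$, the pre-committed colors on $e_1, e_2, e_3$ force one palette color to repeat so that a color remains available.

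\textbf{Main obstacle.} The hard part is the case analysis in the non-easy case, and a specific subtlety I expect to have to handle is propagating the commitment on $\psi(e_1), \psi(e_2), \psi(e_3)$ consistently up the tree without creating a new conflict at a sibling edge. I would resolve this by repeatedly using that in a non-easy cluster no two degree-$3$ vertices are adjacent, so at any vertex only a bounded number of constraints collide at once and the three committed colors suffice.
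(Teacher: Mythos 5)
Your overall decomposition (easy clusters vs.\ tree clusters where every edge has edge degree $3$) matches the paper's, and your inward sweep in order of non-increasing distance from $i$ is also the paper's first move in the hard case. But there is a genuine gap: the greedy inward sweep does \emph{not} succeed on its own for non-easy clusters, and your plan does not contain the mechanism that the paper uses to rescue it. The paper explicitly notes that the sweep may leave exactly one uncolored edge $f$ incident to $i$ (the last edge processed has no ``still-uncolored neighbor'' to spare it). You write that the commitments on $\psi(e_1),\psi(e_2),\psi(e_3)$ ensure ``at least one color remains free for each intracluster edge encountered during the sweep,'' but that is false as stated; the commitments on $e_1,e_2,e_3$ do not act locally on every edge of the sweep. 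What they actually do is guarantee, globally, the existence of a \emph{friendly} edge --- an edge $e$ some child of which has the same color as some sibling of $e$ --- somewhere in $G_i$. The paper proves this existence by contradiction: if no friendly edge exists, the palette of each layer $L_s$ is forced to alternate in a rigid way (a singleton on one parity, its complement on the other), and each case of the Coloring Procedure is designed precisely to break that rigid alternation. Once a friendly edge $e$ is located, the paper uncolors the unique path $P$ from $e$ to $f$ and recolors it starting from $f$; the friendliness of $e$ (two same-colored neighbors) is what gives $e$ a free color at the end. This Kempe-chain-style path recoloring step, together with the friendly-edge existence lemma, is the heart of the argument and is entirely absent from your plan; without it your sweep is stuck at $f$.

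A smaller gap: for clusters containing an even cycle $C$ whose every edge has edge degree $3$, reserving a single ``relief edge'' on $C$ is not enough --- after coloring everything off $C$, you are left with a $2$-list edge coloring instance on an even cycle, which the paper solves by appealing to the known fact that even cycles are $2$-list-edge-colorable. (Odd cycles do reduce to the edge-of-degree-$\le 2$ case, as you could argue, since two degree-$2$ vertices must be adjacent on an odd cycle in a $(3)$-graph.) You should either prove the even-cycle list-coloring fact or cite it, rather than rely on a single relief edge.
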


\begin{proof}
    We start by observing that every edge in $G$ has edge degree at most $3$ (as $G$ is a $(3)$-graph).
    Now consider the following exhaustive cases:
    \begin{enumerate}
    \item \emph{$G_{i}$ contains an edge $e$ of edge degree at most $2$ (in $G$).}\\
        Extend the $3$-edge coloring of the intercluster edges to a $3$-edge coloring of $G_{i}$ by coloring the edges of $G_{i}$ in an order of non-increasing distance from $e$, i.e., color an edge only after all the edges that are farther from $e$ are colored. This can be done greedily with $3$ colors since for all edges except $e$, there is always an adjacent edge that is yet to be colored (and each edge has edge degree at most $3$ as observed above). We can choose a color for $e$ irrespective of the color of its adjacent edges, since its edge degree is $2$.
    \item \emph{$G_{i}$ contains a cycle $C$.}\\
        If $C$ is odd, it must be that two nodes of degree $2$ are adjacent since degree-$3$ nodes cannot be adjacent as $G$ is a $(3)$-graph. It follows that the edge between those two nodes of degree $2$ has edge degree $2$, which reduces to an instance of the previous case (and we are done). Hence, assume that $C$ is even. Then greedily $3$-color the edges of $G_{i}$ that do not lie on $C$ in an order of non-increasing distance from $C$. This leaves us with a $2$-list edge coloring problem for an even cycle, which is known to always have a solution (see, e.g., \cite[Section 2.1]{GhaffariHKM21}). 
    \item \emph{$G_{i}$ is a tree and each edge of $G_{i}$ has edge degree $3$ (in $G$).}\\
        Note that in this case, the fact that each node within distance $4$ of $i$ is contained in $G_i$ ensures that the node in $H$ corresponding to $G_i$ has degree at least $9$.
        Therefore, there will be three edges chosen for $G_i$ whose colors will be determined by \Cref{algo:colproc}.
            Consider the subgraph $G'_i$ along with the partial coloring it obtained from Step 2 of Algorithm $\mathcal A$ and recall that the edges in $G'_i$ are oriented towards $i$.
            
        Greedily $3$-color the edges of $G_i$ in an order of non-increasing distance from $i$ (until this is not possible anymore), i.e., color an edge only after all the edges in a farther layer from $i$ have been colored. This will color all the edges in $G_{i}$ except potentially an edge $f$ adjacent to $i$ since for any edge that is not colored last, there is always an adjacent edge yet to be colored, which guarantees that at most two of its adjacent edges are already colored. If there is no such edge $f$, we are done, hence assume that $f$ exists (and is uncolored).
        Call an edge $e$ \emph{friendly} if one of $e$'s children is colored the same as one of $e$'s siblings.

        We claim that at least one of the edges of $G_i$ is friendly (w.r.t.\ the current coloring).
        For a contradiction, assume that $G_i$ does not contain any friendly edge.
        Note that the fact that each edge of $G_{i}$ has edge degree $3$ implies that for any edge in $G_i$, one of the two endpoints has degree $2$ and the other degree $3$, which implies that one of the following must be true.
        \begin{enumerate}
            \item\label{item:twofirst} For each edge $e$ in $G'_i$, $\head(e)$ has degree $2$ if $e \in L_{s}$ for some even $s$ and degree $3$ if $e \in L_{s}$ for some odd $s$.
            \item\label{item:threefirst} For each edge $e$ in $G'_i$, $\head(e)$ has degree $3$ if $e \in L_{s}$ for some even $s$ and degree $2$ if $e \in L_{s}$ for some odd $s$.
        \end{enumerate}

        In the first case, using the assumption that $G_i$ does not contain any friendly edge, it is straightforward to show by induction that if $c \in \{ 1, 2, 3 \}$ is the color of the unique colored edge incident to $i$, then, for any even $s$, the color palette of (any nonempty) $L_s$ is $\{ c \}$ and, for any odd $s$ the color palette of (any nonempty) $L_s$ is $\{ 1, 2, 3 \} \setminus \{ c \}$.
        Analogously, in the second case, it is straightforward to show by induction that if $c' \neq c''$ are the colors of the two colored edge incident to $i$, then, for any even $s$, the color palette of (any nonempty) $L_s$ is $\{ c', c'' \}$ and, for any odd $s$ the color palette of (any nonempty) $L_s$ is $\{ 1, 2, 3 \} \setminus \{ c', c'' \}$.
        Hence, in either case, there is a color $c$ such that all edges of $G'_i$ whose head has degree $2$ have color $c$ and all edges of $G'_i$ whose head has degree $3$ have a color from $\{1,2,3\} \setminus \{ c \}$.
        We now show that this yields a contradiction for either of the three cases in \Cref{algo:colproc}.
        \begin{itemize}
            \item In case 1 of \Cref{algo:colproc}, we obtain a contradiction due to the fact that there are two differently colored edges whose head has degree $2$.
            \item In case 2, subcase 1, either $e_3$ and $e_k$ have the same color or $e_3$ must have a sibling that has the same color as $e_k$ (as the tail of $e_3$ has no incident intercluster edge except for $e_3$ itself).
            In either case, we obtain a contradiction due to the fact that there are two same-colored edges for one of which the degree of its head is $2$ while for the head of the other the degree is $3$.
            \item In case 2, subcase 2, we again obtain a contradiction due to the fact that there are two same-colored edges for one of which the degree of its head is $2$ while for the head of the other the degree is $3$.
            \item In case 3, similarly to case 2, subcase 1, either $e_1$, $e_2$, and $e_3$ have pairwise different colors or $e_3$ must have a sibling $e'_3$ such that $e_1$, $e_2$, and $e'_3$ have pairwise different colors.
            In either case, we obtain a contradiction.
        \end{itemize}
        This implies that our assumption was false and proves the claim that at least one of the edges of $G_i$ is friendly.

        Let $e$ be such a friendly edge, and let $e'$ and $e''$, respectively, denote a sibling and a child of $e$ that have the same color. 
        Consider the unique path $P$ in $G_i$ starting at edge $e$ and ending at edge $f$.
        If we ignore $f$, then $P$ is directed, which implies that both of $e'$ and $e''$ do not lie on $P$.
        Now, uncolor all colored edges on $P$ and then greedily color all edges on $P$ from $f$ towards $e$.
        As before, this is possible for all edges $\neq e$ due to the fact that each such edge has an uncolored adjacent edge at the time it is colored.
        Moreover, $e$ has two adjacent edges of the same color, which (together with the fact that $e$ has edge degree at most $3$) implies that there is also an available color for $e$.
        We conclude that the $3$-edge coloring $\psi$ of $G[E_{\inter}]$ can be extended to $G_i$, as desired. \qed
    \end{enumerate}      

\renewcommand{\qed}{}
\end{proof}

By combining the insights from this section we obtain \Cref{lem:ColorThreeGraphs}.

\medskip
\noindent
\textit{Proof of \Cref{lem:ColorThreeGraphs}.}\quad
By \Cref{lem:extension-col}, Algorithm $\mathcal A$ computes a proper $3$-coloring of any $(3)$-graph $G$.
By \Cref{lem:threecolorrun}, the runtime of Algorithm $\mathcal A$ is $O(\log n)$ rounds. \qed

\subsection{A Faster Algorithm for $(3/2+\eps)\Delta$-Edge Coloring}
\label{sec:edgeColoringProof}

At the expense of an arbitrarily small amount of additional colors, one can further reduce the runtime of \Cref{thm:edgeColoring}. This section is devoted to proving the following corollary. 
\corEdgeColoring*

For its proof, we require the following theorem. 

\begin{theorem}[{\cite[Theorem 1]{Splitting20}}] \label{thm:splitting}
    For every $\gamma > 0$, there are deterministic $O(\gamma^{-1} \cdot \log \gamma^{-1} \cdot (\log \log \gamma^{-1})^{1.71} \cdot \log n)$-round distributed algorithms for computing undirected degree splittings such that the discrepancy at each node $v$ of degree $d(v)$ is at most $\gamma \cdot d(v) + 4$.
\end{theorem}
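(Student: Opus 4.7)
My plan is to prove the degree-splitting theorem via a multi-scale reduction to a constant-discrepancy base splitting primitive that in turn reduces to sinkless orientation. The overall strategy is to trade precision (the allowed discrepancy) against runtime by varying the ``granularity'' at which we split.

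First I would establish a \textbf{base splitting lemma}: any multigraph admits a deterministic $O(\log n)$-round algorithm that 2-colors the edges with per-vertex discrepancy bounded by an absolute constant $c$. This is standard and can be obtained by an Eulerian partition argument: add a virtual node joined to every odd-degree vertex to make the graph Eulerian, then walk along each Euler tour alternating colors. To execute this distributedly, one reduces the alternation step to a sinkless-orientation task on an auxiliary ``paired-edge'' graph (as in~\cite{GS17,Splitting20}), and solves sinkless orientation in $O(\log n)$ rounds deterministically (e.g., by \Cref{thm:hypergraphSO} with $r=2$).

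Next, to drive the discrepancy from an absolute constant down to $\gamma d(v)+4$, I would layer the base routine in a \textbf{chunk-and-refine scheme} of depth $k=\Theta(\log \gamma^{-1})$. At level $0$, each vertex groups its incident edges greedily into chunks of size $s_0 = \Theta(\gamma^{-1})$; we run the base splitting on the multigraph of chunks, giving an $O(1)$-discrepancy split in chunks, i.e., discrepancy $O(\gamma^{-1})$ in actual edge count. At level $i$, we halve the chunk size $s_i \approx s_0/2^i$ and rebalance within each coarser class by another base-splitting call; the per-vertex discrepancies telescope as $c(s_0 + s_1 + \cdots + s_{k-1}) = O(\gamma^{-1})$, which is absorbed into $\gamma d(v)+4$ after rescaling $\gamma$ by a constant (and noting that for $d(v)\leq \gamma^{-1}$ the additive $4$ dominates). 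The total number of base calls is $k = O(\log \gamma^{-1})$.

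For the runtime, each level costs $O(\log n)$ rounds for the sinkless-orientation core, plus a derandomization overhead to coordinate the chunking across vertices at different scales. Using the deterministic network-decomposition machinery of~\cite{RG20} refined by~\cite{GGH23} to derandomize the per-level local choices yields an overhead of $O((\log\log \gamma^{-1})^{1.71})$ per level; combining with the leading $\gamma^{-1}$ chunk-size factor (which arises because the coarsest level's base call operates on a chunk graph of rank $\gamma^{-1}$, and the base primitive's cost scales linearly with chunk size) gives the claimed $O(\gamma^{-1}\cdot \log\gamma^{-1}\cdot (\log\log \gamma^{-1})^{1.71}\cdot \log n)$ total. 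The \textbf{main obstacle} I expect is the precise bookkeeping of how discrepancies compose across scales: one must guarantee that the level-$i$ splitting is compatible with the partial coloring inherited from levels $<i$, so that the errors telescope rather than multiply, and that the auxiliary chunk graphs at different scales can share a single network decomposition without blowing up any parameter.
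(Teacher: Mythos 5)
First, a point of reference: the paper does not prove \Cref{thm:splitting} at all — it is imported verbatim as Theorem~1 of \cite{Splitting20} — so there is no in-paper proof to compare against, and your attempt has to be measured against the argument in that external work.

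Your reconstruction has a genuine gap at its foundation. The ``base splitting lemma'' (every multigraph admits a deterministic $O(\log n)$-round splitting with discrepancy bounded by an absolute constant) is not standard, and the argument you give for it fails: alternately $2$-coloring the edges along an Euler tour is an inherently global task (tours have length up to $\Theta(n\Delta)$), and sinkless orientation on the paired-edge auxiliary graph does not produce an alternating coloring — it only guarantees each auxiliary node one outgoing edge, whereas alternation along a cycle requires a globally consistent orientation/parity, which is exactly the $\delta=r=2$ obstruction costing $\Omega(n)$ rounds that is discussed around \Cref{thm:linearLowerBound}. Moreover, \cite{Splitting20} itself proves a lower bound of roughly $\Omega(1/\gamma)$ rounds for discrepancy $\gamma\, d(v)$, so constant discrepancy on degree-$d$ graphs needs $\Omega(d)$ rounds and cannot serve as an $O(\log n)$-round primitive. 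The actual proof inverts your logic: pair the edges at each vertex to decompose the graph into paths and cycles, \emph{cut} these into pieces of length $O(1/\gamma)$ so that each vertex is an endpoint of only $O(\gamma\, d(v))$ pieces, and alternately color within each short piece, paying $O(1)$ discrepancy per cut; the recursive cutting procedure contributes the $\log \gamma^{-1}$ and $(\log\log\gamma^{-1})^{1.71}$ factors (the latter from a coloring subroutine on low-degree auxiliary graphs, not from the network decompositions of \cite{RG20,GGH23}, which postdate \cite{Splitting20} and would introduce $\poly\log\log n$ factors in $n$). Separately, your discrepancy accounting does not close even granting the base lemma: a telescoped additive error of $O(\gamma^{-1})$ is not bounded by $\gamma\, d(v)+4$ for vertices with $\gamma^{-1} \ll d(v) \ll \gamma^{-2}$, so the ``rescale $\gamma$ by a constant'' step fails on that degree range.
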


This result was originally used to compute a $(2 + \eps)\Delta$-edge coloring \cite[Corollary 1]{Splitting20} by recursively splitting the graph into smaller parts that are then colored with a classic $(2\Delta-1)$-edge coloring algorithm. We will now closely follow this argument, but use our new $3\Delta/2$-edge coloring algorithm instead of a standard $(2\Delta - 1)$-edge coloring algorithm for the base case.

\medskip
\noindent
\textit{Proof of \Cref{cor:EdgeColoringFast}.}\quad
We start by applying Theorem \ref{thm:splitting} with parameter $\gamma = \frac{\eps}{20 \log \Delta}$ for $h = \log \frac{\eps \Delta}{15}$ iterations and each of the parts in parallel.
Let $\Delta_{i-1}$ denote the maximum degree of each part before iteration $i$. 
Then, one finds that $\Delta_{i} \leq \frac{1}{2}(\Delta_{i-1} + \gamma\Delta_{i-1} + 4)$ and further via induction on the number of iterations
$$
\Delta_{i} \leq \left( \frac{1+\gamma}{2} \right)^{i}\Delta + 2\sum_{k=0}^{i-1} \left( \frac{1+\gamma}{2} \right)^{k.} \leq \left( \frac{1+\gamma}{2} \right)^{i}\Delta + 5.
$$

In the last step, we have used the geometric sum formula together with the estimate $\gamma \leq \frac{1}{10}$. Hence, after the final iterations we are left with $2^{h}$ subgraphs of maximum degree at most 
\begin{align*}\Delta_{h} =2^{-h}(1+\gamma)^h\Delta+5\leq 2^{-h} \cdot e^{\gamma\cdot h}\Delta+5=15/\eps \cdot e^{\gamma\cdot h} +5 =O\left( 1/\varepsilon \right).
\end{align*}
Now we use \Cref{thm:edgeColoring} to compute a $(3\Delta_h/2)$-edge coloring for each of these subgraphs in parallel, all with different sets of colors. Thus, we get an edge coloring of the whole graph with

\begin{align*}
2^{h} \cdot (3\Delta_h/2) &\leq \frac{3}{2} \cdot 2^{h}\left( \left( \frac{1+\gamma}{2} \right)^{h}\Delta + 5 \right) 
\leq \frac{3}{2}\Delta(1 + \gamma)^{\log \Delta} + \frac{\varepsilon}{2}\Delta \\
&= \frac{3}{2}\Delta\left( 1 + \frac{\varepsilon}{20\log\Delta} \right)^{\log\Delta} + \frac{\varepsilon}{2}\Delta 
\leq \frac{3}{2}\Delta \exp\left( \frac{\varepsilon}{20} \right) + \frac{\varepsilon}{2}\Delta \leq \left( \frac{3}{2} + \varepsilon \right)\Delta.
\end{align*}

Using $\eps>1/\Delta$, we can bound the runtime of each recursive split by
\begin{align*}
O\left( \frac{1}{\gamma}\cdot \log \frac{1}{\gamma} \cdot \log^{1.71} \log \frac{1}{\gamma} \cdot \log n \right)&= O\left(  \frac{\log\Delta}{\varepsilon} \cdot \log \frac{\log \Delta}{\varepsilon} \cdot \log^{1.71}\log \frac{\log\Delta}{\varepsilon} \cdot \log n \right) \\
&= O\left( \frac{\log \Delta}{\varepsilon} \cdot \log \log \Delta \cdot \log^{1.71}\log\log\Delta \cdot \log n \right) \\
& O\left(\eps^{-1}\cdot \log\Delta \cdot \log^2\log\Delta\cdot \log n\right).
\end{align*}
The coloring of the individual parts takes $O(\Delta_h^2 \cdot \log n)=O(\eps^{-2}\cdot\log n)$. 
Thus, total round complexity can be upper bounded by 
\begin{align*}
O(\eps^{-2}\cdot \log n+h\cdot \eps^{-1}\cdot\log\Delta\cdot \log^2\log \Delta\cdot \log n)=O(\eps^{-1}\cdot \log n \cdot \log^2 \Delta \cdot(\eps^{-1} +\log^2\log \Delta)). \qed
\end{align*} 

\section{Linear Lower Bound for HSO}
\label{sec:lowerbound}
\thmHSOLowerBoundGlobal*
\begin{proof}
    For any $\delta$, we provide a simple construction of an infinite hypergraph class $\mathcal G$ for which any $G \in \mathcal G$ has minimum degree and maximum rank $\delta = r$ and on which any deterministic HSO algorithm requires $\Omega(n)$ rounds.
    For simplicity, we will describe the hypergraphs $G \in \mathcal G$ via their bipartite representation $\cB_G$ (for which we will therefore assume that the number of nodes is $2n$).
    In the following, we describe the construction of the graphs $\cB_G$.
    
    Let $n$ be any positive integer such that $n - 1$ is a (positive) multiple of $\delta^2$.
    Let $H$ be the graph obtained from the complete bipartite graph $K_{\delta, \delta}$ by removing an edge $\{ u, v \}$.
    Consider the (not necessarily bipartite) directed graph $H' = (V',E')$ with node set $$
    V' :=\{ a, b\} \cup \{ w_{i,j} \mid 1 \leq i \leq \delta, 1 \leq j \leq (n - 1)/\delta^2 \},
    $$
    and edge set 
    $$
    E' := \{ ( a, w_{i,1} ) \mid 1 \leq i \leq \delta \} \cup \{ ( w_{i,j}, w_{i,j + 1} ) \mid 1 \leq i \leq \delta, 1 \leq j \leq (n - 1)/\delta^2 - 1 \} \cup \{ ( w_{i,(n - 1)/\delta^2}, b ) \mid 1 \leq i \leq \delta \}.
    $$
    Now, to obtain $\cB_G$ from $H'$, replace each $w_{i,j}$ by a copy of $H$ and replace each edge $(x,y)$ in $H'$ according to the following rules: if $x = a$, replace $(x,y)$ by an (undirected) edge between $a$ and node $u$ in the copy of $H$ corresponding to $y$; if $x \neq a$ and $y \neq b$, replace $(x,y)$ by an (undirected) edge between node $v$ in the copy of $H$ corresponding to $x$ and node $u$ in the copy of $H$ corresponding to $y$; if $y = b$, replace $(x,y)$ by an (undirected) edge between $b$ and node $v$ in the copy of $H$ corresponding to $x$.
    Our graph class $\mathcal G$ consists of precisely those graphs $G$, for which the bipartite representation $\cB_G$ can be obtained in the described manner (for some $n$ satisfying the stated property).

    From the construction, it follows directly that $\cB_G$ is a regular graph with minimum and maximum degree $\delta$.
    Moreover, for any perfect matching in $\cB_G$ the following must hold: if $w_{i,1}$ denotes the node in $H'$ corresponding to the copy of $H$ containing the matching partner of $a$ and $w_{i',(n - 1)/\delta^2}$ denotes the node in $H'$ corresponding to the copy of $H$ containing the matching partner of $b$, then $i = i'$.
    The reason for this is that otherwise the removal of $a$, $b$, and their matching partners from $\cB_G$ would make the remaining graph contain a maximal connected component with an odd number of nodes (namely, all remaining nodes in the copies of $H$ corresponding to the nodes $w_{i, j}$ for $1 \leq j \leq (n - 1)/\delta^2$), which is impossible as the considered matching is perfect.

    Now assume for a contradiction that there is an algorithm $\mathcal A$ solving the perfect matching problem in time $o(n)$ on the bipartite representations of all graphs in $\mathcal G$.
    Note that, by construction, the distance between $a$ and $b$ is in $\Omega(n)$.
    Hence, for sufficiently large $n$, the views of both $a$ and $b$ when executing Algorithm $\mathcal A$ on $\cB_G$ do not overlap.
    Consider an arbitrary assignment of IDs to the nodes of $\cB_G$, and consider the previously discussed indices $i, i'$ corresponding to the matching partners assigned to $a$ and $b$ by Algorithm $\mathcal A$.
    If, for the considered ID assignment, $i \neq i'$, we know that $\mathcal A$ is incorrect; hence assume that $i = i'$.
    Now, due to the symmetry of $\cB_G$, it is straightforward to rearrange the IDs in $a$'s view such that, executed on the new ID assignment, $\mathcal A$ will match $a$ with a different neighbor than before.
    As the views of $a$ and $b$ in $\mathcal A$ do not overlap, $\mathcal A$ will still match $b$ to the same neighbor, implying that $i \neq i'$, which yields a contradiction.
    Hence, there is no $o(n)$-round algorithm for perfect matching on the bipartite representations of all graphs in $\mathcal G$.

    As the regularity of $\cB_G$ implies that any matching saturating one side of the bipartition must be a perfect matching, we obtain that there is no $o(n)$-round algorithm for computing a matching saturating any side of the bipartition.
    By the equivalence between a correct solution for a matching saturating one side of the bipartition in a bipartite representation and a correct solution for HSO in the original hypergraph, we obtain the desired lower bound of $\Omega(n)$ for HSO.
\end{proof}

\bibliographystyle{ACM-Reference-Format}
\bibliography{references}

\end{document}